\newif\if@restonecol
\tikzstyle{vert}=[circle,draw=black,minimum size=8pt,inner sep=1pt]
\tikzstyle{vertex2}=[circle,draw=black,minimum size=15pt,inner sep=2pt]
\tikzstyle{edge}=[]
\tikzstyle{ypath}=[ultra thick]
\tikzstyle{dottedEdge}=[dotted,thick]
\tikzstyle{small-vertex}=[circle,draw=black,minimum size=6pt,inner sep=0pt,fill=white]
\tikzstyle{thinedges}=[draw=gray!30]
 \tikzstyle{boxes}=[draw,thick, rounded corners=3mm,text width=2.7cm,align=center,text opacity=1,fill opacity=1,fill=white]
\tikzstyle{unk}=[fill=gray!25!white]
\tikzstyle{nodest}=[circle,draw,minimum size=0.55cm,inner sep=.7pt]
\crefname{table}{Table}{Tables}
\crefname{figure}{Figure}{Figures}
\crefname{cor}{Corollary}{Corollaries}
\crefname{step}{Step}{Steps}
\crefname{rrulev}{Rule}{Rules}
\crefname{thm}{Theorem}{Theorems}
\crefname{obs}{Observation}{Observations}
\crefname{lem}{Lemma}{Lemmas}
\crefname{claim}{Claim}{Claims}
\crefname{section}{Section}{Sections}
\crefname{subsection}{Section}{Sections}
\crefname{figure}{Figure}{Figures}
\crefname{algorithm}{Algorithm}{Algorithms}
\crefname{proposition}{Proposition}{Propositions}
\crefname{theorem}{Theorem}{Theorem}
\crefname{lemma}{Lemma}{Lemmas}
\crefname{construction}{Construction}{Constructions}
\newtheorem{theorem}{Theorem}
\newtheorem{definition}{Definition}
\newtheorem{lemma}{Lemma}
\newtheorem{corollary}{Corollary}
\newtheorem{example}{Example}
\newtheorem{proposition}{Proposition}
\newtheorem{claim}{Claim}
\newtheorem{construction}{Construction}
\theoremstyle{remark}
\newtheorem{rrule}{Reduction Rule}{\bfseries}{\normalfont}
\crefname{rrule}{Rule}{Rules}
\newtheorem{observation}{Observation}{\bfseries}{\normalfont}
\crefname{observation}{Observation}{Observations}
\newcommand{\woneh}{\textsf{W[1]}-hard}
\newcommand{\xp}{\textsf{XP}}
\newcommand{\np}{\textsf{NP}}
\newcommand{\fpt}{\textsf{FPT}}
\DeclareMathOperator{\pen}{pen}
\DeclareMathOperator{\dist}{dist}
\newcommand{\decprob}[3]{%
  \begin{center}%
    \begin{minipage}{0.9\linewidth}%
      \textsc{#1}\\
      \textbf{Input:} #2\\
      \textbf{Question:} #3
    \end{minipage}%
  \end{center}%
}
\tikzstyle{agent} = [draw, circle, fill=black, minimum size=2ex, inner sep=1pt, text centered, align=center]
\newcommand{\gettikzxy}[3]{%
  \tikz@scan@one@point\pgfutil@firstofone#1\relax
  \edef#2{\the\pgf@x}%
  \edef#3{\the\pgf@y}%
}
\newcommand{\todohua}[1]{\todo[inline,backgroundcolor=yellow!10]{#1}}
\def\NAT@spacechar{~}
\newcommand{\appendixproof}[2]{%
    #2
}
\newcommand{\citeGIauthors}{Gusfield and Irving}
\newcommand{\sucw}{\mathsf{succ}}
\newcommand{\agent}{\text{agent}}
\newcommand{\proposalset}{\mathsf{S}}
\newcommand{\SM}{\textsc{Stable Marriage}\xspace}
\newcommand{\ISM}{\textsc{Incremental Stable Marriage}\xspace}
\newcommand{\SR}{\textsc{Stable Roommates}\xspace}
\newcommand{\ISR}{\textsc{Incremental Stable Roommates}\xspace}
\newcommand{\IS}{\textsc{Independent Set}\xspace}
\newcommand{\EIIS}{\textsc{Edge-Incremental Independent Set}\xspace}
\newcommand{\EDClPE}{\textsc{Edge-Decremental Clique with Pendant Edges}\xspace}
\newcommand{\WCFCS}{\textsc{Weighted Conflict-Free Closed Subset}\xspace}
\newcommand{\sdist}{\ensuremath{\mathsf{\delta}}}
\newcommand{\pred}{\ensuremath{\rhd}}
\newcommand{\predr}{\ensuremath{\unrhd}}
\title{Adapting Stable Matchings to Evolving Preferences\thanks{We thank anonymous reviewers of AAAI for their insightful comments.}}
\author[1]{Robert Bredereck}
\author[2]{Jiehua Chen\thanks{Main work done while JC was with University of Warsaw, where she was supported by the European Research Council (ERC) under the European Union's Horizon 2020 research and innovation programme under grant agreement numbers~677651. JC was also supported by the WWTF research project~(VRG18-012).}}
\author[3]{Du\v{s}an Knop\thanks{Main work done while DK was with TU~Berlin, supported by the DFG project MaMu (NI\,369/19).}}
\author[1]{Junjie Luo\thanks{Supported by CAS-DAAD Joint Fellowship Program for Doctoral Students of UCAS and by the DFG project AFFA (BR\,5207/1 and NI\,369/15)}}
\author[1]{Rolf Niedermeier}
\affil[1]{Technische Universit\"at Berlin, Chair of Algorithmics and Computational Complexity
  \texttt{\{robert.bredereck, junjie.luo, rolf.niedermeier\}@tu-berlin.de}}
\affil[2]{Algorithms and Complexity Group, TU Wien, Vienna, Austria
  \texttt{jiehua.chen2@gmail.com}}
\affil[3]{Department of Theoretical Computer Science, Faculty of Information Technology, Czech Technical University in Prague, Prague, Czech Republic
  \texttt{dusan.knop@fit.cvut.cz}}
\begin{document}

\maketitle
\begin{abstract}
Adaptivity to changing environments and constraints is key to success
in modern society. We address this by proposing
``incrementalized versions'' of \textsc{Stable Marriage} and
\textsc{Stable Roommates}. That is, we try to answer the following question:
for both problems, what is the computational cost of adapting an existing stable matching
after some of the preferences of the agents have changed. While doing so,
we also model the constraint that the new stable matching shall be not too different from the old one.
After formalizing these incremental versions, we provide a fairly comprehensive
picture of the computational complexity landscape of
\textsc{Incremental Stable Marriage} and
\textsc{Incremental Stable Roommates}.
To this end, we exploit the parameters ``degree of change'' both in
the input (difference between old and new preference profile)
and in the output (difference between old and new stable matching).
% Thus, w <- Why "thus"?
We obtain both hardness and tractability results,
in particular showing a fixed-parameter tractability result with respect to the parameter ``distance between old and new stable matching''.
%, among the latter
%being one of the few fixed-parameter tractability results (exploiting the
%parameter ``distance between old and new stable matching'') in
%the context of computationally hard stable matching problems.
%``Only those who change remain true to themselves'' (Wolf Biermann, singer-songwriter).
\end{abstract}

\section{Introduction}
%\begin{quote}
%``Only those who change remain true to themselves'' (Wolf Biermann, singer-songwriter)
%\end{quote}
%We are living in highly volatile times. In particular,
%preferences of people regularly change. These changes, however,
%typically occur in an evolutionary, not in a revolutionary fashion.
Imagine the following scenario. A manager responsible for a group
of $2n$~workers has to form $n$ two-worker teams based on the preferences
over potential work partners of each worker.
The manager, being interested in a robust solution,
computes a stable matching (indeed, this refers
to the \textsc{Stable Roommates} problem). However,
say every month the workers may update their preferences
about wanted work partners
(the updates may be based on gained experiences, new information,
newly developed personal skills etc.). The team manager then has to find
a new stable matching
respecting the individually evolved preferences.
To this end, however, the team manager may not want to
allow too radical changes in the composition of the two-worker teams
because this might e.g.\ overburden administration.
Thus, a moderate change is acceptable, but too radical changes in
the team compositions should be avoided whenever possible.
We address this scenario
by introducing and studying ``incremental versions''
of the two most prominent stable matching problems, namely
\textsc{Stable Marriage} and \textsc{Stable Roommates}.\footnote{The
naming ``Incremental'' is inspired by work in the context of clustering and
information retrieval~\cite{ChChFeMo2004}.}

In stable matching scenarios or, in other words, in matching under preferences~\cite{Manlove2013}, one is
given a set of agents, each of them having preferences
over (some of) the other agents, and the goal is to match pairs of agents such
that the outcome is stable. Informally speaking, stability means that
there are no two agents that would both prefer to be matched
with each other instead to their current partners in the matching (or being unmatched).
Two classic problems here are the bipartite
case with two equal-sized sets of agents
(referred to as \textsc{Stable Marriage}) and
the general case of an even number of agents
(referred to as \textsc{Stable Roommates}).
Motivated by the introductory considerations, we next define
``incremental versions'' of both problems; %refer to Section~\ref{sec:prelim} for
further formal definitions are presented later.

%We consider the \SM problem in an ``incremental'' setting, where agent preferences change over time, and at every step we need to find a stable matching for the updated agent preferences which is close to the stable matching for the last time.
%This motivates the following formal definitions.

\decprob{\ISM}
{Two disjoint sets~$U$ and~$W$ of~$n$ agents each, two preference profiles~$\mathcal{P}_1$ and~$\mathcal{P}_2$ for $U\uplus W$, a stable matching~$M_1$ for profile~$\mathcal{P}_1$, and a non-negative integer~$k$.}
{Does~$U \uplus W$ admit a stable matching~$M_2$ for profile~$\mathcal{P}_2$ such that~$\dist(M_1,M_2)=|M_1 \Delta M_2|\le k$?
}

Herein, $M_1 \Delta M_2$ denotes the symmetric difference
between sets $M_1$ and~$M_2$.
The incremental setting for \SR is defined analogously.

\decprob{\ISR}
{A set~$V$ of~$2n$ agents, two preference profiles~$\mathcal{P}_1$ and~$\mathcal{P}_2$ for $V$, a stable matching~$M_1$ for profile~$\mathcal{P}_1$, and a non-negative integer~$k$.}
{Does~$V$ admit a stable matching~$M_2$ for profile~$\mathcal{P}_2$ such that~$\dist(M_1,M_2)=|M_1 \Delta M_2| \le k$?
}

%How do these two central problems of our work reflect the sketched ``volatile scenario''?
In both definitions
there are two main regulating screws. First, we are given \emph{two} preference profiles,
the old~$\mathcal{P}_1$ and the new~$\mathcal{P}_2$, thereby
reflecting the change of preferences. Indeed, to
reflect only moderate changes (``evolution''), we will subsequently measure the difference
between the two profiles (later referred to as swap distance), yielding a natural problem-specific parameter (the smaller it is,
the less revolutionary the changes are). Second,  the number~$k$ can be interpreted as
a locality parameter---it exposes how close the new matching has to be to the old one. The smaller we choose~$k$,
the more conservative we are with respect to change in the outcome (namely the difference between old and
new stable matching).
Together, we thus have one parameter to regulate the degree
of change measured in the \emph{input} preferences and
one parameter to regulate the degree
of change measured in the \emph{output} stable matching solution.
Taking up these two parameters,
we provide a thorough
parameterized complexity analysis of these kinds of stable matching problems with evolving preferences.\footnote{In the large
field of dynamic graph algorithms, which significantly differs from our
setting,
the term ``dynamic'' refers to more fine-grained scenarios where typically
edges and/or vertices may be added or deleted in a stepwise fashion,
and one wants to efficiently update a solution after every such single change.
In particular, this has also been studied in the (popular and stable) matching
context~\cite{BHHKW15,GKP17,NV19}. The main difference to our work is that
we study changes between two preference profiles (not so much in the graph structure), and
the changes can be at a larger scale. Moreover, we perform parameterized complexity studies
which do not play a role in this previous work.}
To this end,
we also distinguish between preferences
with and without ties. Roughly speaking, we provide positive (fixed-parameter) tractability results
in the case without ties and several (parameterized) intractability results for the
case with ties. Before describing our results in more detail, however,
we discuss related work.
% \footnote{The literature on stable matching~\cite{Manlove2013}
% is so rich that it is impossible to provide a complete overview; this holds true even when restricting
% consideration to a parameterized complexity analysis of stable matching problems, which recently has seen a tremendous growth.}

\subsection{Related work}
There is previous work on matching-related problems in the context of dynamic
graph algorithms where vertices and/or edges arrive or depart
iteratively over time. The goal then is to maintain a solution of sufficient quality by performing necessary updates after every single change.
 The main difference to our work is that we study changes between two preference profiles (not so much in the graph structure), and the changes can be at a larger scale.
For instance, \citet{BHHKW15} studied the maintenance of
near-popular matchings (a scenario related to stable matchings)
based on a greedy improvement strategy, heavily employing maximum vertex degree
in their analysis. \citet{GKP17} and \citet{NV19} investigated dynamic rank-maximal and popular matchings,
again within the setting of dynamic graph algorithms and a focus on update times
after each single change in the graph.
\citet{KanLeoMag2016} studied \textsc{Stable Marriage} where at each time-step, two random adjacent agents in some preference list are swapped, and designed approximation algorithms to maintain a matching with logarithmic number of blocking pairs.

% \cite{GenSiaSimOSul17cocoa} first showed NP-hardness for (1,1)-supermatch.
\citet{GenSiaSimOSul17cocoa,GenSiaOSuSim2017ijcai,GenSiaSimSul17aaai,genc_complexity_2019tcs} studied
\emph{robustness} of a matching in \textsc{Stable Marriage}. %\footnote{They do not consider the case with ties nor do they study \textsc{Stable Roommates}.}\todo{Dušan: I would leave this footnote out}
Herein, the robustness of a given stable matching is measured by the number of modifications needed to find an alternative stable matching if some currently
matched agent pairs break up.
They define an $(x, y)$-supermatch as a stable matching that satisfies the following: % property:
If any $x$~agents break up, then it is possible to rematch these $x$~agents so that the new matching is again stable; further, this re-matching \emph{must not} break up more than $y$~other pairs.
\citet{genc_complexity_2019tcs} showed that %determining whether a given matching is a $(1,y)$-supermatch is polynomial-time solvable,
% while
deciding whether a $(1,1)$-supermatch exists %already becomes
is \np-complete.
The main distinguishing features compared to our model are that,
using our two regulating screws mentioned above, we can model both moderate changes
in the preferences (that is, the input) and moderate differences between
the old and the new stable
matching (that is, the output). Indeed, we perform a parameterized complexity analysis exploiting
these parameters while \citet{GenSiaSimOSul17cocoa,GenSiaOSuSim2017ijcai,GenSiaSimSul17aaai,genc_complexity_2019tcs} focused on classic complexity results
and used heuristics in experimental work.
Moreover, we model changing preferences while
they addressed breaking up matched pairs.
Finally, our main contributions are in the \textsc{Stable Roommates} case while they exclusively
focused on \textsc{Stable Marriage} without ties.

Our model of measuring the distance between the input and sought matching is related to the stable matching problem with \emph{forbidden} and \emph{forced} edges studied by~\citet{CsehManlove2016}.
Th goal is to find a stable matching which minimizes the number of forbidden edges plus the number of non-forced edges.
While we do not model forbidden edges, \citet{CsehManlove2016} do not assume changes in agents' preferences. This makes a difference in terms of parameterized complexity.

% Hua: Merged the following to a later paragraph.
% \citeauthor{MaiVaz2018}~\shortcite{MaiVaz2018,MaiVaz2018-birkhoff-arxiv} presented another ``robust'' version of \textsc{Stable Marriage},
% incorporating a probabilistic error model. They provide polynomial-time
% algorithms in this context.
% Their investigations thus are further away from what we are doing
% than the work of Genc et al.~\cite{GenSiaSimSul17aaai,GenSiaOSuSim2017ijcai}.

\citet{MarxSchlotter2010,MarxSchlotter2011} studied
local search aspects for the \np-hard
\textsc{Stable Marriage} with ties.
More specifically, they investigated the parameterized complexity
of a local search variant of \textsc{Stable Marriage} using parameters
such as the number of ties. Thus, the main overlap with our work
is in terms of searching for local improvements and
employing parameterized complexity analysis---the studied
computational problems are different from ours as they do not model changes in the input preferences.

There have been numerous other models and investigations
to enrich the basic stable matching model, including
the use of only partially ordered preferences~\cite{DruBou13},
``multilayer'' stable matchings with several preference profiles
to be obeyed `in parallel'~\cite{AzBiGaHaMaRa2016,MiOk2017,CheNieSkoECmstable2018},
or studying robust stability in a probabilistic model~\cite{MaiVaz2018,MaiVaz2018-birkhoff-arxiv} or
from a quantitative angle~\cite{ChenSkowronSorge2019ec-robust}.

Finally, let us briefly mention that the motivation for our incremental scenario for stable matching
is related to similar
scenarios in the context of clustering~\cite{ChChFeMo2004,LuoMNN18},
coloring~\cite{HarNie2013}, other dynamic versions of parameterized
problems~\cite{Abu-Khzam+2015,krithika2018dynamic}, and
reoptimization~\cite{abs-1809-10578,schieber2018theory}.

\subsection{Our contributions}
Besides introducing a fresh model
of stable matching computations,
we provide results mostly in terms of parameterized complexity analysis
for both \textsc{Incremental Stable Marriage} and \textsc{Incremental Stable
Roommates}, where we see the main technical contributions
mostly for the latter.
In particular, our main algorithmic result is that
\textsc{Incremental Stable Roommates} for input instances without ties is fixed-parameter tractable with respect to the parameter~$k$ (distance between the old and the new
matchings). To show this, we heavily use structural results
due to \citet{Irving1994} and \citet{gusfield1988structure}
and show how to exploit them for designing
a fixed-parameter algorithm.
%Notably, nontrivial fixed-parameter tractability results so far seem rare
%in the context of NP-hard stable matching problems
%(see Chen et al.~\cite{CHSYicalp-par-stable2018} for another example).

Most of our results are surveyed in Table~\ref{tab:results}. % (marriage scenario)
%and Table~\ref{tab:ISRresults} (roommates scenario).
Herein, $\mathcal{P}_1 \oplus \mathcal{P}_2$ denotes the swap distance
between two preference profiles (see Section~\ref{sec:prelim} for formal
definitions).
The table indicates that we obtained a fairly complete picture of the computational
complexity landscape, e.g., also complementing \woneh{ness} results with corresponding
\xp-algorithms.

\begin{table}[tb]
\centering
\begin{tabular}{l l l@{\,}l c l@{\,}l}
 \toprule
Ties & $\left| \mathcal{P}_1 \oplus \mathcal{P}_2 \right|$ & \multicolumn{2}{c}{\textsc{Incr.} \SM} & & \multicolumn{2}{c}{\textsc{Incr.}  \SR} \\
 \midrule
no  & any &  P & (Prop.~\ref{prop:ISMviaWeightedSM})                       &      & \fpt{} for $k\coloneqq \dist(M_1,M_2)$ & (Thm.\ \ref{thm:ISR FPT without ties})   \\
 &  &                             & &  &\woneh{} for $k'\coloneqq |M_1 \cap M_2|$ & (Prop.\ \ref{prop:InSR no tie NP-hard})   \\[1ex]
yes &  1  &  \woneh{} for $k$ & (Thm.\ \ref{thm:ISMTiesIncompleteHardness}) & &\woneh{} for~$k$, even for compl. pref. & (Thm.\  \ref{thm:W1h-ISR-ties-ONESWAP}) \\ %\ref{thm:W1h-ISR-ties-ONESWAP})  \\
&&& &  &  \np-hard even if $k'\ge 0$ & (Thm.\ \ref{thm:W1h-ISR-ties-ONESWAP})  \\[1ex]
yes &  2  &  \woneh{} for $k'$ & (Thm.\ \ref{thm:ISMTiesDualParameter})    &    & \woneh{} for $k'$ & (Thm.\ \ref{thm:ISMTiesDualParameter}) \\[1ex]
yes & any &  \xp{} for $k$ & (Prop.~\ref{prop:XPalgorithmForK})                      &           & \xp{} for~$k$ & (Prop.~\ref{prop:XPalgorithmForK}) \\
\bottomrule
\end{tabular}
  \caption{\label{tab:results}
  Overview of our results.
  Unless otherwise stated, results are for the general case where preferences could be incomplete.
  % \ISM results are for incomplete preferences and \ISR results are for complete preferences.
  }
\end{table}

% \todo[inline]{JJ: there are two ways to define $k'$: $k'\coloneqq |M_1 \cap M_2|$ or $k'\coloneqq (2n-k)/2$. For the first way, $k'$ can not be computed directly from the input since $M_2$ is not in the input (if $\mathcal{P}_2$ does not have a complete stable matching). In the proof for \cref{thm:ISMTiesDualParameter}, we actually change the definition of \ISM (see page 22).}

\todo[inline]{JJ: XP for $k'$?}

Finally, we mention in passing that to achieve our results, throughout the work we
also introduce and study some intermediate
problems (for instance, \textsc{Edge-Incremental Independent Set}) which
may be of independent interest and prove useful in other settings.

%Due to the lack of space, many technical details and proofs are deferred to an %clearly marked
%Appendix.

\begin{comment}
%All the hardness results could be modified for the case of adding and deleting agents.
We consider the case with ties in preferences in \cref{sec:ISM with ties} and the case without ties in \cref{sec:ISM without ties}.
In \cref{sec:ISM with ties}, we first give a reduction from \IS to show that with ties, for imperfect~$M_2$, the problem is \textsf{W}[1]-hard with respect to~$\left| \mathcal{P}_1 \Delta \mathcal{P}_2 \right|+k$.
We then adopt this reduction by adding some agents to show that the problem is still \textsf{W}[1]-hard with respect to~$\left| \mathcal{P}_1 \Delta \mathcal{P}_2 \right|+k$ if we ask~$M_2$ to be perfect.
After that, we continue to modify the reduction by reducing from a variant of \IS to show that when~$\left| \mathcal{P}_1 \oplus \mathcal{P}_2 \right|=1$. %, the problem is NP-hard and \textsf{W}[1]-hard with respect to~$k$.
\end{comment}

\subparagraph*{Organization of the Paper.}
% %In \cref{sec:ISM without ties}, we show that when there is no tie, the problem is FPT with respect to~$k$.
% %The open case is for the parameter~$\left| \mathcal{P}_1 \Delta \mathcal{P}_2 \right|$.
The rest of the paper is organized as follows.
We start with formal definitions and notations in \Cref{sec:prelim}.
Our main algorithmic result \cref{thm:ISR FPT without ties} and other algorithms are presented in \cref{sec:ISR tactable}.
Our hardness results are given in \Cref{sec:ISR instractable}.
We conclude in \Cref{sec:conclusion} with some open problems.

\section{Definitions and notations}
\label{sec:prelim}
In this section, we review fundamental concepts used in matchings under preferences.

\paragraph{Preference lists and profiles.}
Let $V=\{1,2,\ldots, 2n\}$ be a set of $2n$~agents.
Each agent~$i\in V$ has a subset~$V_i\subseteq V \setminus \{i\}$ of agents which they find \emph{acceptable} as partners and has a \emph{preference list~$\succeq_i$} on~$V_i$ ({i.e.}, a transitive and complete binary relation on $V_i$).
Here, $x \succeq_i y$ means that $i$ weakly prefers $x$ over $y$ ({i.e.}\ $x$ is at least as good as $y$).
We use $\succ_i$ to denote the asymmetric part ({i.e.}, $x\succeq_i y$ and $\neg (y\succeq_i x)$)
and $\sim_i$ to denote the symmetric part of $\succeq_i$ ({i.e.}, $x\succeq_i y$ and $y \succeq_i x$).
If $x \sim_i y$, then we also say that $x$ and $y$ are \emph{tied} in $i$'s preference list
and that agent~$i$ has \emph{ties} in their preference list.

For two agents~$x$ and $y$, we call $x$ \emph{most acceptable} to~$y$ if $x$ is a maximal element in the preference list of $y$. Note that an agent can have more than one most acceptable agent.
For two disjoint subsets of agents $X \subseteq V$ and $Y \subseteq V$, $X \cap Y = \emptyset$, we write $X\succeq_i Y$ if for each pair of agents~$x \in X$ and $y \in Y$ we have $x\succeq_i y$.

A preference profile~$\mathcal{P}$ for $V$ is a collection~$(\succeq_i)_{i\in V}$ of preference lists for each agent~$i\in V$.
A profile~$\mathcal{P}$ may have the following properties:
\begin{enumerate}
  \item
  It is \emph{complete} if for each agent~$i\in V$ it holds that $V_i \cup \{i\}= V$; otherwise it is \emph{incomplete}.
  \item
  The profile~$\mathcal{P}$ has \emph{ties} if there is an agent~$i \in V$ with ties in its preference list.
\end{enumerate}
To an instance $(V,\mathcal{P})$ we assign an \emph{acceptability graph}~$G$, which has $V$ as its vertex set and two agents are connected by an edge if each finds the other acceptable.
Without loss of generality, $G$~does not contain isolated vertices, meaning that each agent has at least one agent which it finds acceptable.
%The \emph{rank} of an agent~$i$ in the preference list~$\succeq_j$ (of some agent~$j$) is the number of agents~$x$ that $j$ strictly prefers over~$i$: $\rank(\succ_j, i)\coloneqq |\{x \mid x \succ_j i\}|\text{.}$

\paragraph{Swaps and differences between two matchings.}
Given two preference lists~$\succeq$ and $\succeq'$,
the \emph{swap distance between $\succeq$ and $\succeq'$} is defined as the number of pairs that are ``ordered'' differently; if $\succeq$ and $\succeq'$ are defined on different sets, then we assume that their swap distance is infinite.
\todo{Dušan: I do not think we ever use the distance for different acceptability -- can we simplify it to capture only the simple case?}
\todohua{Hua: I think if we only define different swap distances for the case with the same acceptability set then we need to make a disclaimer that we only consider restricted profiles. However, this disclaimer makes the paper appears less universal. So, I'd prefer not to do so.}
Formally,
\begin{align*}
  \sdist(\succeq, \succeq') \coloneqq
  \begin{cases}
    \infty, &  \succeq \text{ and } \succeq' \text{ are defined on } \text{\emph{different} sets,}\\
    \left|\left\{(x,y) \mid x\succ y \wedge y\succeq' x \right\}\right| + \\
    |\{(x,y) \mid x \sim y \wedge (x,y) \notin \sim'\}| , & \text{otherwise.}\\
  \end{cases}
\end{align*}
%\todoJJ{JJ: $(x,y) \in \succ$ is not defined. Replace $\left|\{(x,y) \in \succ \mid (y,x) \in \succeq' \}\right| + 
%    |\{(x,y) \in \sim \mid (x,y) \notin \sim'\}|$ by $\sum_i (\left|\{x \succ_i y   \mid y \succeq'_i x \}\right| + 
 %   |\{x \sim_i y  \mid x \not \sim'_i y\}|)$?}
%  \todohua{Hua: I don't understand the subindices in your suggestion. Since $\succ$ is a preference list, i.e., a binary relation. It should be clear what $(x,y)\in \succ$ means, but to make it clear, we can use the infix notation, instead.}
% \todohua{Hua: The ESA reviewer 1's suggestion is not good for us because we define the swap distance for two arbitrary preference lists, which do not necessary have anything to do with some preference profile.}
% \todoJJ{JJ: I see and I agree.}
  For example, if an agent has a preference list on $\{a, b, c\}$ where all three agents are tied on the first position, that is, $a \sim b \sim c$,
then moving to the list $c \succ \{a, b\}$ requires two swaps.
% \todoJJ{from ESA review: I agree that the distance is two, but why swapping a with c and b
%   with c?}
% \todohua{Hua: I don't understand the quetion of the ESA reviewer. Do you understand it? }
% \todoJJ{JJ: My understanding is that swap $a \succ c$ you can get $c \succ a$, but swap $a \sim c$ you can not get $c \succ a$. It is somewhat strange to swap $a$ and $c$ when they are tied in the original preference list.}
% \todohua{Hua: Ok, I see. Then, just delete this explanation.}
%In our (hardness) proofs we need to place agents into a preference list of agent $i$ in an arbitrary but fixed order.
%We use $[Z]$ to denote this arbitrary order for agents in $Z \subseteq V$.

Let $\mathcal{P}_1$ and $\mathcal{P}_2$ be two preference profiles for the same set~$V$ of agents.
The \emph{swap distance between $\mathcal{P}_1$ and $\mathcal{P}_2$}, denoted as $\left|\mathcal{P}_1 \oplus \mathcal{P}_2\right|$, is defined as the sum of the swap distances between the preference lists of each agent in the two preference profiles.
Formally, let $\succeq_i^j$ be the preference list of an agent $i \in V$ in the profile $\mathcal{P}_j$ for $j = 1,2$. We have that
$
  \left|\mathcal{P}_1 \oplus \mathcal{P}_2\right| = \sum_{i \in V} \sdist(\succeq_i^1, \succeq_i^2)
$.
%The \emph{agent distance between $\mathcal{P}_1$ and $\mathcal{P}_2$}, denoted as $\left|\mathcal{P}_1 \Delta \mathcal{P}_2\right|$, is defined as the number of agents whose preference list in $\mathcal{P}_1$ is different from that in~$\mathcal{P}_2$.
%Formally, $\left|\mathcal{P}_1 \Delta \mathcal{P}_2\right| = \left|\left\{ u \in V \mid \sdist(\succeq_i^1, \succeq_i^2) > 0 \right\}\right|$.

Given a set $V$ of agents, a \emph{matching} $M$ of $V$ is a set of pairwisely disjoint pairs of agents in~$V$.
Given two matchings~$M_1$ and $M_2$ for the same set~$V$ with $2n$~agents, we define the \emph{distance between $M_1$ and $M_2$} as the size of the symmetric difference of $M_1$ and $M_2$, formally:
%\begin{align*}
$  \dist(M_1, M_2) \coloneqq |M_1 \Delta M_2| = |M_1 \setminus M_2| +  |M_2 \setminus M_1|$.
%\end{align*}

%\todo[inline]{We define the distance of two matchings for the case with two disjoint sets analogously. --- -DK: do not think we need this.}

\paragraph{Blocking pairs and stable matchings.}
Let a preference profile~$\mathcal{P}$ for a set~$V$ of agents, the corresponding acceptability graph~$G$, and a matching~$M\subseteq E(G)$ be given.
For a pair~$\{x,y\}$ of agents, if $\{x,y\}\in M$, then we denote the corresponding partner~$y$ by~$M(x)$; otherwise we call this pair \emph{unmatched}. We write $M(x)=\bot$ if agent~$x$ has \emph{no partner}, {i.e.}, if the agent~$x$ is not involved in any pair in~$M$.
We use $\bot(M)$ to denote the set of unmatched agents in a matching~$M$, that is, $\bot(M) = \{ x \mid M(x) = \bot\}$.
If \emph{no} agent~$x$ has $M(x)=\bot$~(i.e., $\bot(M) = \emptyset$), then $M$ is called \emph{perfect}.

Given a matching $M$ of $\mathcal{P}$, an unmatched pair~$\{x,y\}\in E(G)\setminus M$ \emph{is blocking} $M$ if both~$x$ and $y$ prefer each other to being unmatched or to their assigned partners, {i.e.}\ it holds that $\big(M(x)=\bot \vee y\succ_x M(x) \big) \wedge \big( M(y)=\bot \vee x \succ_y M(y) \big)$.
We call a matching~$M$ \emph{stable}\footnote{We exclusively focus on the weak stability concept~\cite{Manlove2013}. We conjecture that several results will also hold
at least for strong stability. The proofs, however, may need non-trivial adjustments.}
if no unmatched pair is blocking $M$.
The \SR problem is defined as follows:
\decprob{Stable Roommates (SR)}
{A preference profile~$\mathcal{P}$ for a set~$V=\{1,2,\ldots, 2n\}$ of $2n$ agents.}
{Does $\mathcal{P}$ admit a stable matching?}

\paragraph{{\normalfont \textsc{Stable Roommates}} and {\normalfont \textsc{Stable Marriage}}.}
The bipartite variant of \textsc{Stable Roommates}, called \textsc{Stable Marriage}, has as input two $n$-element disjoint sets~$U \uplus W$ of agents such that each agent~$u \in U$ from~$U$ has a \emph{preference list on $W_u \subseteq W$}  and each agent~$w\in W$ from~$W$ has a \emph{preference list on $U_w \subseteq U$}.
By definition, the underlying acceptability graph of a \textsc{Stable Marriage} instance is bipartite.
Accordingly, this instance has complete preferences if this graph is a complete bipartite graph.
%
%\decprob{Stable Marriage (SM)}
%{A preference profile~$\mathcal{P}$ for two $n$-elements disjoint sets~$U \uplus W$ of agents.}
%{Does $\mathcal{P}$ admit a stable matching?}

% \noindent By the definition of stability, a stable matching for an instance with complete preferences must assign a partner to each agent.
% This leads to the following observation.

% \begin{observation}% [REFERENCE] No references needed.
%   \label[observation]{obs:SRT->perfect}
% Each stable matching in a \SR instance with complete preferences is perfect.
% \end{observation}

The following fundamental result from the literature guarantees that we can deal with incomplete preferences without ties similarly to the case with complete preferences with ties.

\begin{proposition}[{\cite[Theorem~1.4.2, Theorem~4.5.2]{GusfieldIrving1989}}]\label{prop:IncompleteNoTies}
   For incomplete preferences without ties, the whole agent set can be partitioned into two disjoint subsets~$R$ and $S$ such that every stable matching matches every agent from~$R$ and none of the agents from~$S$.
  For agent sets of size~$2n$, this partition can be computed in $O(n^2)$~time.
\end{proposition}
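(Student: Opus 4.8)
The plan is to prove the statement in two parts: first the structural claim that the set of matched agents is identical across all stable matchings, and then the computational claim. The structural part is the heart of the matter; once it is established, the partition into $R$ (always matched) and $S$ (always unmatched) is well defined and can simply be read off from any single stable matching.

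For the structural part, I would take two arbitrary stable matchings $M$ and $M'$ and analyse their symmetric difference $M \Delta M'$. Since every agent is incident to at most one edge of $M$ and at most one edge of $M'$, each agent has degree at most two in $M \Delta M'$, so this graph decomposes into vertex-disjoint simple paths and even alternating cycles. Every agent lying on a cycle has one incident $M$-edge and one incident $M'$-edge, hence is matched by both matchings and is irrelevant to the claim; agents not on any path or cycle lie in $M \cap M'$ and are likewise matched by both. Thus the only agents whose status could differ are endpoints of paths, and it suffices to show that no maximal path can start at an agent that is matched in one matching but unmatched in the other.

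Suppose, for contradiction, that $a$ is matched in $M$ (say to $v_1 = M(a)$) while $M'(a) = \bot$, and let $P = a\,v_1\,v_2\cdots$ be the maximal path of $M \Delta M'$ starting at $a$, whose edges alternate, beginning with $\{a,v_1\} \in M$. I would propagate stability along $P$, using the absence of ties to force strict comparisons. At $v_1$: since $a$ is unmatched in $M'$ it would accept any acceptable partner, so to keep $\{a,v_1\}$ from blocking $M'$ the agent $v_1$ must strictly prefer its $M'$-partner to $a$; in particular $v_1$ is matched in $M'$ and the path extends to $v_2 = M'(v_1)$. At $v_2$: because $v_1$ strictly prefers $v_2 = M'(v_1)$ to its $M$-partner $a$, the pair $\{v_1,v_2\} \in M'$ would block $M$ unless $v_2$ strictly prefers its $M$-partner to $v_1$; in particular $v_2$ is matched in $M$ and the path extends. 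Iterating, the failure of $P$ to continue at any stage immediately exhibits a blocking pair (blocking $M'$ when the path would stop at an odd-indexed vertex and $M$ when at an even-indexed one), so $P$ would have to extend indefinitely, contradicting that it is a finite simple path. Hence $a$ cannot exist, and the matched sets of $M$ and $M'$ coincide. Setting $R$ to be the agents matched by some (equivalently every) stable matching and $S := V \setminus R$ then yields the desired partition.

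For the running time I would compute one stable matching explicitly and read off $R$ and $S$: in the bipartite case the Gale--Shapley algorithm returns a stable matching in $O(n^2)$ time, and in the general case Irving's algorithm~\cite{Irving1994} returns one (or reports that none exists) in $O(n^2)$ time, after which extracting the matched and unmatched agents costs only linear additional time. I expect the main obstacle to be the inductive alternating-path argument: one must set up the alternation and the preference bookkeeping so that the correct strict comparison is forced at every vertex, verify that termination of the path always produces a blocking pair, and invoke finiteness to close the contradiction. The role of the no-ties hypothesis is precisely to turn each ``does not block'' condition into a \emph{strict} preference, which is what makes the induction go through; with ties the same scheme yields only weak inequalities and invariance can genuinely fail.
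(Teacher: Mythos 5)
The paper does not prove this proposition at all---it is imported verbatim from \citet{GusfieldIrving1989} (Theorems~1.4.2 and~4.5.2), so there is no in-paper proof to compare against. Your argument is a correct, self-contained proof of the cited result, and it is essentially the standard one from the literature: decompose $M \Delta M'$ into alternating paths and cycles, observe that only path endpoints could change matched status, and propagate stability along a path starting at an agent matched in $M$ but not in $M'$ to force every subsequent vertex to be matched in both, contradicting finiteness (equivalently, the component of $a$ would have exactly one odd-degree vertex). You also correctly isolate where the no-ties hypothesis enters: each ``does not block'' condition must yield a \emph{strict} preference for the induction to advance, and with ties the chain genuinely breaks.

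Two small remarks. First, your citation for Irving's roommates algorithm should be \citet{irving1985efficient}, not \citet{Irving1994} (the latter concerns stable marriage with indifference); note also that the $O(n^2)$ partition can be read off already from the Phase~1 table of that algorithm, without completing a full stable matching---this is exactly how the paper uses it later (\cref{prop:empty-list-Phase1-unmatched}). Second, in the roommates case a stable matching may not exist, in which case the claimed partition is vacuous; your procedure handles this implicitly since Irving's algorithm reports non-existence within the same time bound, but it is worth saying explicitly.
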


\section{Algorithms for Incremental Stable Marriage and Roommates} 
\label{sec:ISR tactable}

As a warm-up, we first show that our most restricted problem variant, \ISM without ties,
can be solved in polynomial time.
The main idea behind this result is based on the fact that there exists an compact and polynomial-time computable representation of all stable matchings, the so-called partially ordered set of polynomially many rotations~\cite[Chapter 2]{GusfieldIrving1989}.
Herein, a rotation involves a subset of some stable matching ``reducing'' which results in another stable matching~\cite[Chapter 2.5.1]{GusfieldIrving1989}.
Equipping these rotation with some appropriate weight, we can reduce our problem to finding a closed subset of rotations with maximum weight, which can be solved in polynomial-time
using an approach similar to the one discussed in \cite[Chapter 3.6.1]{GusfieldIrving1989}.
We refer to \cref{sec:ISM without ties} for proof details.

\begin{proposition}\label{prop:ISMviaWeightedSM}
\ISM without ties can be solved in~$O(n^3)$ time.
\end{proposition}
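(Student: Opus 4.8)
The plan is to turn the decision question into an optimization over the lattice of $\mathcal{P}_2$-stable matchings and solve it via a maximum-weight closed-subset computation on the rotation poset. First I would rephrase the objective. A $\mathcal{P}_2$-stable matching always exists (Gale--Shapley for the bipartite tie-free case), and by \Cref{prop:IncompleteNoTies} every $\mathcal{P}_2$-stable matching matches exactly the same set of agents, hence all candidates $M_2$ share one common cardinality $m^{\star}$. Since $|M_1|$ is fixed and $|M_2| = m^{\star}$ is constant over all candidates, we have $\dist(M_1,M_2) = |M_1| + m^{\star} - 2\,|M_1 \cap M_2|$. Thus minimizing the distance is the same as maximizing the overlap $|M_1 \cap M_2|$, and the input is a yes-instance iff the maximum overlap $o^{\star}$ satisfies $|M_1| + m^{\star} - 2o^{\star} \le k$.

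To compute $o^{\star}$ I would invoke the rotation machinery of \citeGIauthors\ applied to $\mathcal{P}_2$. Compute the man-optimal stable matching $M_0$ and the rotation poset $(\Pi,\prec)$ in $O(n^2)$ time, where $|\Pi| = O(n^2)$; the $\mathcal{P}_2$-stable matchings are in bijection with the closed subsets $S \subseteq \Pi$ (those closed under $\prec$), with $S=\emptyset$ giving $M_0$ and eliminating the rotations of $S$ in any order consistent with $\prec$ giving $M_S$. The structural fact I would lean on is that a fixed rotation $\rho$ always deletes the same set of pairs and inserts the same set of pairs whenever it is eliminated, regardless of the current matching.

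This additivity is exactly what lets me encode the objective through edge weights. Set $p(e)=1$ if $e \in M_1$ and $p(e)=0$ otherwise, and for each rotation $\rho$ let $\Delta_\rho$ be the number of inserted pairs lying in $M_1$ minus the number of deleted pairs lying in $M_1$. Additivity then yields $|M_1 \cap M_S| = |M_1 \cap M_0| + \sum_{\rho \in S}\Delta_\rho$ for every closed set $S$, so $o^{\star} = |M_1 \cap M_0| + \max_S \sum_{\rho \in S}\Delta_\rho$, a maximum-weight closed subset problem. Following the minimum-cut approach of \citeGIauthors\ (Chapter~3.6.1), I would solve it as a minimum $s$--$t$ cut in the network with a capacity-$\Delta_\rho$ arc $s \to \rho$ for each $\rho$ with $\Delta_\rho > 0$, a capacity-$(-\Delta_\rho)$ arc $\rho \to t$ for each $\rho$ with $\Delta_\rho < 0$, and an infinite-capacity arc $\rho \to \rho'$ for every covering relation $\rho' \prec \rho$ of the sparse rotation digraph (which forbids cutting a predecessor out of a closed set).

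For the running time, $M_0$, $\Pi$, its $O(n^2)$-edge digraph, and all values $\Delta_\rho$ are obtained in $O(n^2)$ time. The total source capacity is $\sum_{\Delta_\rho>0}\Delta_\rho \le \sum_\rho (\#\text{pairs inserted by }\rho\text{ that lie in }M_1)$, and since each pair of $M_1$ is inserted by at most one rotation this is at most $|M_1| \le n$; hence the maximum-flow value is at most $n$, and Ford--Fulkerson with integral augmentations runs in $O(n \cdot n^2) = O(n^3)$ time, dominating the whole procedure. The main obstacle I expect is not the reduction but the careful bookkeeping: justifying the weight additivity, using the cardinality-invariance so that maximizing overlap genuinely minimizes distance, and establishing the flow-value bound that is responsible for the claimed $O(n^3)$ bound rather than a naive larger one.
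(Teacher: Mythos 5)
Your proposal is correct and follows essentially the same route as the paper: reduce to maximizing $|M_1\cap M_2|$ via cardinality invariance, assign each rotation the weight ``$M_1$-pairs gained minus $M_1$-pairs lost,'' use additivity over closed subsets, and solve the resulting maximum-weight closed subset problem by the Gusfield--Irving min-cut construction with the flow value bounded by $|M_1|\le n$ (the paper likewise justifies this bound by the fact that each pair is introduced by at most one rotation). The only cosmetic difference is notation ($\Delta_\rho$ versus $w(\rho)$); the argument and the $O(n^3)$ analysis coincide.
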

% \todohua{Hua: As already mentioned in my old comments, citing the work of Chen+Ding+Hu+Zang (2012) is not necessary and unfortunate for the following two reasons. 1. our problem is much easier and can be solved more efficiently if we use the approach by GusfieldIrving1989. 2. Our problem is a subcase of the well-known problem which can be solved efficiently without mentioning this much later work.
%   I thoroughly adjusted the proof for the above result. Pleas check whether it is clear.}
\appendixproof{prop:ISMviaWeightedSM}{
\begin{proof}
  The result from \cref{prop:ISMviaWeightedSM} follows from \cref{app:lem:sd-inter} and \cref{app:lem:SM-max-w-rotations}.
\end{proof}
}

Second we show that our least restricted \np-hard problem variant, \ISR with ties allowed,
can at least be solved in polynomial time when the distance between the matchings~$M_1$
and~$M_2$ is a constant. In other words, \cref{prop:XPalgorithmForK} presents
an \xp-algorithm for the distance as a parameter.

\begin{proposition}\label{prop:XPalgorithmForK}
  \ISR with ties can be solved in $n^{O(k)}$~time, where $k = \left| M_1 \Delta M_2 \right|$.
\end{proposition}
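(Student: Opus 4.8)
The plan is to exploit the fact that any feasible $M_2$ can differ from the given $M_1$ on only a small set of agents, and then to brute-force over those differences. Call an agent $v\in V$ \emph{moved} if $M_1(v)\neq M_2(v)$ (treating $\bot$ as a value), and let $A$ be the set of moved agents. Every edge of $M_1\Delta M_2$ is incident to a moved agent, and each moved agent is incident to at most one edge of $M_1$ and at most one edge of $M_2$; hence the budget $\dist(M_1,M_2)=|M_1\Delta M_2|\le k$ forces $|A|\le 2k$.

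First I would establish a structural decomposition of the target matching. If $v\notin A$, then $v$ has identical status in $M_1$ and $M_2$, so every $M_1$-edge disjoint from $A$ survives in $M_2$. Conversely, if $v\in A$ is matched in $M_2$ to some $u$, then $u\in A$ as well: otherwise $u\notin A$ would give $M_1(u)=M_2(u)=v$, hence $M_1(v)=u=M_2(v)$, contradicting $v\in A$. Therefore any feasible $M_2$ splits as
\[
  M_2 = M_{\mathrm{fix}}(A)\cup M',\qquad
  M_{\mathrm{fix}}(A) := \{ e\in M_1 : e\cap A=\emptyset\},
\]
where $M'$ is a matching all of whose edges lie inside $A$ and are acceptable under $\mathcal{P}_2$. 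Crucially, the fixed part $M_{\mathrm{fix}}(A)$ is \emph{determined} by $A$ and need not be guessed.

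This immediately yields the algorithm. I would enumerate every candidate moved-set $A\subseteq V$ with $|A|\le 2k$; there are $\sum_{i\le 2k}\binom{2n}{i}=n^{O(k)}$ of them. For each $A$, I would enumerate every (partial) matching $M'$ on the at most $2k$ vertices of $A$ using only $\mathcal{P}_2$-acceptable pairs and allowing agents of $A$ to stay unmatched; the number of such matchings is bounded by $2^{O(k\log k)}$, a function of $k$ alone. Each pair $(A,M')$ yields the candidate $M_2:=M_{\mathrm{fix}}(A)\cup M'$, which is a matching by construction. For this candidate I would check in polynomial time that (i)~$\dist(M_1,M_2)\le k$ and (ii)~$M_2$ is weakly stable for $\mathcal{P}_2$, the latter by scanning all $O(n^2)$ acceptable pairs for a blocking pair. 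The algorithm answers ``yes'' as soon as some candidate passes both tests. Correctness follows from the decomposition: any feasible $M_2$ arises from the pair $(A,M')$ where $A$ is its set of moved agents and $M'$ consists of the $M_2$-edges with both endpoints in $A$, which is among those enumerated. The total running time is $n^{O(k)}\cdot 2^{O(k\log k)}\cdot\poly(n)=n^{O(k)}$, as claimed.

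I do not expect a genuine obstacle: the result is essentially an \xp{} brute force. The two points that require care are the structural claim that $M_2$ and $M_1$ must agree outside $A$ (so that $M_{\mathrm{fix}}(A)$ is forced rather than guessed, keeping the enumeration within $n^{O(k)}$), and the observation that ties cause no difficulty, since testing weak stability---the absence of an unmatched pair each of whose endpoints \emph{strictly} prefers the other to its assigned partner---remains a direct polynomial-time check regardless of ties.
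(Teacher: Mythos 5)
Your proof is correct and takes essentially the same approach as the paper: both are $n^{O(k)}$ brute-force enumerations of the local difference between $M_1$ and $M_2$ followed by a polynomial-time weak-stability check. The only cosmetic difference is that you enumerate the set of moved agents and then the matching inside it, whereas the paper directly guesses the $k_1$ edges of $M_1\setminus M_2$ and then the $k_2$ edges of $M_2\setminus M_1$; your version additionally spells out the (correct) structural observation that $M_2$ must agree with $M_1$ outside the moved set.
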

\begin{proof}
  Let $\hat{n} = |M_1|$.
  The algorithm first guesses positive integers $k_1,k_2$ with $k_1 + k_2 \le k$ which we further treat as follows.
  We denote by $k_1$ the number of matching edges leaving~$M_1$ (that is, $|M_1 \setminus M_2|$) and by $k_2$ the number of new edges (that is, $|M_2 \setminus M_1|$).
  Now, for each such pair $(k_1,k_2)$ we first guess $k_1$ edges to delete from~$M_1$.
  Notice that there are $O(\hat{n}^{k_1}) = O(n^k)$ possible guesses.
  Let $\hat{M}_1$ be the rest of the matching~$M_1$ (after we delete the just guessed edges).
  Then we guess $k_2$ pairs of vertices not matched in~$\hat{M}_1$.
  This completely defines the matching~$M_2$ which can be checked in polynomial-time for stability.
  Notice that there are $\binom{2n - 2\hat{n} + 2k_1}{2k_2} \cdot \frac{(2k_2)!}{2 \cdot k_2!} = O(2k^{2k} \cdot (2n)^{2k})$ possible guesses.
  Repeating the above procedure for every pair $(k_1,k_2)$ and applying the obvious bound $k \le n$, we get the overall running time $n^{O(k)}$ as claimed.
\end{proof}

We remark that \cref{prop:XPalgorithmForK} can also be shown by using an idea of \todo{Add reference.}
% \citeauthor{CsehManlove2016}~\shortcite[Theorem 4.14]{CsehManlove2016}.
Our approach, however, is much simpler.
The most pressing question following from \cref{prop:XPalgorithmForK} is to ask for which cases
this result can be improved to a fixed-parameter tractability result for the parameter~$k$, the distance between~$M_1$ and~$M_2$.
Unfortunately, we will show in the next section that this is not possible when ties are involved.
In the remainder of this section, however, we show that this is possible when ties are not allowed.
% \subsection{Difference Between Profiles}
% \subsection{Distance Between Matchings}\label{sec:ISR FPT k}
%
% In this section, we show that \ISR is fixed-parameter tractable with respect to~$k$, the distance between~$M_1$ and~$M_2$.
%
% We first develop an algorithm for the setting where~$M_2$ is required to be perfect by reducing it to the auxiliary problem \WCFCS (which may be of independent interest).
%
% This in turn gives us the following for \ISM.
Formally, we show the following.
\begin{theorem}
\label{thm:ISR FPT without ties}
\ISR without ties can be solved in~$O(2^k n^4)$ time.
\end{theorem}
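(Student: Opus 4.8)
The plan is to combine a bounded search tree, which decides \emph{which edges of $M_1$ disappear}, with a polynomial-time ``completion'' subroutine built on the structural theory of Irving and Gusfield. Throughout I fix the new profile $\mathcal{P}_2$ and rely on two facts for stable roommates without ties. First, by \cref{prop:IncompleteNoTies} the agent set splits into $R_2$ (matched in \emph{every} stable matching of $\mathcal{P}_2$) and $S_2$ (matched in \emph{none}), computable in polynomial time. Second, using the rotation poset of \cite{gusfield1988structure} together with Irving's algorithm~\cite{Irving1994,GusfieldIrving1989}, one can decide in polynomial time whether a stable matching of $\mathcal{P}_2$ exists that \emph{contains a prescribed set of forced edges}, and compute one if so; this is the forced-edge side of the framework of \citet{CsehManlove2016}.

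The preprocessing step computes $R_2$ and $S_2$. Every agent in $S_2$ that is matched in $M_1$ must become unmatched, so its $M_1$-edge lies in $M_1\setminus M_2$; symmetrically, every agent in $R_2$ that is unmatched in $M_1$ must become matched, contributing an endpoint of an edge in $M_2\setminus M_1$. These forced changes are charged to the budget, and the instance is a \emph{no}-instance if they already exceed $k$. This also guarantees that every edge I later try to retain joins two agents of $R_2$, and that the endpoints of all deleted edges, together with the $O(k)$ agents flagged here, form a \emph{free pool} $P$ of size $O(k)$; every agent outside $P$ keeps its $M_1$-partner or stays unmatched, and since no agent outside $P$ is available, the agents of $P$ can only be rematched among themselves.

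Next I would locate the deletions by a binary search tree. After guessing $k_1=|M_1\setminus M_2|\le k$, the driving observation is: if the current matching has a blocking pair $\{x,y\}$ under $\mathcal{P}_2$ with both endpoints still matched to their $M_1$-partners, then, because $M_2$ is stable, $M_2(x)\succeq^2_x y\succ^2_x M_1(x)$ or $M_2(y)\succeq^2_y x\succ^2_y M_1(y)$, so at least one of $x,y$ receives a strictly $\mathcal{P}_2$-better partner and must give up its $M_1$-edge. Branching on which of $\{x,M_1(x)\}$, $\{y,M_1(y)\}$ to delete has branching factor~$2$, and iterating to depth $k_1$ and summing $2^{k_1}$ over $k_1\le k$ yields $O(2^k)$ leaves, each specifying a deletion set $D$. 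At a leaf I declare the retained edges $M_1\setminus D$ forced, invoke the forced-edge subroutine to find a stable matching $M_2'$ of $\mathcal{P}_2$ containing them, and check $\dist(M_1,M_2')\le k$ directly. Because all of $P\subseteq R_2$ is matched within $P$, every edge of $M_2'$ inside $P$ is new, so $M_2'$ has at most $|P|/2$ additions and at most $|D|$ deletions; at the true leaf ($D=D^\star$) this gives $\dist(M_1,M_2')\le k_1^\star+k_2^\star=k$, so \emph{any} extension found there already certifies a \emph{yes}.

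The hard part will be making the branching-plus-completion interaction rigorous. Three points need care: (i) handling blocking pairs one of whose endpoints is already \emph{freed}, which should be deferred to the completion step rather than triggering a deletion, and arguing that the binary branching above still reaches the correct $D^\star$ on some path (completeness); (ii) proving that fixing $M_1\setminus D$ and searching only among stable matchings of $\mathcal{P}_2$ is correct, i.e.\ the retained edges never create unaccounted blocking pairs, which is exactly where the rotation poset and the forced-edge machinery of \cite{gusfield1988structure,GusfieldIrving1989,CsehManlove2016} are used; and (iii) the bookkeeping that keeps $|P|=O(k)$ so that an extension, once found, is automatically $k$-close. With $O(2^k)$ leaves, $O(n^2)$ per node to find a blocking pair, and an $O(n^3)$ completion/stability check at each leaf, the total running time is $O(2^k n^4)$.
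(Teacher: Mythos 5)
Your branching rule is sound but not complete, and this is a genuine gap rather than a detail to be ``made rigorous.'' The rule only fires on a blocking pair $\{x,y\}$ of the current matching $M_1\setminus D$ in which \emph{both} endpoints still hold their $M_1$-partners. But many edges of $M_1\setminus M_2$ are forced out by constraints that never manifest as such a pair: an agent $a\in\bot(M_1)$ that must be matched in every stable matching of $\mathcal{P}_2$ (your ``flagged'' agents) can only be matched by stealing some $b$ from $M_1(b)$, and the resulting cascade ($M_1(b)$ must then be rematched, stealing a further agent, and so on) consists of blocking pairs each of which has an \emph{unmatched or already freed} endpoint, so your rule never fires on them. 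The same happens when the improvement chain starting from an edge $\{b,b'\}\in (M_1\setminus M_2)\setminus D$ terminates at an agent unmatched in $M_1\setminus D$. Deferring these to the completion step does not work: at the node where branching stops you force \emph{all} of $M_1\setminus D$, which then contains edges of $M_1\setminus M_2$; the forced-edge subroutine either reports failure or returns a stable matching whose distance to $M_1$ is not controlled by $k$, so the algorithm can answer ``no'' on a yes-instance. Identifying the missing deletions would require guessing the new partners in the cascade, which is an $n$-way, not a $2$-way, choice. (A smaller but real secondary issue: your free pool $P$ omits exactly the agents dragged in by these cascades, so the bookkeeping in your point~(iii) also fails.)

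The paper's proof sidesteps this by never reasoning about blocking pairs of $M_1$ at all. It computes the rotation poset of $\mathcal{P}_2$, assigns each rotation $\rho$ the weights $w^+(\rho),w^-(\rho)$ counting the $M_1$-proposal pairs gained and lost by eliminating $\rho$, and proves (\cref{lem:equivalence}) that $\dist(M_1,M_C)$ for the matching $M_C$ of a complete closed set $C$ is determined by $\sum_{\rho\in C}w^-(\rho)$ alone. The binary choice is then ``which rotation of each dual pair to include,'' encoded as \WCFCS{} with a conflict graph that is a perfect matching and budget $b\le k/2$; the search tree branches only on dual pairs where both options have strictly positive weight, so its depth is bounded by the budget, and the poset-closure constraints that your local rule cannot see are enforced directly by taking $T^{\uparrow}_\rho$ with each chosen rotation. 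If you want to rescue a blocking-pair-driven argument, you would need to prove that every edge of $M_1\setminus M_2$ is reachable by a chain of binary decisions each of which strictly decreases a budget of size $O(k)$ --- which is essentially what the rotation-weight machinery provides and what your current rule does not.
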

%For what follows we require~$M_2$ to be perfect.
%Finally, we extend the algorithm to the general case where~$M_2$ is not required to be perfect.
%
The algorithm behind \cref{thm:ISR FPT without ties} is partially inspired by the polynomial-time algorithm for solving \textsc{Maximum Weight Stable Marriage}~\cite[Chapter 2.5.1]{GusfieldIrving1989}.(also see \cref{sec:ISM without ties} for more details). % (with an additional assumption that $M_1$ is a perfect matching).
The high-level structure of our algorithm is as follows: %\todo{ROLF: it would be nice to introduce rotations informally first}\todo{Dušan: I am afraid there is no compact and nice way to do so.}\todo{Robert: I agree.}
%\todoJJ{JJ: mention that this FPT algorithm is inspired by our polynomial time algorithm for ISM without ties (Proposition 2), which is based on Irving's and Tang's works?}
\begin{enumerate}
  \item Using the algorithm of Irving~\cite{irving1985efficient} and based on the structural insights of Gusfield~\cite{gusfield1988structure} that the set of all stable matchings can be compactly and efficiently represented via the so-called partially ordered set (poset) of polynomially many rotations, we compute the poset of rotations for the new preference profile $P_2$  (see \cref{sec:rotationpreliminaries}).
  
  Roughly speaking, a rotation (relative to a preference profile) involves a specific cyclic sequence~$\sigma$ of agents where the first acceptable partner in the preference list of each agent $x$ in $\sigma$ is exactly the second acceptable partner in the preference list of $x$'s predecessor in~$\sigma$. We remark that for a rotation, removing the first acceptable partner of each agent~$\sigma$ will not alter the existence of a stable matching.

 \item To capture the (different) costs of eliminating one rotation from each dual pair
 (see \cref{def:dual-single-pairs})
  with respect to the resulting
       distance between the matching~$M_1$ and $M_2$, we compute two weights for each rotation (see \cref{subsec:proposalsets+weights}).
 \item Finally, to consistently choose one rotation from each dual pair, and additionally respecting the weight constraints,
       we reduce our problem to an auxiliary problem to be defined below and provide a fixed-parameter algorithm for this auxiliary problem (see \cref{subsec:reduction-to-WCFCS}).
\end{enumerate}
Each high-level step of the algorithm will be described in one of the subsequent subsections.

To define our auxiliary problem (which may be of independent interest), for each partially order set~$(R, \predr)$,
we say that a given subset~$C \subseteq R$ is \emph{closed with respect to} the relation $\predr$ if
for each pair~$\{\rho,\sigma\} \subseteq R$ of elements it holds that $\rho \in C$ and $\sigma \predr \rho$ imply $\sigma \in C$.
\decprob{\WCFCS}
{A~partially ordered set~$(R, \predr)$, an undirected graph $G = (R, E)$, a weight function $w\colon R \to \mathds{N}$, a positive integer~$\ell$, and a budget~$b \in \mathds{N}$.}
{Is there a closed subset~$C \subseteq R$ of size~$\ell$ which is independent in~$G$ such that \mbox{$\sum_{c \in C} w(c) \le b$?}}
Here, the conflicts are modeled by the edges so that the graph is also referred to as a \emph{conflict graph} and seeking for an independent set solution means seeking for a \emph{conflict-free set}.
\todohua{Hua: GusfieldIrving1989 used the name ``maximal independent set'' to refer to the solution. So I would tend to use the notion already introduced there if we wanted to introduce this additional problem.}
\todo{Dušan: I did not get to what notion does ``maximal independent set'' apply.}
\todohua{Hua: ``Independent set'' comes from the problem requirement, maximality comes from selecting exactly one rotation from each dual pair.}
\todohua{Hua: I don't get the meaning of ``conflict-free'', which does not even appear in the problem definitions (except the problem name).}
% The reduction relies heavily on the structural properties revolving around the so-called rotation poset of a preference profile~\cite{gusfield1988structure}, which we will define shortly.
% Furthermore, t
The structure of the rotation poset implies that in our case the graph $G$ in fact consists of $\ell$ edges forming a matching of size $\ell$.
In particular, we will show that we can focus on the special case in which the conflict\todohua{Hua: Why conflict?  I thought that we are searching for an independent set...} graph $G$ consists of $\ell$~disjoint cliques.
For this case we give an algorithm for \WCFCS for parameter $b+\Delta(G)$, where $\Delta(G)$ is the maximum vertex degree in the graph $G$.
Moreover, it follows from our reduction procedure that $b \le k$ (see \cref{subsec:reduction-to-WCFCS}).

\subsection{Preprocessing and identification of the rotations}
\label{sec:rotationpreliminaries}
% \todo{ROLF: better name for this section? ---match to the algorithm outline}
We first recall Irving's polynomial-time algorithm~\cite{irving1985efficient} for determining whether an instance of \SR without ties has a stable matching and fundamental structural properties behind all stable matchings~\cite{gusfield1988structure}.
%Moreover, Gusfield~\cite{gusfield1988structure} contains further decisive information.
Irving's algorithm is divided into two phases: Phase~1 and Phase~2.
Phase~1 involves a sequence of proposals from each agent~$i$ to the first agent~$j$ on $i$'s list, and each such proposal resulting in the deletion of all successors of~$i$ from~$j$'s list.
Phase~1 does not alter any stable matching since in this phase~$j$ is removed from~$i$'s preference list (and~$i$ is removed from~$j$'s preference list) only when~$\{i,j\}$ does not form a pair in any stable matching.
The set of lists at the end of the first phase is called \emph{Phase~1 table} (please refer to \cref{example:tablesRotationsElimination} for an illustration).

There are three possibilities for the Phase~1 table; note that the Phase~1 table is unique.
\begin{description}
%  \item If some list in the table becomes empty, then it will never be matched by any stable matching (if it exists).
  \item[Every list in the table is empty.] Then, there is no stable matching; note that Irving's original algorithm assumes that the input preferences are complete, implying that whenever a list becomes empty, then there is not stable matching; for incomplete preferences we can only infer the non-existence of any stable matching if every list becomes empty.
  % agent whose preference list becomes empty after Phase~1 will never be matched by any stable matching.%At least one list is empty. Then there is no (perfect) stable matching.
\item[Every non-empty list contains exactly one agent.] Then, we find a unique stable matching.
\item[At least one list contains more than one agent.] Then, we proceed to Phase~2. 
\end{description}

For agents with empty preference lists in the Phase~1 table, we use the following.
\begin{proposition}[{\cite[Theorem~4.5.2]{GusfieldIrving1989}}]\label{prop:empty-list-Phase1-unmatched}
  For each agent, if its preference list becomes empty after Phase~1,
  then it will not be matched by any stable matching; otherwise it must be matched by all stable matchings.
\end{proposition}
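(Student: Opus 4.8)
The statement splits into two implications, of which the first is immediate and the second carries all the difficulty. The plan is to derive the ``empty list $\Rightarrow$ unmatched everywhere'' direction directly from the deletion invariant of Phase~1 recalled above, and to obtain the ``non-empty list $\Rightarrow$ matched everywhere'' direction by combining the partition theorem \cref{prop:IncompleteNoTies} with the fact that Irving's algorithm outputs a stable matching saturating exactly the non-empty lists. Throughout I assume a stable matching exists, since otherwise both implications are vacuous.

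For the first direction, recall the key property noted above: a pair $\{i,j\}$ is deleted during Phase~1 only when it is contained in no stable matching. Hence, if the list of an agent~$x$ is empty at the end of Phase~1, then every pair $\{x,y\}$ present in the original acceptability graph has been deleted and therefore lies in no stable matching. Since $x$ can only ever be matched to an originally acceptable partner, $x$ is unmatched in every stable matching. In the language of \cref{prop:IncompleteNoTies}, this shows that the set of empty-list agents is contained in the side~$S$ of the partition (the agents left unmatched by all stable matchings).

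For the second, harder direction I would argue conversely that every non-empty-list agent lies in~$R$. By \cref{prop:IncompleteNoTies} the matched set is identical across all stable matchings, so it suffices to exhibit a \emph{single} stable matching that matches every agent whose Phase~1 list is non-empty. The natural candidate is the matching produced at the end of Irving's algorithm~\cite{irving1985efficient}: Phase~2 repeatedly eliminates rotations, each elimination only removing the current top entries of the involved lists, and---in the solvable case---terminates with every non-empty list reduced to a single entry, yielding a stable matching in which each such agent is matched. Thus every non-empty-list agent is matched in at least one, hence by \cref{prop:IncompleteNoTies} in every, stable matching. Together with the first direction this pins down the partition exactly: $S$ equals the empty-list agents and $R$ equals the non-empty-list agents.

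The main obstacle is the claim underlying the second direction, namely that rotation elimination in Phase~2 never empties a list that was non-empty after Phase~1 (equivalently, that Irving's output saturates all non-empty-list agents). Establishing this rigorously requires the structural theory of rotations of \citet{gusfield1988structure}: one must show that the semi-engaged structure of the Phase~1 table---where, for each agent~$x$ with a non-empty list, the first agent~$y$ on $x$'s list holds $x$'s proposal while $x$ is simultaneously the least preferred agent on $y$'s list---is preserved, in the sense of keeping lists non-empty, under each rotation elimination. A direct blocking-pair argument alone does not suffice: knowing that an unmatched agent's top choice is matched to someone it strictly prefers does not by itself produce a contradiction, so one genuinely needs the global invariant supplied by the rotation poset (or, equivalently, by \cref{prop:IncompleteNoTies}).
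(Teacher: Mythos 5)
The paper does not prove this statement at all: it is imported verbatim as Theorem~4.5.2 of \citet{GusfieldIrving1989}, so there is no in-paper argument to compare yours against line by line. Your reconstruction is sound as a reduction to other imported facts, and it is the natural route. The first direction is indeed immediate from the Phase~1 deletion invariant. For the second direction, your two ingredients---that all stable matchings match the same agent set (\cref{prop:IncompleteNoTies}) and that, in the solvable case, Irving's algorithm terminates with every non-empty Phase~1 list reduced to a single entry and outputs a stable matching saturating exactly those agents---do yield the claim. The ``obstacle'' you flag at the end is real, but it is precisely the contrapositive of \cref{prop:emptylist-Phase2->nostablematching} (\cite[Corollary~3.2]{Irving1994}) combined with the termination analysis of Phase~2; both are results the paper also cites rather than proves, so deferring them puts you on the same footing as the paper itself. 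Two cautions are worth recording. First, \cref{prop:IncompleteNoTies} is attributed to the very same Theorem~4.5.2 of \citet{GusfieldIrving1989}; in the source, the invariance of the matched set and its Phase~1 characterization are established together, so your derivation is legitimate only because the paper chooses to state the invariance as a separate black box---it is not an independent proof of the source theorem. Second, your phrase ``terminates with every non-empty list reduced to a single entry, yielding a stable matching'' silently bundles several facts (Phase~2 terminates; a rotation-free table with no empty lists has singleton lists that form a matching; that matching is stable), which constitute the substance of Irving's correctness proof~\cite{irving1985efficient,gusfield1988structure} and would need explicit citation if this argument were written out in full.
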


Since in the first two cases the instance of \ISR is trivial to solve, we assume the third case in the following and we can ignore every agent whose list becomes empty after Phase~1.
Notice that in the Phase~1 table we always have that if an agent $i$ is ranked first in $j$'s preference list, then $j$ is ranked the last in $i$'s preference list.
This invariant we keep throughout the whole Phase~2.
We need some definitions for Phase~2.
A \emph{preference table} is the Phase~1 table or any Phase~2 table (i.e., it is a collection of preference lists).

\begin{definition}[Rotations]
A rotation \emph{exposed} in a preference table~$\mathcal{T}$ is an ordered sequence \[(e_0,h_0),(e_1,h_1),\dots,(e_{r-1},h_{r-1})\] of pairs of agents such that~$h_i$ and $h_{i+1 \bmod r}$ are the first and the second agent on~$e_i$'s list in $\mathcal{T}$, for all~${0 \le i \le r-1}$.
\end{definition}
Note that if $(e,h)$ is a pair in a rotation, then $e$ is ranked last by $h$ with respect to $\mathcal{T}$.

\begin{definition}[Elimination of an exposed rotation]
The \emph{elimination} of an exposed rotation~$(e_0,h_0)$, $(e_1,h_1),\dots,(e_{r-1},h_{r-1})$ from table~$\mathcal{T}$ is the following operation.
For every~$0 \le i \le r-1$, remove every entry below~$e_i$ in~$h_{i+1}$'s list in~$\mathcal{T}$, i.e., move the bottom of~$h_{i+1}$'s list up to~$e_i$ (from~$e_{i+1}$).
Then for each agent~$p$ who was just removed from~$h_{i+1}$'s preference list, remove~$h_{i+1}$ from~$p$'s~list.
\end{definition}

In Phase~2, exposed rotations (see \cref{example:tablesRotationsElimination}) are eliminated from the preference table one by one until some list becomes empty, which means there is no stable matching, or no rotation is exposed in the table, which means the list of every agent contains exactly one agent and we get a stable matching.

\begin{proposition}[{\cite[Corollary~3.2]{Irving1994}}]\label{prop:emptylist-Phase2->nostablematching}
  If some agent obtains an empty preference list in Phase~2, then there is no stable matching.
\end{proposition}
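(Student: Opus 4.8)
The plan is to prove the contrapositive: if the instance has a stable matching, then no agent's list becomes empty during Phase~2. The engine will be an invariant maintained throughout Phase~2, namely that \emph{every} preference table produced \emph{admits} at least one stable matching~$M^{*}$ of the instance, in the sense that $\{x,M^{*}(x)\}$ survives in the table for every agent~$x$. Note that, in contrast to Phase~1 (which, as recalled above, only ever deletes non-stable pairs), a single rotation elimination \emph{can} delete stable pairs; for instance, eliminating an exposed length-two rotation among four agents whose first choices already pair them up removes exactly the pairs of the all-first-choice stable matching. Hence the invariant must track the survival of \emph{some} stable matching rather than of \emph{all} stable pairs. Granting the invariant, the proposition is immediate: after Phase~1 we have discarded precisely the agents that remain unmatched in every stable matching (\cref{prop:empty-list-Phase1-unmatched}), so every agent~$x$ entering Phase~2 is matched in every stable matching; if the current table admits~$M^{*}$, then $\{x,M^{*}(x)\}$ lies in the table and $x$'s list is nonempty.

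The base case holds because Phase~1 deletes only non-stable pairs, so the Phase~1 table admits every stable matching. The crux is the inductive step: if a stable table~$T$ admits a stable matching and an exposed rotation $\rho=(e_0,h_0),\dots,(e_{r-1},h_{r-1})$ is eliminated to form~$T'$, then $T'$ still admits one. Fix a stable matching~$M$ admitted by~$T$, recall that the eliminated pairs are exactly the pairs $\{h_{i+1},q\}$ for which $h_{i+1}$ prefers $e_i$ to~$q$, and split into two cases according to whether $M$ \emph{contains}~$\rho$, i.e.\ whether $M(e_i)=h_i$ for all~$i$. If $M$ does not contain~$\rho$, I would show that no pair of~$M$ is deleted, so that $M$ itself is admitted by~$T'$. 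Indeed, suppose some deleted pair $\{h_{i+1},q\}\in M$. Since $h_{i+1}$ prefers $e_i$ to~$q$ and the top two choices of~$e_i$ are $h_i,h_{i+1}$, stability of~$M$ forbids the blocking pair $\{e_i,h_{i+1}\}$, which forces $M(e_i)=h_i$; walking backwards around the cycle and using at each step that $e_j$ is ranked last by~$h_j$ then propagates $M(e_{i-1})=h_{i-1},\,M(e_{i-2})=h_{i-2},\dots$, giving $M(e_j)=h_j$ for all~$j$ and so $M$ contains~$\rho$ --- a contradiction.

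The harder case --- and the main obstacle --- is when $M$ contains~$\rho$, for then the pairs $\{e_i,h_i\}$ of~$M$ are deleted and $M$ does not survive. Here I would instead construct the \emph{rotated} matching~$M'$ that agrees with~$M$ off the rotation and sets $M'(e_i)=h_{i+1}$ (equivalently $M'(h_j)=e_{j-1}$) on it, and show that $M'$ is a stable matching admitted by~$T'$. That $M'$ is well defined follows from the cyclic structure of the rotation, and its pairs survive in~$T'$ because each $e_i$'s list now begins with~$h_{i+1}$. The substantive point is the stability of~$M'$: moving every $e_i$ from its first to its second choice while improving every~$h_j$ must create no blocking pair. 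The only partner that a now-less-happy agent~$e_i$ can envy is its first choice~$h_i$, which in~$M'$ is held by~$e_{i-1}$; since $e_i$ is ranked \emph{last} by~$h_i$, agent~$h_i$ prefers~$e_{i-1}$ to~$e_i$ and will not deviate, so no blocking pair forms. This is the classical fact that eliminating an exposed rotation carries a stable matching to a stable matching. With this lemma the invariant propagates through all of Phase~2, and the proposition follows.
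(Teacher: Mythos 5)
The paper does not actually prove this statement---it imports it verbatim from \citet{Irving1994} (and ultimately from \citet{gusfield1988structure})---so your attempt can only be measured against the classical argument, which is indeed the one you are reconstructing: pass to the contrapositive, maintain the invariant that every table arising in Phase~2 still contains some stable matching, and prove the inductive step by distinguishing whether the surviving matching $M$ does or does not contain the exposed rotation $\rho$. Your base case and your first case (where $M$ does not contain $\rho$) are correct: there the invariant pins $M(e_i)$ to $e_i$'s \emph{reduced} list, so the ``top two entries are $h_i,h_{i+1}$'' reasoning and the backward propagation around the cycle are legitimate.

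The gap is in the second case, in the stability proof for the rotated matching $M'=M/\rho$. Blocking pairs are defined with respect to the \emph{original} preferences and acceptability, but your claim that ``the only partner that a now-less-happy agent $e_i$ can envy is its first choice $h_i$'' is only true relative to the current table: $h_i$ and $h_{i+1}$ are adjacent on $e_i$'s \emph{reduced} list, yet in the original list $e_i$ may prefer several agents $z$ to $h_{i+1}$ that were deleted during Phase~1 or during earlier rotation eliminations. For such a $z$ with $h_i\succ_{e_i} z\succ_{e_i} h_{i+1}$, stability of $M$ gives no information at all (since $e_i$ did not prefer $z$ to $M(e_i)=h_i$), so nothing you have established prevents $\{e_i,z\}$ from blocking $M'$; a similar hole appears when the second member of a putative blocking pair is itself some $e_j$ of the rotation. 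This is precisely where the classical proofs do their real work: they either carry a stronger invariant on the table---every deletion is ``justified'', meaning a deleted pair can never block any matching contained in the table (the \emph{stable table} property of Gusfield and Irving)---or they define exposure of a rotation relative to the matching via the successor function on the \emph{original} lists (as the paper does in \cref{sec:ISM without ties} for the marriage case), which makes the ``second choice'' claim hold with respect to the true preferences. Without one of these devices, the crucial step that elimination of an exposed rotation yields another stable matching is asserted rather than proved.
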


To get better acquainted with Irving's algorithm and, more importantly, the notion of rotations, their exposition, and their elimination, we give an example illustrating this.
\begin{example}[{\cite{gusfield1988structure}}]\label{example:tablesRotationsElimination}
Consider the following profile with eight agents---the complete profile is to the left and the Phase~1 table is to the right.
For the sake of readability, the symbol~$\succ$ between the agents in the preference lists of each agent are omitted.

\begin{multicols}{2}
\begin{tabular}{l|lllllll}
    1 & 7 & 2 & 6 & 8 & 5 & 3 & 4 \\
    2 & 4 & 6 & 5 & 3 & 8 & 1 & 7 \\
    3 & 5 & 2 & 1 & 7 & 4 & 6 & 8 \\
    4 & 1 & 7 & 3 & 6 & 5 & 8 & 2 \\
    5 & 7 & 1 & 8 & 4 & 6 & 2 & 3 \\
    6 & 7 & 3 & 8 & 4 & 5 & 1 & 2 \\
    7 & 2 & 8 & 4 & 3 & 5 & 6 & 1 \\
    8 & 4 & 2 & 3 & 5 & 6 & 7 & 1 \\
\end{tabular}
\qquad\qquad
\begin{tabular}{l|lllllll}
    1 & 2 & 6 & 5 & 3 & 4 \\
    2 & 6 & 5 & 3 & 8 & 1 \\
    3 & 5 & 2 & 1 & 7 & 4 & 6 \\
    4 & 1 & 7 & 3 & 6 & 5 & 8 \\
    5 & 7 & 1 & 8 & 4 & 6 & 2 & 3 \\
    6 & 3 & 8 & 4 & 5 & 1 & 2 \\
    7 & 8 & 4 & 3 & 5 \\
    8 & 4 & 2 & 5 & 6 & 7 \\
\end{tabular}
\end{multicols}

In the Phase~1 table, there are two rotations: $\rho_1=((1,2),(2,6),(3,5))$ and~$\rho_2=((4,1),(5,7))$.
Now, we eliminate the rotation~$\rho_1$, that is, we remove
\begin{itemize}
  \item the top choices for agents $1,2,3$ and
  \item $8,1$ from the preference list of agent $2$, since $2$ finds $3$ better than these two and is becoming the top choice for $3$.
\end{itemize}
After the elimination of~$\rho_1$, we obtain the following table.

\begin{tabular}{l|lllllll}
    1 & 6 & 5 & 3 & 4 \\
    2 & 5 & 3 \\
    3 & 2 & 1 & 7 & 4 & 6 \\
    4 & 1 & 7 & 3 & 6 & 5 & 8 \\
    5 & 7 & 1 & 8 & 4 & 6 & 2 \\
    6 & 3 & 8 & 4 & 5 & 1 \\
    7 & 8 & 4 & 3 & 5 \\
    8 & 4 & 5 & 6 & 7 \\
\end{tabular}

Now $\rho_2$ is still exposed, and there is a new exposed rotation~$\rho_3=((2,5),(6,3),(7,8),(8,4))$.
If we continue to eliminate~$\rho_2$, then we obtain the folllwing table.

\begin{tabular}{l|lllllll}
    1 & 6 & 5 \\
    2 & 5 & 3 \\
    3 & 2 & 4 & 6 \\
    4 & 7 & 3 & 6 & 5 & 8 \\
    5 & 1 & 8 & 4 & 6 & 2 \\
    6 & 3 & 8 & 4 & 5 & 1 \\
    7 & 8 & 4 \\
    8 & 4 & 5 & 6 & 7 \\
\end{tabular}

Now $\rho_3$ is the only exposed rotation. We eliminate $\rho_3$ and obtain the following table.

\begin{tabular}{l|lllllll}
    1 & 6 & 5 \\
    2 & 3 \\
    3 & 2 \\
    4 & 7 \\
    5 & 1 & 8 \\
    6 & 8 & 1 \\
    7 & 4 \\
    8 & 5 & 6 \\
\end{tabular}

Now $\rho_4=((1,6),(8,5))$ and $\rho_5=((5,1),(6,8))$ are exposed.
If we continue to eliminate~$\rho_4$, we obtain the following table, where every person has exactly one entry on his list, and we get a stable matching.

\begin{tabular}{l|lllllll}
    1 & 5 \\
    2 & 3 \\
    3 & 2 \\
    4 & 7 \\
    5 & 1 \\
    6 & 8 \\
    7 & 4 \\
    8 & 6 \\
\end{tabular}
\end{example}

% Since our goal is to find a special stable matching (close to~$M_1$) from all stable matchings, next we recall the structure of all stable matchings given by Gusfield \cite{gusfield1988structure}.

There are two types of rotations, which can be used to characterize all stable matchings.
Let~$D$ denote the execution tree when Phase~2 is executed in all possible ways on the Phase~1 table.
Note that each node~$x$ in~$D$ represents a table~$\mathcal{T}(x)$, which is the current table state of the algorithm at node~$x$.
Let~$R$ be the set of all rotations which are exposed in some table of some node in~$D$.

\begin{definition}[Dual and singleton rotations]\label{def:dual-single-pairs}
If~$\rho \in R$ with $\rho = ((e_0,h_0),(e_1,h_1),\dots,(e_{r-1},h_{r-1}))$, then the \emph{negation of~$\rho$} is defined as
% \begin{align*}
  $\neg \rho=((h_0,e_{r-1}),(h_1,e_0),\dots,(h_{r-1},e_{r-2}))$.
% \end{align*}

If~$\neg \rho \in R$, then we call~$\rho$ and~$\neg \rho$ a \emph{dual pair of rotations}.
Any rotation without a dual is called a \emph{singleton rotation}.\footnote{The singleton rotation is called \emph{singular rotation} in \citet{GusfieldIrving1989,Manlove2013}; however, since we closely follow \citet{gusfield1988structure}, we prefer to use the term singleton rotation. This is mainly to make it easier for the reader when referring to lemmata of Gusfield.}
\end{definition}

\begin{definition}[Partially ordered set, and closed and complete subest of rotations]
  A subset~$C\subseteq R$ of rotations is \emph{complete} if it contains
  \begin{enumerate}[(i)]
    \item all singleton rotations and
    \item exactly one rotation from each dual pair of rotation.
  \end{enumerate}  

  A binary relation~$\pred$ on $R$ is defined as follows.
  For each pair~$\{\rho_1, \rho_2\}$ of rotations, if the elimination of~$\rho_1$ is necessary  for~$\rho_2$ to be exposed (i.e., the elimination of $\rho_1$ precedes the exposition of $\rho_2$ on every path in $D$),
  then we say that~$\rho_1$ precedes~$\rho_2$, written as \emph{$\rho_1 \pred \rho_2$}.
  The partial order~$\predr$ is the reflexive closure of the precedence relation~$\pred$.
  A subset~$C$ of the rotation~$R$ is \emph{closed with respect to the partial order~$\predr$} (in short, \emph{closed})  if for each pair~$\{\rho_1, \rho_2\}$ of rotations it holds that if $\rho_2 \in C$ and~$\rho_1 \predr \rho_2$, then~$\rho_1 \in C$.
\end{definition}

% Gusfield~\cite{gusfield1988structure} proved that % if along two paths in the tree $D$ we eliminate the same set of rotations, then the resulting stable matchings are the same.
% % Furthermore, it follows that
% on each execution path in $D$ to a leaf, one has to eliminate
% \begin{itemize}
%   \item each singleton rotation and
%   \item exactly one rotation from each dual pair of rotation.
% \end{itemize}
% We call such a set of rotations a \emph{complete set of rotations}.
% A partial order~$\pred$ on $R$ is defined as follows.
% For two rotations~$p_1,p_2 \in R$, if the elimination of~$p_1$ is necessary  for~$p_2$ to be exposed (i.e., the elimination of $p_1$ precedes exposition of $p_2$ on every path in $D$), then we say that~$p_1$ precedes~$p_2$, written as \emph{$p_1 \pred p_2$}.
% The partial order~$\predr$ is the reflexive closure of the precedence relation~$\predr$.
% A subset~$S$ of the rotation~$R$ is \emph{closed with respect to the precedence relation~$\predr$} (in short, \emph{closed})  if it is closed under the precedence relation, i.e.,
% for each pair~$\{\rho_1, \rho_2\}$ of rotations holds that if $\rho_2 \in S$ and~$\rho_1 \predr \rho_2$, then~$\rho_1 \in S$.

Note that the rotation poset~$\predr$ on~$R$ is an $O(n^2)$-sized representation of all stable matchings.
It follows from~Gusfield~\cite{gusfield1988structure} that one can compute~$R$ in~$O(n^3\cdot \log(n))$ time.

% \begin{lemma}[{\cite[Lemma~5.5~(2)]{gusfield1988structure}}]
% \label{lem:p1<p2=-p2<-p1}
% For any dual pair of rotations rotations~$p_1$ and~$p_2$ in~$R$, if~$p_1 \pred p_2$, then~$\neg p_2 \pred \neg p_1$.
% \end{lemma}
% \todohua{Hua: The above lemma seems to be never used.}

\begin{lemma}[{\cite{gusfield1988structure}}]
\label{lem:correspondence+time}
For a given \SR{} instance without ties, there is a one-to-one correspondence between the set of stable matchings and complete and closed subsets of the rotation poset~$(R, \rhd)$.
Finding this poset~$(R, \predr)$ can be done in $O(n^3\cdot \log(n))$~time, where $n$ denotes the number of agents in the instance.
\end{lemma}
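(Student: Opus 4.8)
The plan is to prove the bijection by tracking how each stable matching is produced by Irving's algorithm, and then to bound the cost of assembling the poset. The guiding fact throughout is that eliminating a rotation transforms one preference table into another while deleting only pairs~$\{e,h\}$ in which $e$ is currently ranked last by $h$; by the invariant kept in Phase~2, such a pair can lie in no stable matching that still ``survives'' in the current table. To set up the forward map, call a stable matching~$M$ \emph{available} in a table~$\mathcal{T}$ if, for every agent, its $M$-partner still appears on its list in~$\mathcal{T}$. The Phase-1 table has every stable matching available (Phase~1 removes no stable pair, by \cref{prop:empty-list-Phase1-unmatched} and the discussion preceding it). The key progress step is: if $M$ is available in a non-terminal table, then some exposed rotation can be eliminated while keeping $M$ available. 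Iterating yields a path in the execution tree~$D$ ending in a singleton-list table that coincides with~$M$, and I set $S(M)$ to be the set of rotations eliminated along this path. The first substantial point is \emph{well-definedness}: $S(M)$ must not depend on the chosen path. I would obtain this from a local commutativity (diamond) property—if $\rho$ and $\sigma$ are both exposed with neither preceding the other, then eliminating them in either order produces the same table—so that any two maximal elimination sequences reaching $M$ eliminate exactly the same set of rotations.

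Next I would verify that $S(M)$ is \emph{closed} and \emph{complete}. Closedness is immediate from the definition of~$\pred$: to expose~$\rho$ one must first eliminate every~$\sigma \pred \rho$, so every predecessor of a member of $S(M)$ is itself in $S(M)$. For completeness I argue two things. Each singleton rotation lies on every path from the Phase-1 table to every stable matching—this is essentially the characterization of singletons as the unavoidable rotations—so it belongs to $S(M)$. For a dual pair $\{\rho,\neg\rho\}$ (see \cref{def:dual-single-pairs}), I would show that eliminating one destroys the exposition of the other: their defining sequences name the same agents but with heads and tails interchanged, so after eliminating~$\rho$ the entries needed to expose~$\neg\rho$ are gone, whence at most one of the pair is eliminated; and that reaching a singleton-list table forces this choice to be resolved, whence at least one is eliminated. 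Together these give exactly one member of each dual pair in $S(M)$, so $S(M)$ is complete.

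The inverse map runs the construction backwards. Given a complete closed set $C$, eliminate its rotations in any order respecting $\predr$; closedness guarantees that each rotation is already exposed when we reach it, and the diamond property guarantees the resulting table is order-independent. I would then show this end table has every list a singleton: a list cannot become empty, since by \cref{prop:emptylist-Phase2->nostablematching} an empty list contradicts the existence of the stable matching witnessed by completeness of $C$. Hence the end table yields a stable matching $M(C)$. Finally $S(M(C))=C$ and $M(S(M))=M$ follow directly from the construction, establishing the claimed one-to-one correspondence between stable matchings of the \SR{} instance and complete closed subsets of $(R,\predr)$.

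For the running time, Phase~1 costs $O(n^2)$. The combinatorial heart is the bound $|R| = O(n^2)$: each elimination permanently removes a nonempty block of entries from the $O(n^2)$-entry table, and an accounting argument charges distinct removed entries to distinct rotations across all of~$D$, so only $O(n^2)$ rotations ever appear; detecting exposed rotations and performing eliminations then costs total work $O(n^2)$ using the last-ranked invariant. Building the precedence relation $\pred$ over these $O(n^2)$ elements via the elimination-dependency bookkeeping of~Gusfield, together with the reflexive (and, if desired, transitive) structure, contributes the extra factor and yields the $O(n^3\log n)$ bound. I expect the main obstacle to be the two interlocking structural facts—confluence of rotation elimination (the diamond property) and the exactly-one-per-dual-pair statement—since both require a careful analysis of how the table entries of the agents named in a rotation and in its negation evolve under elimination; everything else is bookkeeping layered on top of these two facts.
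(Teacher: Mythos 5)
The paper does not prove this lemma at all: it is imported verbatim as a black-box result of Gusfield~\cite{gusfield1988structure} (combined with Irving's algorithm), so there is no in-paper argument to compare yours against. Your outline is, in substance, a reconstruction of Gusfield's original development --- availability of a stable matching in a table, the progress lemma, path-independence of the eliminated set, closedness from the definition of~$\pred$, completeness via singletons plus the dual-pair dichotomy of \cref{def:dual-single-pairs}, and the reverse construction using \cref{prop:emptylist-Phase2->nostablematching}. The architecture is sound and matches the source.

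That said, as a proof it is an outline rather than an argument: the two facts you yourself flag --- confluence of rotation elimination (which needs that distinct rotations exposed in the same table are pairwise disjoint, so that eliminations commute) and the statement that every complete execution eliminates all singletons and exactly one member of each dual pair --- are exactly where all of Gusfield's work lies, and you defer both. One point where the sketch genuinely understates the difficulty is the running time. A single execution of Phase~2 finds only one maximal chain of eliminations, i.e.\ one complete closed subset; the lemma requires computing \emph{all} of~$R$, including both members of every dual pair, without traversing the exponentially large execution tree~$D$. Your charging argument correctly bounds $|R|$ by $O(n^2)$ (cf.\ \cref{lem:bound-rotation-set}), but it does not explain how to \emph{enumerate} $R$ and the precedence relation efficiently; that enumeration is the actual algorithmic content behind the $O(n^3\log n)$ bound and cannot be dismissed as ``bookkeeping layered on top.'' Since the lemma is cited rather than proved in the paper, none of this affects the paper, but your proposal should be read as a roadmap to Gusfield's proof rather than a self-contained replacement for it.
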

According to \cref{lem:correspondence+time}, to solve our problem, it suffices to search for a complete and closed subset of rotations for the second input profile~$\mathcal{P}_2$ such that the corresponding stable matching is closest to~$M_1$.

\subsection{Proposal sets and rotation weights}\label{subsec:proposalsets+weights}
In this section, we study the influence of a rotation elimination on the distance between the ultimate resulting matching and the initial matching.
In order to do so, we first define proposal sets for preference tables and weights of rotations.
Here, the weight of a rotation shall capture how many pairs of the initial matching we can additionally obtain if we eliminate this rotation.
Then we study the properties of the thus introduced rotation weights.
%\todo[inline]{Hua: Please add some intuitions here.}
%While doing so we exploit strong relation between these two properties of \ISR.
%More specifically, we inspect the effect of a rotation elimination on the difference between the sought and the given matching; the weight of the rotation is a formal way to capture this phenomenon.
This is used later to finish the sought reduction and to determine upper bounds on parameter values of the resulting instance.

Imagine that in a preference table, every agent proposes to the first agent in their current preference list.
Then, we get a set of ordered pairs, where the agent indicated by the first component of each pair proposes to the agent indicated by the second component in the pair.
\begin{definition}[Proposal sets]
For each preference table~$\mathcal{T}$, % be a preference table.
the \emph{proposal set} for~$\mathcal{T}$ is defined as
\[
\proposalset_\mathcal{T}\coloneqq \{(i,j) \mid i \in U \text{ and } j \text{ is the first agent on } i\text{'s list in }\mathcal{T} \} \,,
\]
%\todoJJ{JJ: better notation for $M_\mathcal{T}^B$?}
%\todohua{Hua: I suggest to use $\mathsf{S}$. You can change it if you have a better option.}
where~$(i,j)$ represents a proposal pair~$i \rightarrow j$ (i.e., agent $i$ proposes to agent $j$). % and~$\mathcal{T}_i(1)$ is the first agent on agent~$i$'s preference list in~$\mathcal{T}$.

For each matching~$M$, the \emph{proposal set~$\proposalset_M$} for $M$ is defined
as $\proposalset_M\coloneqq \{(i,j),(j,i) \mid \{i,j\} \in M\}$.
\end{definition}

%In the following, we use $\proposalset_{M_1}$ to denote the proposal set for the initial matching~$M_1$.
By the definition of proposal sets, for each matching~$M$ we have
\begin{equation}\label{eq:distanceByProposePairs}
  \dist(M_1,M) = |M_1\Delta M| = |M_1| + |M| - 2|M_1 \cap M| = |M_1| + |M| - \left| \proposalset_{M_1} \cap \proposalset_{M}\right| \,.
\end{equation}
Hence, we are looking for a matching $M_2$ which is stable with respect to profile $\mathcal{P}_2$ such that
\begin{equation}\label{eq:intersectionByProposePairs}
  \left| \proposalset_{M_1} \cap \proposalset_{M_2} \right| \ge |M_1| + |M_2| - k.
\end{equation}
%Moreover, for each agent~$x$ which is matched under $M_1$ but not matched under $M_2$ we can conclude that $\{x,M_1(x)\} \in M_1 \Delta M_2$.

Since every complete and closed rotation subset contains all singleton rotations and since all singleton rotations can be eliminated from the Phase~1 table before all dual rotations~\cite{gusfield1988structure}, we can first eliminate all singleton rotations.
By \cref{prop:emptylist-Phase2->nostablematching}, if after eliminating all singleton rotations, some agent obtains an empty preference list, then we can immediately conclude that the given profile does not admit any stable matching.
Thus, in the following, we mainly work with~$R_2 \subseteq R$ which is the set of all dual rotations.

To measure the benefit of eliminating a rotation, we define the following.
\begin{definition}[Gain and lost proposal pairs after a rotation elimination]\label{def:gain-lost}
For each rotation~$\rho \in R$ with~$\rho=(e_0,h_0),(e_1,h_1),\dots,(e_{r-1},h_{r-1})$,
\emph{the set of proposal pairs gained (lost) by eliminating rotation $\rho$} is defined as follows.
%~$\proposalset_\rho^+$ ($\proposalset_\rho^-$) be
%, respectively.
%Formally,
\[
\proposalset_\rho^+\coloneqq \{(e_0,h_1), (e_1,h_2), \dots, (e_{r-1},h_0)\},
\]
% \quad\textrm{and}\quad
\[
\proposalset_\rho^- \coloneqq \{(e_0,h_0), (e_1,h_1), \dots, (e_{r-1},h_{r-1})\} \,.
\]

Further, let~$w^+(\rho)\coloneqq |\proposalset_\rho^+ \cap \proposalset_{M_1}|$ and~$w^-(\rho) \coloneqq |\proposalset_\rho^- \cap \proposalset_{M_1}|$ be the number of proposal pairs gained and lost by the elimination of rotation $\rho$.
\end{definition}
Observe that the two sets~$\proposalset_\rho^{+}$ and $\proposalset_\rho^{-}$ are independent of the table $\mathcal{T}$ and all of the proposal pairs of agents not involved in the rotation $\rho$ remain the same before and after the elimination of~$\rho$.
% With this we define the weight of (the elimination of) the rotation $p$ as
% \[
% w(p) = w^+(p)-w^-(p) = \left|M_p^+ \cap M_1^B\right|-\left|M_p^- \cap M_1^B\right| \,.
% \]

Since in order to obtain a stable matching, we have to eliminate exactly one rotation from each dual pair of rotations, we now prove that their weights are complementary.

\begin{lemma}
\label{lem:w(p)=-w(-p)}
Let~$\rho \in R$ be a dual rotation, then $w^+(\rho)=w^-(\neg \rho)$. % and~$w(\rho)=-w(\neg \rho)$.
\end{lemma}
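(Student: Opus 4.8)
The plan is to show that the ``lost'' set $\proposalset_{\neg\rho}^-$ of the negated rotation is precisely the collection of \emph{reversed} pairs of the ``gain'' set $\proposalset_\rho^+$ of $\rho$, and then to exploit the fact that the proposal set $\proposalset_{M_1}$ of a matching is invariant under reversing ordered pairs. Since $\rho$ is a dual rotation, we have $\neg\rho \in R$, so $\proposalset_{\neg\rho}^-$ and hence $w^-(\neg\rho) = |\proposalset_{\neg\rho}^- \cap \proposalset_{M_1}|$ are well defined, and the statement makes sense.

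First I would unwind the definitions by index chasing. Writing $\rho = ((e_0,h_0),\dots,(e_{r-1},h_{r-1}))$, \cref{def:gain-lost} gives $\proposalset_\rho^+ = \{(e_i, h_{i+1 \bmod r}) \mid 0 \le i \le r-1\}$. By \cref{def:dual-single-pairs}, the $i$-th pair of $\neg\rho$ is $(h_i, e_{i-1 \bmod r})$; applying \cref{def:gain-lost} to $\neg\rho$ (whose lost set simply collects its own pairs) yields $\proposalset_{\neg\rho}^- = \{(h_i, e_{i-1 \bmod r}) \mid 0 \le i \le r-1\}$. Reindexing via $j = i-1 \bmod r$ turns this into $\{(h_{j+1 \bmod r}, e_j) \mid 0 \le j \le r-1\}$, which is exactly $\{(b,a) \mid (a,b) \in \proposalset_\rho^+\}$. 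Hence the pair-reversal involution $\tau\colon (a,b)\mapsto(b,a)$ restricts to a bijection between $\proposalset_\rho^+$ and $\proposalset_{\neg\rho}^-$.

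Finally I would intersect with $\proposalset_{M_1}$. By definition $\proposalset_{M_1} = \{(i,j),(j,i) \mid \{i,j\} \in M_1\}$ is closed under $\tau$, so $(a,b) \in \proposalset_{M_1}$ if and only if $(b,a) \in \proposalset_{M_1}$. Therefore $\tau$ further restricts to a bijection from $\proposalset_\rho^+ \cap \proposalset_{M_1}$ onto $\proposalset_{\neg\rho}^- \cap \proposalset_{M_1}$, and the two intersections have equal cardinality:
\[
  w^+(\rho) = \left|\proposalset_\rho^+ \cap \proposalset_{M_1}\right| = \left|\proposalset_{\neg\rho}^- \cap \proposalset_{M_1}\right| = w^-(\neg\rho).
\]

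The only genuine obstacle is the bookkeeping of the cyclic index shift in the definition of $\neg\rho$: one must verify that swapping the roles of the first and second components together with the shift by one position aligns things so that $\proposalset_{\neg\rho}^-$ comes out as the \emph{exact} reversal of $\proposalset_\rho^+$, rather than some other cyclic rotation of it. Once this alignment is confirmed, the conclusion is immediate from the symmetry of $\proposalset_{M_1}$.
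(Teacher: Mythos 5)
Your proof is correct and follows essentially the same route as the paper's: both establish that $\proposalset_{\neg\rho}^-$ is exactly the set of reversed pairs of $\proposalset_\rho^+$ (your index chase matches the paper's displayed equation for this), and both then use the symmetry of $\proposalset_{M_1}$ under pair reversal to conclude the two intersections have equal size.
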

\appendixproof{lem:w(p)=-w(-p)}{
\begin{proof}
  Let~$\rho\coloneqq (e_0,h_0),(e_1,h_1),\ldots,(e_{r-1},h_{r-1})$,
  and let~$\proposalset_\rho^+$ and~$\proposalset_\rho^-$ be the sets of proposal pairs gained and lost by eliminating~$p$, respectively.
That is,
\begin{align}
\proposalset_\rho^+= &\{(e_0,{h_1}), (e_1,{h_2}), \dots, (e_{r-1},{h_0})\}, \text{ and } \label{eq:proposal-rho+} \\
\proposalset_\rho^-= & \{(e_0,{h_0}), (e_1,{h_1}), \dots, (e_{r-1},{h_{r-1}})\}.\label{eq:proposal-rho-}
\end{align}
By the definition of dual rotations, the dual of $\rho$ is equal to $\neg \rho=(h_0,e_{r-1}),(h_1,e_0),\dots,(h_{r-1},e_{r-2})$.
Accordingly, we have:
\begin{align}
\proposalset_{\neg \rho}^+= & \{(h_0,{e_0}), (h_1,{e_1}), \dots, (h_{r-1},{e_{r-1}})\}, \text{ and } \label{eq:proposal-neg-rho+}\\
\proposalset_{\neg \rho}^-= & \{(h_0,{e_{r-1}}), (h_1,{e_0}), \dots, (h_{r-1},{e_{r-2}})\}.  \label{eq:proposal-neg-rho-}
\end{align}
By \eqref{eq:proposal-rho+} and \eqref{eq:proposal-neg-rho-}, we can infer that 
\begin{equation}
 \text{for each ordered pair~} (i,j)\text{ of agents it holds that } (i,j) \in \proposalset_\rho^+ \textrm{ if and only if } (j,i) \in \proposalset_{\neg \rho}^-. \label{eq:matchingsForNegations}
\end{equation}
Since~$M_1$ is a matching,
it follows that for each unordered pair~$\{i,j\}\in M_1$ of agents we have that $(i,j),(j,i) \in \proposalset_{M_1}$.
Together with~\eqref{eq:matchingsForNegations}, we conclude that
\begin{align*}
  \text{  for each ordered pair } (i,j)
  \text{ of agents it holds that } (i,j) \in \proposalset_\rho^+ \cap \proposalset_{M_1}
  \text{ iff. } (j,i) \in \proposalset_{\neg \rho}^- \cap \proposalset_{M_1}.
\end{align*}
Thus, 
\begin{equation*}
w^+(\rho)=|\proposalset_\rho^+ \cap \proposalset_{M_1}|=|\proposalset_{\neg p}^- \cap \proposalset_{M_1}|=w^-(\neg \rho).
\end{equation*}
%By symmetry we conclude $w^-(\rho) = w^+(\neg \rho)$.
\end{proof}
}

\subsection{Reduction to \WCFCS}\label{subsec:reduction-to-WCFCS}
In this subsection we give a reduction from \ISR to \WCFCS in which the conflict graph is %a matching.
is a union of disjoint edges.
In order to do so, we show that the distance between the target (i.e., initial) matching~$M_1$ and a matching~$M_C$ resulting from the elimination of a complete and closed set~$C$ of rotations for~$\mathcal{P}_2$ (if such set exists) depends only on~$\sum_{\rho \in C} w^-(\rho)$.
% More precisely, we show that
% \[  \dist(M_1,M_S) = |M_1| + |M_s| - \left(\left| M_0^B \cap M_1^B \right| + \sum_{p \in R} w^+(p) \right) + 2\sum_{p \in S} w^-(p),\]
% where $g(n,R)$ is a constant independent of $S$, namely
% \begin{equation}\label{eq:distanceShiftForSubset}
%   g(n, R) = 2n - \left(\left| M_0^B \cap M_1^B \right| + \sum_{p \in R} w^+(p) \right) \,,
% \end{equation}

In the remainder of the section, let $\proposalset_0$ be the proposal set for the table obtained from the Phase~1 table for $\mathcal{P}_2$ followed by elimination of all singleton rotations to it and let
$R_2$ be the set of all dual rotations for $\mathcal{P}_2$; it is known that $R_2$ has at most ${n \choose 2}$ rotations.
\begin{lemma}[{\cite[Corollary~5.1]{gusfield1988structure}}]\label{lem:bound-rotation-set}
 $|R_2| \le {n \choose 2}$.
\end{lemma}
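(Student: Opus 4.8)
The plan is to deduce the bound from the monotonicity built into Irving's Phase~2 elimination, following Gusfield. Since every dual rotation belongs to~$R$, it suffices to bound the total number of rotations and show $|R| \le \binom{n}{2}$. I would start from an incidence count between rotations and the agents that drive them: in a rotation $\rho = ((e_0,h_0),\dots,(e_{r-1},h_{r-1}))$ call $e_0,\dots,e_{r-1}$ its \emph{proposers}, and note that eliminating $\rho$ advances the first choice of each proposer~$e_i$ from~$h_i$ to its current second choice~$h_{i+1}$. Writing $d(a)$ for the number of rotations of~$R$ in which~$a$ is a proposer, this gives
\[
  \sum_{\rho \in R} |\rho| \;=\; \sum_{a \in V} d(a).
\]
Every rotation is a genuine cycle of at least two pairs (a length-one rotation would need an agent whose first and second choices coincide), so $|\rho| \ge 2$ and hence $|R| \le \tfrac12 \sum_{a \in V} d(a)$.

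The core of the argument is bounding~$d(a)$. Using \cref{lem:correspondence+time}, that~$R$ and the precedence relation~$\predr$ do not depend on the order of elimination, I would argue that the rotations in which a fixed agent~$a$ is a proposer form a \emph{chain} in the poset~$(R,\predr)$: if one such rotation moves $a$'s first choice from rank~$p$ to~$p+1$ and another from rank~$q$ to~$q+1$ with $p<q$, then the former must be eliminated before the latter can be exposed, i.e.\ they are comparable under~$\pred$. Along this chain the first choice of~$a$ descends monotonically through $a$'s preference list and never returns (entries are only deleted), so $d(a)$ equals the rank gap between $a$'s best and its worst stable partner. In particular $d(a)$ is at most one less than the length of $a$'s list, which already yields the polynomial bound $|R| = O(n^2)$.

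The main obstacle is sharpening the per-agent bound to obtain the exact constant $\binom{n}{2}$ rather than the crude $\tfrac12\sum_{a\in V}(\deg(a)-1)$, and doing so without path-by-path reasoning on the execution tree~$D$. For the branching I would rely entirely on the poset of \cref{lem:correspondence+time} so that each agent's descent is globally well defined. For the constant, I would charge each proposer step of~$a$ to the distinct partner it abandons and telescope the first-choice ranks from the table in which only singletons have been eliminated down to a complete, closed elimination: since any complete, closed set of rotations corresponds to a stable matching pairing the $2n$ agents into exactly~$n$ pairs, the ranks of the $2n$ first choices are constrained so that $\sum_{a\in V} d(a) \le n(n-1)$, whence $|R| \le \binom{n}{2}$. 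This telescoping/pairing estimate is exactly the content of Gusfield's Corollary~5.1 that we invoke, and it is the step where the real work lies.
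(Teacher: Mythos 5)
First, a calibration point: the paper does not prove this lemma at all---it is imported verbatim as \cite[Corollary~5.1]{gusfield1988structure}---so the comparison is against Gusfield's argument rather than a proof in the text. Your counting skeleton is the right one and is essentially his: write $\sum_{\rho\in R}|\rho|=\sum_{a}d(a)$, note that no rotation has length one (its first and second list entries would have to coincide), and bound the per-agent incidence $d(a)$. However, the proof is not self-contained at the decisive point, for two reasons. (1)~The chain claim is justified only by ``the former must be eliminated before the latter can be exposed,'' and that is not forced by the elimination rules: the current top entry $h$ of $a$'s list can also disappear because $a$ sits at the \emph{bottom} of $h$'s list and gets deleted when some rotation with $h=h_{i+1}$ and $e_i\neq a$ is eliminated, i.e.\ by a rotation in which $a$ is not a proposer. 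So a path of the execution tree~$D$ can expose the rank-$q$ rotation without ever eliminating the rank-$p$ one, and comparability under $\pred$ requires precedence on \emph{every} path. Making the chain property rigorous genuinely needs Gusfield's structural lemmas, not this one-line argument. (2)~You then say the constant requires a further ``telescoping/pairing estimate'' and identify that step with the very corollary being proved, which leaves the argument circular as a standalone proof. Note also that if $n$ denotes the number of agents (as it explicitly does in \cref{lem:correspondence+time}), your own ``crude'' bound already closes the proof: $\tfrac12\sum_{a}(\deg(a)-1)\le \tfrac12\bigl(n(n-1)-n\bigr)<\binom{n}{2}$, so the final paragraph's extra work is unnecessary under that reading and unsubstantiated under the $2n$-agent reading.

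The short self-contained route---and the one Gusfield actually uses, which this paper already invokes elsewhere (in the proof of \cref{lem:equivalence}, via \cite[Lemma~5.1]{gusfield1988structure})---is: each ordered pair of agents occurs as a pair $(e_i,h_i)$ of at most one rotation in $R$. Hence $\sum_{\rho\in R}|\rho|$ is at most the number of ordered pairs of distinct agents, namely $n(n-1)$, and since every rotation has $|\rho|\ge 2$ this gives $|R|\le \binom{n}{2}$ directly, with $R_2\subseteq R$ a fortiori. I would either cite the lemma as the paper does, or replace your chain-plus-telescoping argument by this two-line count resting on the single uniqueness lemma.
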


\paragraph{Upper-bounding the sum of weights of a complete and closed subset.}
Before we continue with the procedure, we compare the sizes of $M_C$ and $M_1$; recall that $M_c$ is a matching resulting from eliminating a complete and closed set~$C \subseteq R$.
By \cref{prop:empty-list-Phase1-unmatched}, every agent that has non-empty list after Phase~1 must be matched under $M_C$ and these agents are exactly those agents who hold some proposals in $\proposalset_0$.
Consequently, for each agent~$x$ that is matched under $M_1$ but does not hold a proposal under $\proposalset_0$ it holds that $\{x, M_1(x)\} \in M_1 \Delta M_C$.
Thus, we define the following reduction rule. % can delete all pairs~$\{x, M_1(x)\}$ from $M_1$ such that $x$ does not hold a propsal under $M^B_0$ and decrease our distance budget~$k$ by one, obtaining the following.

\begin{rrule}\label{rr:not-matched-in-M0}
  For each agent~$x$ matched under $M_1$,
  if for each agent~$y$ it holds that  $(x,y) \notin \proposalset_0$, then delete $\{x,M_1(x)\}$ from $M_1$ and decrease $k$ by one.
\end{rrule}

\begin{lemma}\label{lem:rr-not-matched-in-M0-sound}
  \cref{rr:not-matched-in-M0} is sound and can be implemented in $O(n^2)$~time.
  Moreover, if \cref{rr:not-matched-in-M0} does not apply, then for each stable matching~$M$ of profile~$\mathcal{P}_2$ it holds that
  $|M_1| \le |M| = |\proposalset_0|/2$.
\end{lemma}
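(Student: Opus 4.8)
The plan is to reduce everything to a single structural characterization: an agent~$x$ holds a proposal in~$\proposalset_0$ if and only if $x$ is matched by every stable matching of~$\mathcal{P}_2$. First I would establish this equivalence. By definition, $x$ holds a proposal in~$\proposalset_0$ exactly when $x$'s list is non-empty in the table obtained from the Phase~1 table of~$\mathcal{P}_2$ after eliminating all singleton rotations. By \cref{prop:empty-list-Phase1-unmatched}, $x$ is matched by some (equivalently, every) stable matching precisely when $x$'s list is already non-empty after Phase~1, so it remains to show that the singleton eliminations do not change \emph{which} lists are empty. One inclusion is immediate from monotonicity: eliminations only delete entries, so a list empty after Phase~1 stays empty. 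For the converse I would invoke \cref{prop:emptylist-Phase2->nostablematching}: whenever a stable matching of~$\mathcal{P}_2$ exists, no Phase~2 elimination---in particular none of the singleton eliminations producing~$\proposalset_0$---can empty a previously non-empty list, since otherwise no stable matching would exist. Hence the table behind~$\proposalset_0$ has exactly the same non-empty lists as the Phase~1 table, proving the equivalence (and if no stable matching of~$\mathcal{P}_2$ exists, the ``moreover'' claim is vacuous).

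For soundness of \cref{rr:not-matched-in-M0}, suppose $x$ is matched under~$M_1$ but holds no proposal in~$\proposalset_0$. By the equivalence, $x$ is unmatched in every stable matching~$M$ of~$\mathcal{P}_2$, so the edge $e \coloneqq \{x, M_1(x)\}$ lies in no such~$M$. Writing $M_1' \coloneqq M_1 \setminus \{e\}$, I would check that $e \notin M$ and $e \notin M_1'$ yield the disjoint-union identity $M_1 \Delta M = (M_1' \Delta M) \cup \{e\}$, whence $\dist(M_1, M) = \dist(M_1', M) + 1$ for every stable~$M$. Consequently $\dist(M_1, M) \le k$ holds if and only if $\dist(M_1', M) \le k - 1$, so replacing $M_1$ by~$M_1'$ and $k$ by~$k-1$ produces an equivalent instance, which is exactly what the rule does. (If no stable matching of~$\mathcal{P}_2$ exists, both instances are trivially no-instances, so soundness still holds.)

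For the ``moreover'' part, fix any stable matching~$M$ of~$\mathcal{P}_2$. By \cref{prop:empty-list-Phase1-unmatched} together with the equivalence, the set of agents matched by~$M$ coincides with the set of agents holding a proposal in~$\proposalset_0$; since each such agent contributes exactly one ordered pair to~$\proposalset_0$, this set has size~$|\proposalset_0|$, giving $|M| = |\proposalset_0|/2$. After exhaustive application of \cref{rr:not-matched-in-M0}, every agent still matched under~$M_1$ holds a proposal in~$\proposalset_0$ and is therefore matched by~$M$; thus the set of agents matched by the reduced~$M_1$ is contained in the set of agents matched by~$M$, and dividing by two gives $|M_1| \le |M|$.

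Finally, for the running time I would precompute, in~$O(n)$ time, the set of agents holding a proposal in~$\proposalset_0$, then scan the $O(n)$ edges of~$M_1$ and delete each~$\{x, M_1(x)\}$ whose endpoint holds no proposal while decrementing~$k$; even the naive test ``$(x,y) \notin \proposalset_0$ for all~$y$'' costs $O(n)$ per matched agent over $O(n)$ agents, so the stated $O(n^2)$ bound is comfortably met. The only genuinely delicate step is the equivalence in the first paragraph---specifically, ensuring that the singleton eliminations preserve non-emptiness of the lists---so invoking \cref{prop:emptylist-Phase2->nostablematching} under the assumption that a stable matching exists is the crux of the whole argument.
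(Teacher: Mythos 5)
Your proof is correct and follows essentially the same route as the paper's: soundness via the identity $M_1 \Delta M = (M^-_1 \Delta M) \cup \{\{x,M_1(x)\}\}$ after concluding from \cref{prop:empty-list-Phase1-unmatched} that $x$ is unmatched in every stable matching of $\mathcal{P}_2$, the same counting argument for $|M_1|\le |M|=|\proposalset_0|/2$, and the same running-time accounting. The only place you go beyond the paper is in explicitly verifying (via \cref{prop:emptylist-Phase2->nostablematching}) that eliminating the singleton rotations does not empty any previously non-empty list, so that ``holds a proposal in $\proposalset_0$'' really coincides with ``non-empty list after Phase~1''; the paper treats this equivalence as immediate, so your version is a slightly more careful rendering of the same argument.
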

\appendixproof{lem:rr-not-matched-in-M0-sound}{
  \begin{proof}
    Let $x$ be as defined in the rule, i.e., $x$ is matched under $M_1$ but does not hold a proposal in $\proposalset_0$.
    For ease of notation, let $M^-_1 \coloneqq M_1 \setminus \{\{x,M_1(x)\}\}$.
    Let $M$ be a stable matching for $\mathcal{P}_2$.
    By \cref{prop:empty-list-Phase1-unmatched}, it follows that 
    $x$ is unmatched under $M$.
    Thus, $M_1 \Delta M = (M \Delta M^-_1) \cup \{x, M_1(x)\}$.
    This implies that $|M_1 \Delta M| \le k$ if and only if $|M^-_1 \Delta M| \le k-1$.
    The soundness of the reduction rule follows.

    As for the running time, for $2n$ agents in $\mathcal{P}_2$, in $O(n^2)$ time we can complete Phase~1 and obtain the proposal set~$\proposalset_0$ which has size at most $2n$.
    Then, in $O(n)$~time we can check for each matched agent (under $M_1$) whether it is ``unmatched'' in $\proposalset_0$.

    As for the second statement, it is immediate that $|M_1|\le |M|$.
    Again, by \cref{prop:empty-list-Phase1-unmatched}, it holds that $|M|=|\proposalset_0|/2$. % and by the properties of Phase~1 table (also see the discussion at the beginning of \cref{sec:rotationpreliminaries}) it holds that every agent with non-empty list in Phase~1 table must be matched by any stable matching.
%    It follows that every agent with a proposal in $\proposalset_0$ must be matched under~$M$.
    Hence, $|M_1|\le |M| = |\proposalset_0|/2$.
  \end{proof}
}

%Later we show that $g(n, R) \ge 0$ which allows us to upper-bound the budget in \WCFCS in terms of the parameter $k$.
Next, we upper-bound the size of the intersection between the target (i.e., initial) matching and the sought matching.\todo{JJ: Upper-bound the sum of weights of a complete and closed subset? Hua will check.}
\begin{lemma}\label{lem:equivalence}
Let~$C$ be a complete and closed subset of rotations in $R_2$ and let~$M_C$ be the stable matching associated with~$C$. %, where $R_2$ is the set of all rotations for the profile~$\mathcal{P}_2$.
Then, the following holds.
\begin{enumerate}[(i)]
  \item  $|\proposalset_{M_1} \cap \proposalset_{M_C}| = |\proposalset_{M_1} \cap \proposalset_0| + \sum_{\sigma\in R_2}w^{+}(\sigma) - 2\cdot \sum_{\rho \in C} w^{-}(\rho)$.
  \item\label{lem:weight-bound-equivalence} $\dist(M_1, M_C) \le k$ if and only if  $\sum_{\rho \in C} w^{-}(\rho) \le \frac{|\proposalset_{M_1}\cap \proposalset_0| + \sum_{\sigma \in R_2}w^{+}(\sigma) - |M_1| - |M_C| +  k}{2}$.
  % \dist(M_1,M_S) = g(n, R) + 2 \cdot \sum_{p \in S} w^-(p) \,.
  \item\label{lem:weight-bound} $|\proposalset_{M_1}\cap \proposalset_0| + \sum_{\sigma \in R_2}w^{+}(\sigma) - |M_1| - |M_C| \le 0$.
\end{enumerate}
\end{lemma}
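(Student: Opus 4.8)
The plan is to prove the three parts in order, exploiting the basic fact (recorded after \cref{def:gain-lost}) that eliminating a rotation $\rho$ changes only the proposals of the agents occurring as its ``$e$''-entries, replacing each lost pair of $\proposalset_\rho^-$ by the corresponding gained pair of $\proposalset_\rho^+$, which share the same proposers.

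For part (i) I would fix a linear extension of $(R_2,\predr)$ restricted to $C$ and eliminate the rotations of $C$ one by one, starting from the table whose proposal set is $\proposalset_0$ and ending at the table whose proposal set is $\proposalset_{M_C}$ (by \cref{prop:empty-list-Phase1-unmatched} every surviving agent is matched, so the final table proposal set really is $\proposalset_{M_C}$). Each single elimination of $\rho$ alters the size of the intersection of the current proposal set with $\proposalset_{M_1}$ by exactly $w^+(\rho)-w^-(\rho)$, since $\proposalset_\rho^-$ lies in the current set ($\rho$ is exposed) and $\proposalset_\rho^+$ has the same proposers. Summing over $C$ telescopes cleanly—an agent's successive gained and lost pairs cancel as its proposal descends its list—giving $|\proposalset_{M_1}\cap\proposalset_{M_C}| = |\proposalset_{M_1}\cap\proposalset_0| + \sum_{\rho\in C}w^{+}(\rho) - \sum_{\rho\in C}w^{-}(\rho)$. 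Finally, completeness of $C$ means it contains exactly one rotation of each dual pair, so pairing $\rho\in C$ with $\neg\rho\in R_2\setminus C$ and using \cref{lem:w(p)=-w(-p)} (i.e.\ $w^{+}(\neg\rho)=w^{-}(\rho)$) yields $\sum_{\sigma\in R_2}w^{+}(\sigma) = \sum_{\rho\in C}w^{+}(\rho)+\sum_{\rho\in C}w^{-}(\rho)$. Substituting for $\sum_{\rho\in C}w^{+}(\rho)$ produces the claimed identity.

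Part (ii) is pure algebra from part (i): by \eqref{eq:distanceByProposePairs} we have $\dist(M_1,M_C) = |M_1|+|M_C|-|\proposalset_{M_1}\cap\proposalset_{M_C}|$, so $\dist(M_1,M_C)\le k$ is equivalent to $|\proposalset_{M_1}\cap\proposalset_{M_C}|\ge |M_1|+|M_C|-k$; plugging in the identity of part (i) and isolating $\sum_{\rho\in C}w^{-}(\rho)$ gives exactly the stated inequality.

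Part (iii) is where the real work lies, and I expect it to be the main obstacle. Rewriting part (i) as $|\proposalset_{M_1}\cap\proposalset_0|+\sum_{\sigma\in R_2}w^{+}(\sigma) = |\proposalset_{M_1}\cap\proposalset_{M_C}| + 2\sum_{\rho\in C}w^{-}(\rho)$ and noting $|\proposalset_{M_1}\cap\proposalset_{M_C}| = 2|M_1\cap M_C|$ (each common edge contributes two ordered pairs), the target reduces to $2|M_1\cap M_C| + 2\sum_{\rho\in C}w^{-}(\rho) \le |M_1|+|M_C|$. Here $\sum_{\rho\in C}w^{-}(\rho)$ counts the agents $i$ that are matched in $M_1$, propose to $M_1(i)$ at some moment during the elimination of $C$, and then lose that proposal (forcing $M_C(i)\ne M_1(i)$, as proposals only descend). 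The crux is to show that for every edge $\{i,j\}\in M_1\setminus M_C$ at most one of $i,j$ is such an agent, which caps $\sum_{\rho\in C}w^{-}(\rho)$ by $|M_1|-|M_1\cap M_C|$. I would establish this from the invariant that whenever $j$ is first on $i$'s list then $i$ is last on $j$'s list, combined with the fact that lists only shrink: if $i$ proposes to $j$ at some time $t_1$ (so $i$ is last on $j$'s list then, hence every other current agent is better for $j$) and $j$ proposes to $i$ at some time $t_2$ (so $i$ is first on $j$'s list then), then monotone shrinking forces $j$'s list to collapse to $\{i\}$, i.e.\ $\{i,j\}\in M_C$, contradicting $\{i,j\}\in M_1\setminus M_C$. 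With the cap in hand, $2|M_1\cap M_C| + 2\sum_{\rho\in C}w^{-}(\rho)\le 2|M_1|$, and invoking $|M_1|\le|M_C|$ from \cref{lem:rr-not-matched-in-M0-sound} (valid once \cref{rr:not-matched-in-M0} has been applied exhaustively) gives $2|M_1|\le|M_1|+|M_C|$, completing the proof.
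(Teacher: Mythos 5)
Your parts (i) and (ii) coincide with the paper's proof: the same single-elimination claim $|\proposalset_{M_1}\cap\proposalset_{\mathcal{T}'}|=|\proposalset_{M_1}\cap\proposalset_{\mathcal{T}}|+w^+(\rho)-w^-(\rho)$, the same telescoping over a linear extension, the same use of completeness plus \cref{lem:w(p)=-w(-p)} to rewrite $\sum_{\rho\in C}w^+(\rho)$ as $\sum_{\sigma\in R_2}w^+(\sigma)-\sum_{\rho\in C}w^-(\rho)$, and the same algebra via \eqref{eq:distanceByProposePairs}. For part (iii), however, you take a genuinely different and, as far as I can tell, correct route. The paper does not touch $\sum_{\rho\in C}w^-(\rho)$ there at all: it shows that $\proposalset_0$ and the sets $\proposalset^+_\sigma$, $\sigma\in R_2$, are pairwise disjoint (citing Gusfield's Lemma~5.1/Corollary~5.1 for disjointness of the gained sets, and the observation that a gained pair $(e_i,h_{i+1})$ can never be a Phase-1 proposal), so that $|\proposalset_{M_1}\cap\proposalset_0|+\sum_{\sigma\in R_2}w^+(\sigma)=|\proposalset_{M_1}\cap(\proposalset_0\cup\bigcup_{\sigma}\proposalset^+_\sigma)|\le|\proposalset_{M_1}|=2|M_1|$, and then finishes with $|M_1|\le|M_C|$ from \cref{lem:rr-not-matched-in-M0-sound}. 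You instead bound $\sum_{\rho\in C}w^-(\rho)\le|M_1\setminus M_C|$ by a dynamic argument on the elimination process: for an edge $\{i,j\}\in M_1$ the two ordered proposals $(i,j)$ and $(j,i)$ cannot both be lost, because the first-on-one-list/last-on-the-other invariant together with monotone shrinking of lists would force $j$'s list to collapse to $\{i\}$, putting $\{i,j\}$ into $M_C$. This buys a self-contained argument that avoids citing Gusfield's disjointness lemmas, at the cost of two extra verifications you should make explicit: that each ordered pair lies in $\proposalset^-_\rho$ for at most one $\rho\in C$ (true, since eliminating $\rho$ removes $h_i$ from $e_i$'s list permanently), and that a lost pair $(i,j)$ with $\{i,j\}\in M_1$ indeed forces $\{i,j\}\in M_1\setminus M_C$. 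Both approaches rely equally on \cref{rr:not-matched-in-M0} having been applied so that $|M_1|\le|M_C|$ holds.
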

\appendixproof{lem:equivalence}
{
  \begin{proof}
    To show the first statement, let us first observe the following,
    which is similar to the case in the marriage setting (see \cref{app:lem:SM-rotation-weight} and \cref{app:cor:SM-rotation-weights}).
    \begin{claim}\label{claim:SR-rotation-weight}
      Let $\mathcal{T}$ be a preference table and $\rho$ be a rotation exposed in $\mathcal{T}$,
      and let $\mathcal{T}'$ be the preference table obtained from $\mathcal{T}$ by eliminating $\rho$.
      Then, $|\proposalset_{M_1} \cap \proposalset_{\mathcal{T}'}| = |\proposalset_{M_1} \cap \proposalset_{\mathcal{T}}| + w^{+}(\rho) - w^{-}(\rho)$.
    \end{claim}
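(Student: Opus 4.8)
The plan is to prove the claim by first identifying exactly which proposals change when $\rho$ is eliminated, and then performing a short disjoint-union count after intersecting with $\proposalset_{M_1}$. The key structural fact I will establish is that eliminating $\rho$ replaces the joint proposals $\proposalset_\rho^{-}$ of the proposing agents $e_0,\dots,e_{r-1}$ by $\proposalset_\rho^{+}$ and leaves the proposal of every other agent unchanged, so that
\[
  \proposalset_{\mathcal{T}'} = \bigl(\proposalset_{\mathcal{T}} \setminus \proposalset_\rho^{-}\bigr) \cup \proposalset_\rho^{+},
\]
a union that is disjoint because its two parts have disjoint sets of proposing agents (first coordinates).

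To pin down the changed proposals, I would use the defining properties of exposition and elimination. Since $\rho = (e_0,h_0),\dots,(e_{r-1},h_{r-1})$ is exposed in $\mathcal{T}$, each $h_i$ is the first agent on $e_i$'s list in $\mathcal{T}$, hence $(e_i,h_i)\in\proposalset_{\mathcal{T}}$ and so $\proposalset_\rho^{-}\subseteq\proposalset_{\mathcal{T}}$. After eliminating $\rho$, the former second choice $h_{i+1\bmod r}$ becomes the first agent on $e_i$'s list, hence $(e_i,h_{i+1\bmod r})\in\proposalset_{\mathcal{T}'}$ and so $\proposalset_\rho^{+}\subseteq\proposalset_{\mathcal{T}'}$.

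The crux — and the step I expect to be the main obstacle — is to rule out any unintended change to the proposals of agents not named in $\rho$. Each $h_j$ only loses entries from the bottom of its list during elimination, so its top entry, and hence its proposal, is preserved. For an arbitrary agent $p$, its proposal can change only if its current first choice is deleted; the only deletions of a first choice arise from the step ``remove $h_{i+1}$ from $p$'s list'', so $p$ would have to have $h_{i+1}$ as its first choice. Here I invoke the invariant maintained throughout Phase~2 (if $a$ is first on $b$'s list, then $b$ is last on $a$'s list): the unique agent that can have $h_{i+1}$ first is the agent sitting last on $h_{i+1}$'s list, which is exactly $e_{i+1}$, already one of the rotation's proposers. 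Hence no agent outside $\{e_0,\dots,e_{r-1}\}$ alters its proposal; this is precisely the observation recorded just after \cref{def:gain-lost}, and it establishes the displayed set identity.

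Finally I would intersect with $\proposalset_{M_1}$ and count. Using the disjointness of the union together with $\proposalset_\rho^{-}\subseteq\proposalset_{\mathcal{T}}$,
\[
  |\proposalset_{M_1}\cap\proposalset_{\mathcal{T}'}|
  = |\proposalset_{M_1}\cap\proposalset_{\mathcal{T}}|
  - |\proposalset_{M_1}\cap\proposalset_\rho^{-}|
  + |\proposalset_{M_1}\cap\proposalset_\rho^{+}|,
\]
and substituting the definitions $w^{-}(\rho)=|\proposalset_{M_1}\cap\proposalset_\rho^{-}|$ and $w^{+}(\rho)=|\proposalset_{M_1}\cap\proposalset_\rho^{+}|$ yields the claimed identity. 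Once the set identity for $\proposalset_{\mathcal{T}'}$ is in hand, this final counting is routine.
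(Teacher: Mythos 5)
Your proof is correct and follows essentially the same route as the paper's: both establish the set identity $\proposalset_{\mathcal{T}'} = (\proposalset_{\mathcal{T}} \cup \proposalset_{\rho}^{+}) \setminus \proposalset_{\rho}^{-}$ (your disjoint-union form is the same set) and then conclude by intersecting with $\proposalset_{M_1}$ and counting, using $\proposalset_{\rho}^{-}\subseteq\proposalset_{\mathcal{T}}$ and the disjointness of $\proposalset_{\rho}^{+}$ from $\proposalset_{\mathcal{T}}$. The only difference is that you spell out, via the Phase~2 invariant, why agents outside $\{e_0,\dots,e_{r-1}\}$ keep their proposals, a point the paper delegates to the observation following \cref{def:gain-lost}.
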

    \begin{proof}\renewcommand{\qedsymbol}{(of \cref{claim:SR-rotation-weight})~$\diamond$}
      The proof is similar to the one given for \cref{app:lem:SM-rotation-weight}.
      Let $\rho \coloneqq ((e_0, h_0), \cdots, (e_{r-1}, h_{r-1}))$ be a rotation exposed in the table~$\mathcal{T}$.
      In the following, all subscripts~$i+1$ are taken modulo~$r$.
      By the definition of $\mathcal{T}'$ and the definition of proposal sets we have that
      \begin{align}
        \nonumber        \proposalset_{\mathcal{T}'} = &
                               \{(x,y) \in \proposalset_{\mathcal{T}} \mid
                               (x,y) \notin \rho\} \uplus \{(e_i, h_{i+1})\mid 0 \le i \le r-1\}\\
        = & (\proposalset_{\mathcal{T}} \cup \proposalset_{\rho}^{+}) \setminus \proposalset_{\rho}^{-}.\label{eq:Table-Dif}
        \end{align}
      Thus, we prove the statement in the claim by showing the following. %; note that no $(x_i,y_i)$ belongs to $M_{\mathcal{T}'}^B$, $0\le i \le r-1$.
  \begin{align*}
    |\proposalset_{M_1}\cap \proposalset_{\mathcal{T}'}|
    \stackrel{\eqref{eq:Table-Dif}}{=} & |\proposalset_{M_1}\cap
                                         \big( (\proposalset_{\mathcal{T}} \cup \proposalset_{\rho}^{+}) \setminus \proposalset_{\rho}^{-}\big)|\\
    = &  | \big((\proposalset_{M_1}\cap \proposalset_{\mathcal{T}})  \cup ( \proposalset_{M_1} \cap \proposalset_{\rho}^{+})  \big) \setminus (\proposalset_{M_1} \cap \proposalset_{\rho}^{-})|\\
    = &  | (\proposalset_{M_1}\cap \proposalset_{\mathcal{T}})| +w^{+}_{\rho} - w^{-}_{\rho}.
  \end{align*}
  The last equation holds because $\proposalset_{\mathcal{T}}\cap \proposalset^{+}_{\rho} = \emptyset$ and $\proposalset^{-}_{\rho} \subseteq \proposalset_{\mathcal{T}}$ and because of \cref{def:gain-lost}.
\end{proof}
\noindent By applying the above repeatedly, we obtain the following for an (arbitrary) closed
subset of rotations.
\begin{claim}\label{app:cor:SR-rotation-weights}
  If $C'$ is a closed subset of rotations and $\mathcal{T}$ is the preference table obtained by eliminating all rotations from $C'$ on the Phase~1 table,
  then $|\proposalset_{M_1}\cap \proposalset_{\mathcal{T}}| = |\proposalset_{M_1}\cap \proposalset_0| +
  \sum_{\rho\in C'}w^{+}(\rho) - \sum_{\rho\in C'}w^{-}(\rho)$.
\end{claim}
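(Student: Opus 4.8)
The plan is to prove \cref{app:cor:SR-rotation-weights} by induction on $|C'|$, peeling off one maximal rotation at a time and invoking the single-step identity of \cref{claim:SR-rotation-weight} at each elimination. For the base case $C' = \emptyset$, the table $\mathcal{T}$ is exactly the singleton-eliminated Phase~1 table, whose proposal set is $\proposalset_0$ by definition; both empty sums vanish, so the identity holds trivially.

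For the inductive step I would assume $C' \neq \emptyset$ and fix a rotation $\rho \in C'$ that is maximal with respect to $\predr$ inside $C'$. Since $C'$ is closed, the set $C'' \coloneqq C' \setminus \{\rho\}$ is again closed: deleting a maximal element cannot violate the requirement that every predecessor of a retained rotation is retained. Letting $\mathcal{T}''$ be the table obtained by eliminating all of $C''$ from the singleton-eliminated Phase~1 table, the induction hypothesis gives
\begin{equation*}
|\proposalset_{M_1} \cap \proposalset_{\mathcal{T}''}| = |\proposalset_{M_1} \cap \proposalset_0| + \sum_{\sigma \in C''} w^{+}(\sigma) - \sum_{\sigma \in C''} w^{-}(\sigma).
\end{equation*}
The key remaining point is that $\rho$ is exposed in $\mathcal{T}''$, so that \cref{claim:SR-rotation-weight} applies to the final elimination. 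By the definition of the precedence relation $\pred$, a rotation becomes exposed precisely once all of its predecessors have been eliminated; every predecessor $\sigma \pred \rho$ lies in $C'$ by closedness and, as $\rho$ is maximal, actually in $C''$. Hence after eliminating $C''$ all predecessors of $\rho$ are gone, $\rho$ is exposed in $\mathcal{T}''$, and eliminating it produces $\mathcal{T}$. Applying \cref{claim:SR-rotation-weight} yields $|\proposalset_{M_1} \cap \proposalset_{\mathcal{T}}| = |\proposalset_{M_1} \cap \proposalset_{\mathcal{T}''}| + w^{+}(\rho) - w^{-}(\rho)$, and substituting the hypothesis folds $\rho$ back into the two sums to give the claim. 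This step is legitimate because $w^{+}(\rho)$ and $w^{-}(\rho)$ depend only on $\rho$ and $M_1$, not on the table from which $\rho$ is removed, as observed after \cref{def:gain-lost}.

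The hard part will be the exposedness argument: one must guarantee that a linear extension of $\predr$ restricted to $C'$ is a valid elimination order in which each rotation is exposed exactly when it is eliminated, regardless of how incomparable rotations are ordered. This is exactly the content of the rotation-poset theory of \citet{gusfield1988structure}, where $\predr$ was defined so that closed subsets correspond to attainable intermediate tables; I would therefore discharge this obstacle by appealing to that structural theory rather than re-deriving the correspondence, keeping the induction itself entirely routine.
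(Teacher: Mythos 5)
Your proof is correct and takes essentially the same route as the paper: the paper obtains this claim simply by "applying the above repeatedly," i.e., by iterating the single-step identity of \cref{claim:SR-rotation-weight}, which is exactly the induction you write out. Your explicit base case and your argument that a maximal element of a closed set is exposed once the rest has been eliminated (discharged via Gusfield's structural results, on which the paper also implicitly relies) merely make precise what the paper leaves unstated.
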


\noindent Now, we are ready to show our first statement.
To this end, let $C$ and $M_C$ be as defined in the lemma. % be a complete and closed subset of rotations and let
%$M_c$ be the stable matching associated with $C$.
%$\proposalset_{M_C}$ be the proposal set for the stable matching~$M_C$ corresponding to~$C$.
%According to the definition of~$w(p)$, we have that
%\[
%\left|\proposalset_{M_1} \cap M_S^B\right|=\left|\proposalset_{M_1} \cap M_0^B\right|+\sum_{p \in C} w(p).
%\]
%\todo[inline]{Hua: This equation is unclear why it should hold.}
By \cref{app:cor:SR-rotation-weights}, it follows that
\begin{align*} %\label{eq:interFromWeightBasic}
  |\proposalset_{M_1}\cap \proposalset_{M_C}|
  = &  |\proposalset_{M_1}\cap \proposalset_0| + \sum_{\rho\in C}w^{+}(\rho) - \sum_{\rho\in C}w^{-}(\rho)\\
   = & |\proposalset_{M_1}\cap \proposalset_0| + \big(\sum_{\sigma\in R_2}w^{+}(\sigma) - \sum_{\sigma\in R_2\setminus C}w^{+}(\sigma)\big) - \sum_{\rho\in C}w^{-}(\rho)\\
  \stackrel{\text{\scriptsize Lemma~\ref{lem:w(p)=-w(-p)}}}= & |\proposalset_{M_1}\cap \proposalset_0| + \big( \sum_{\sigma\in R_2}w^{+}(\sigma) - \sum_{\sigma\in R_2\setminus C}w^{-}(\neg\sigma)\big) - \sum_{\rho\in C}w^{-}(\rho)\\
  \stackrel{\text{\small $C$ is complete}}{=} & |\proposalset_{M_1}\cap \proposalset_0| + \sum_{\sigma\in R_2}w^{+}(\rho) - 2\cdot \sum_{\rho\in C}w^{-}(\rho).
\end{align*}
This completes the proof for the first statement.
% It follows from \eqref{eq:distanceByProposePairs} that
% \begin{equation}\label{eq:distFromWeightBasic}
%   \dist(M_1,M_S)
%   = 2n - \left|\proposalset_{M_1} \cap M_S^B\right|
%   = 2n - \left(\left| \proposalset_{M_1} \cap \proposalset_0 \right| + \sum_{p \in S} w(p) \right).
% \end{equation}

% Let~$T=R \setminus S$ be the set of remaining (dual) rotations.
% Since~$S$ is complete, we have that~$p \in S \Leftrightarrow \neg p \in T$.
% Together with~$w^+(p)=w^-(\neg p)$ (\cref{lem:w(p)=-w(-p)}), we have that~$\sum_{p \in T}w^+(p)=\sum_{p \in S}w^-(p)$.
% Thus, we obtain
% \begin{equation*}
% \begin{aligned}
% \sum_{p \in S} w(p)
% &= \sum_{p \in S} w^+(p) - \sum_{p \in S} w^-(p)\\
% &= \sum_{p \in S} w^+(p) + \left(\sum_{p \in T}w^+(p)-\sum_{p \in S}w^-(p)\right) - \sum_{p \in S} w^-(p)\\
% &= \left(\sum_{p \in S} w^+(p) + \sum_{p \in T}w^+(p)\right) - 2\cdot\sum_{p \in S}w^-(p)\\
% &= \sum_{p \in R} w^+(p)- 2\cdot\sum_{p \in S}w^-(p) \,.
% \end{aligned}
% \end{equation*}

% Plugging this into \eqref{eq:distFromWeightBasic} we obtain
% \begin{align*}
% \dist(M_1,M_S)
% &= 2n - \left(\left| \proposalset_{M_1} \cap \proposalset_0 \right| + \sum_{p \in S} w(p) \right)\\
% &= 2n - \left(\left| \proposalset_{M_1} \cap \proposalset_0 \right| + \sum_{p \in R} w^+(p) - 2\cdot\sum_{p \in S}w^-(p) \right)\\
% &= 2n - \left(\left| \proposalset_{M_1} \cap \proposalset_0 \right| + \sum_{p \in R} w^+(p) \right) + 2\cdot\sum_{p \in S}w^-(p).
% \end{align*}
% The lemma follows using \eqref{eq:distanceShiftForSubset}.

\noindent The second statement follows directly from \eqref{eq:distanceByProposePairs}, \eqref{eq:intersectionByProposePairs}, and the first statement.

As for the last statement, we  first observe that for each two distinct rotations~$\sigma$ and $\sigma'$ from $R_2$ it holds that $\proposalset^{+}_{\sigma} \cap \proposalset^{+}_{\sigma'} = \emptyset$.
  To see this, let $\sigma\coloneqq ((e_0,h_0), \ldots, (e_{r-1}, h_{r-1}))$ which is exposed in some preference table~$\mathcal{T}$.
  Then, by definition, it holds that $\proposalset^{+}_{\sigma} = \{(e_i, h_{i+1}) \mid 0\le i \le r-1\}$.
  However, by \cite[Lemma 5.1, Corollary 5.1]{gusfield1988structure},
  for each~$i \in \{0,\ldots, r-1\}$, no other other rotation~$\rho$ exists such that
  $(e_i, h_{i+1}) \in \proposalset^{+}_{\rho}$.
  % do not intersect because each pair of agents appears in at most one rotation~\cite[Lemma~4.2.5]{GusfieldIrving1989}.

  Next, we observe that $\proposalset_0\cap \proposalset^{+}_{\sigma}=\emptyset$.
  To see this, let $\sigma\coloneqq ((e_0,h_0), \ldots, (e_{r-1}, h_{r-1}))$.
  Then, $\proposalset^{+}_{\sigma} = \{(e_i,h_{i+1}) \mid 0\le i \le r-1\}$ ($i+1$ are taken modulo $r$).
  By the definition of rotations, it follows that $e_i$ prefers $h_i$ to $h_{i+1}$, implying that $h_{i+1}$ is \emph{not} the first agent in the preference list of $e_i$ right after Phase~1. That is, $(e_i, h_{i+1}) \notin \proposalset_{0}$.
  Thus, we have that
  \begin{align}
    |\proposalset_{M_1}\cap \proposalset_0| + \sum_{\sigma \in R_2}w^{+}(\sigma) = |\proposalset_{M_1}\cap \proposalset_0| + \sum_{\sigma \in R_2}|\proposalset_{M_1}\cap \proposalset_{\sigma}^+| =  & |\proposalset_{M_1}\cap \big( \proposalset_0 \cup \bigcup_{\sigma \in R_2} \proposalset^{+}_{\sigma}\big)|  \nonumber\\
    \le & 2\cdot |M_1|.   \label{eq:partial}
  \end{align}
  Now, to show the last statement, it suffices to show that $|M_1| \le |M_C|$.
  This is true by \cref{lem:rr-not-matched-in-M0-sound}:\\
  $|\proposalset_{M_1}\cap \proposalset_0| + \sum_{\sigma \in R_2}w^{+}(\sigma) - |M_1| - |M_C|
    \stackrel{\eqref{eq:partial}}{\le}  2\cdot |M_1| - |M_1| - |M_C|
    \stackrel{\text{\scriptsize Lemma~\ref{lem:rr-not-matched-in-M0-sound}}}{\le} 0$.
\end{proof}
}

This allows us to reduce the given instance (in polynomial time) to an instance of \WCFCS as follows; see \cref{ex:red-WCS} for an illustration.

\begin{construction}\label{cons:ISR-noties->WCFCS}
Finally, we arrive at the following instance of \WCFCS: % after we have applied \cref{rr:not-matched-in-M0}:
\begin{enumerate}[(i)]
  \item Apply \cref{rr:not-matched-in-M0} in $O(n^2)$~time, and compute in $O(n^3\cdot \log{n})$ the rotation poset~$(R_2, \predr)$ for~$\mathcal{P}_2$~(see \cref{lem:correspondence+time}).
  \item  Compute~$\proposalset_0$ for the profile $\mathcal{P}_2$.
  Let $R\coloneqq R_2$ be the set of all dual rotations; note that $|R_2| \le {n \choose 2}$ (\cref{lem:bound-rotation-set}),
  \item Let $G$ be a graph on $R$ in which two elements of $R$ are adjacent if they form a dual pair of rotations (consequently, $G$ is a union of $\ell=|R|/2$ disjoint edges).
  \item The weight function~$w$ is defined by $w^-$.
  \item The budget~$b$ on the sum of weights is $b\coloneqq \frac{|\proposalset_{M_1}\cap \proposalset_0| + \sum_{\sigma \in R_2}w^{+}(\sigma) - |M_1| - |\proposalset_0|/2 + k}{2}$. %, which is upper-bounded by $k/2$ (see \cref{lem:upper-bounde-k}).
\end{enumerate}
\end{construction}
Note that $b$ is derived from  \cref{lem:equivalence}(\ref{lem:weight-bound-equivalence}) such that we are searching for a complete and closed subset~$C$ of rotations whose sum of weights is bounded by $b$.
To see this, since$|\proposalset_{M}| = |\proposalset_0|/2$ (\cref{lem:rr-not-matched-in-M0-sound}), the budget $b$ is in fact equal to
$\frac{|\proposalset_{M_1}\cap \proposalset_0| + \sum_{\sigma \in R_2}w^{+}(\sigma) - |M_1| - |M| + k}{2}$, where $M$ is an arbitrary stable matching for $\mathcal{P}_2$.
Thus, by \cref{lem:equivalence}(\ref{lem:weight-bound-equivalence}), 
% by \cref{lem:equivalence}
the budget~$b$ is at most $k/2$.
%Moreover, for each stable matching~$M$ it holds that $|\proposalset_{M}| = |\proposalset_0|/2$ (\cref{lem:rr-not-matched-in-M0-sound}).
%The latter follows from \cref{prop:IncompleteNoTies} and the fact that every agent with non-empty list in the Phase~1 table must be matched by a stable matching.
Note also that the sum of the weights $w^-(\rho)$ is upper-bounded by $n$ and thus the reduction presented above is a polynomial (many-to-one) reduction.
%The proof of the following lemma is the subject of the next section.

\begin{figure}[t]
      \centering
      \begin{tikzpicture}[>=stealth]
      [scale=1.0,auto]
      \node[nodest] (r1) at (0,0) {{$\rho_1$}};
      \node[nodest] (r2) at (2,0) {{$\rho_2$}};
      \node[nodest] (r3) at (0,-1) {$\rho_3$};
      \node[nodest] (r4) at (0,-2) {$\rho_4$};
      \node[nodest] (r5) at (2,-2) {{$\rho_5$}};
      \node[nodest] (r6) at (0,-3) {$\rho_6$};
      \draw[->,thick] (r1) -- (r3);
      \draw[->,thick] (r3) -- (r4);
      \draw[->,thick] (r4) -- (r6);
      \draw[->,thick] (r3) -- (r5);
      \draw[->,thick] (r2) -- (r5);
      % \node[left = 0pt of n1] {$G\colon$};

      % \foreach \from/\to in {n1/n2,n2/n3,n2/n4,n3/n4}
      % \draw (\from) -- (\to);
      \end{tikzpicture}
      \caption{A diagram for the partial order~$\predr$ on the rotation set~$R_2$ for the profile given \cref{example:tablesRotationsElimination}, as discussed in \cref{ex:red-WCS}.}
      \label{fig:diagram-for-poset}
\end{figure}
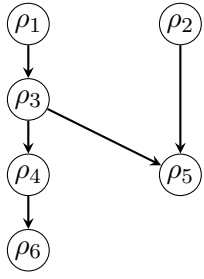

\begin{example}\label{ex:red-WCS}
  Consider the profile given in \cref{example:tablesRotationsElimination}.
Let~$M_1=\{\{1,7\},\{2,3\},\{4,6\},\{5,8\}\}$ and~$k=4$.
To find a stable matching~$M$ such that~$\dist(M_1,M) \le 4$, we first compute the rotation poset~$(R=\{\rho_1,\rho_2,\rho_3,\rho_4,\rho_5,\rho_6\}, \predr)$.
The partial order~$\predr$ is shown in \cref{fig:diagram-for-poset}.

Since~$\rho_1$ and~$\rho_3$ are singleton rotations, we can first eliminate them and get the set of all dual rotations~$R_2=\{\rho_2,\rho_4,\rho_5,\rho_6\}$, where~$\rho_2=\neg \rho_6$ and $\rho_4=\neg \rho_5$.
After the elimination of~$\rho_1$ and~$\rho_3$, we obtain the following table.

\begin{tabular}{l|lllllll}
    1 & 6 & 5 & 3 & 4 \\
    2 & 3 \\
    3 & 2 \\
    4 & 1 & 7 \\
    5 & 7 & 1 & 8 \\
    6 & 8 & 4 & 5 & 1 \\
    7 & 4 & 3 & 5 \\
    8 & 5 & 6 \\
\end{tabular}

Now the proposal set $\proposalset_0=\{(1,6),(2,3),(3,2),(4,1),(5,7),(6,8),(7,4),(8,5)\}$.
Since $M_1=\{\{1,7\},\{2,3\},\{4,6\},\{5,8\}\}$, we have that $\proposalset_{M_1}=\{(1,7),(7,1),(2,3),(3,2),(4,6),(6,4),(5,8),(8,5)\}$.
Then we can define weight functions~$w^+$ and~$w^-$.
For example, for rotation~$\rho_4=((1,6),(8,5))$:

\begin{align*}
\proposalset_{\rho_4}^+=&\{(1,5),(8,6)\}, \text{ and }  \\
\proposalset_{\rho_4}^-=&\{(1,6),(8,5)\}.
\end{align*}

Hence,
\begin{align*}
w^+(\rho_4)=&|\proposalset_{\rho_4}^+ \cap \proposalset_{M_1}|=|\{\emptyset\}|=0, \text{ and } \\
w^-(\rho_4)=&|\proposalset_{\rho_4}^- \cap \proposalset_{M_1}|=|\{(8,5)\}|=1.
\end{align*}
Similarly, we can compute weights for other rotations in~$R_2$:

\[
w^+(\rho_2)=0, w^-(\rho_2)=0, w^+(\rho_5)=1, w^-(\rho_5)=0, w^+(\rho_6)=0, w^-(\rho_6)=0.
\]

Now we can construct the instance~$(R=R_2,G=(R_2,E),w,\ell,b)$ of \WCFCS.
Edge set~$E$ consists of two edges~$\{\rho_2,\rho_6\}$ and $\{\rho_4,\rho_5\}$ since they are dual pairs of rotations.
The weight function~$w$ is defined as:
\[
w(\rho_2)=w^-(\rho_2)=0,w(\rho_4)=w^-(\rho_4)=1,w(\rho_5)=w^-(\rho_5)=0,w(\rho_6)=w^-(\rho_6)=0.
\]

Since~$|R_2|=4$, we have~$\ell=|R_2|/2=2$.
The budget~$b$ is given by:

\begin{align*}
b=& (|\proposalset_{M_1}\cap \proposalset_0| + \sum_{\sigma \in R_2}w^{+}(\sigma) - |M_1| - |\proposalset_0|/2 + k)/2 \\
=& (3+1-4-4+4)/2 \\
=& 0
\end{align*}

Now the task is to find a closed subset~$C \subseteq R$ of size~$\ell=2$ which is independent in~$G$ such that~$\sum_{c \in C} w(c)\le 0$.
% There are three subsets of size 2 which are closed and independent in~$G$ (see \cref{fig:G-of-instance}).
It is easy to see that~$C=\{\rho_2,\rho_5\}$ is the only solution such that~$\sum_{c \in C} w(c)\le 0$.
By eliminating~$\rho_2$ and~$\rho_5$, we get matching~$M_{\{\rho_1,\rho_2,\rho_3,\rho_5\}}=\{(1,6),(2,3),(7,4),(8,5)\}$.
It is easy to check that~$\dist(M_1,M_{\{\rho_1,\rho_2,\rho_3,\rho_5\}})=4$.
\end{example}

\begin{figure}[t]
      \centering
      \begin{tikzpicture}[>=stealth]
      [scale=1.0,auto]

      \node[nodest] (r2) at (2,0) {$\rho_2$};
      \node[nodest] (r4) at (0,0) {$\rho_4$};
      \node[nodest] (r5) at (2,-2) {$\rho_5$};
      \node[nodest] (r6) at (0,-2) {$\rho_6$};
      \draw[thick] (r2) -- (r6);
      \draw[thick] (r4) -- (r5);

      \node (w4) [left=0.2cm of r4] {1};
      \node (w6) [left=0.2cm of r6] {0};
      \node (w2) [right=0.2cm of r2] {0};
      \node (w5) [right=0.2cm of r5] {0};
      \end{tikzpicture}
      \caption{The directed graph~$G$ on the rotation set~$R_2$ for as discussed in \cref{ex:red-WCS}, which shows the precedence relation between the rotations. The weights of the weight function~$w$ are depicted next to the vertices (i.e., rotations).}
      \label{fig:G-of-instance}
\end{figure}
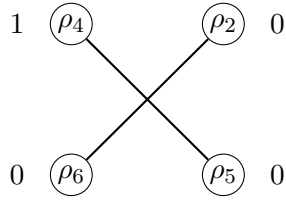

\paragraph{Solving \WCFCS when the conflict graph consists of $\ell$~disjoint cliques.}
Now, to prove \cref{thm:ISR FPT without ties}, we only need to show the following; recall that the budget~$b$, as defined in the construction, is at most $k/2$.
\begin{lemma}
\label{lem:algorithm}
If $G$ consists of exactly $\ell$ cliques, then \WCFCS can be solved in~$O((\Delta(G) + 1))^{b} \cdot |R|^2)$ time.
\end{lemma}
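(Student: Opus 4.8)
The plan is to solve the problem by a bounded search tree whose depth is controlled by the budget~$b$ and whose branching factor is controlled by the clique sizes. First I would record three structural facts. Since $G$ is a disjoint union of $\ell$~cliques, every independent set meets each clique at most once, so an independent set of size exactly~$\ell$ is precisely a \emph{transversal}: it selects exactly one vertex from each clique. Since every vertex of a clique of size~$s$ has degree $s-1$, each clique has at most $\Delta(G)+1$ vertices, so each transversal choice ranges over at most $\Delta(G)+1$ options. Finally, because $w$ takes values in $\mathds{N}$ and the budget constraint is $\sum_{c \in C} w(c) \le b$, any feasible solution~$C$ contains at most $b$ elements of positive weight.

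The algorithm maintains a partial selection: a set $F \subseteq R$ that is closed under $\predr$, that contains at most one vertex of each clique (its committed representative), together with the quantity $\mathrm{spent} = \sum_{x \in F} w(x)$. At each node I first perform deterministic \emph{closure propagation}: whenever some $x \in F$ has a predecessor $y \predr x$ with $y \notin F$, closure forces $y \in F$; if $y$ lies in a clique that already has a different committed representative the branch is discarded, otherwise $y$ is committed and $\mathrm{spent}$ is updated, discarding the branch as soon as $\mathrm{spent} > b$. This runs in $O(|R|^2)$ time and never branches. After propagation I try to complete~$F$ to a full transversal using only weight-$0$ representatives for the still-undecided cliques; I reduce this \emph{cheap-completion} feasibility question to a polynomial-time-solvable one (in the relevant case $\Delta(G)=1$ the cliques are the dual pairs and this is exactly an implication-closure / $2$-\textsc{Sat} computation, where committing a representative forces the representative of every clique meeting its down-closure). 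If cheap completion succeeds, its total weight equals $\mathrm{spent} \le b$ and we report success.

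If cheap completion fails, the feasibility routine pinpoints an undecided clique~$Q_i$ such that \emph{every} weight-$0$ representative of~$Q_i$, once committed, forces through closure the commitment of some positive-weight element. I then branch over the at most $\Delta(G)+1$ vertices of~$Q_i$, committing each in turn and recursing after propagation. The key accounting is that in every child $\mathrm{spent}$ increases by at least one: a positive-weight representative spends directly, while a weight-$0$ representative of~$Q_i$ forces a positive-weight element by the diagnosis above. Hence along any root-to-leaf path there are at most~$b$ branching nodes, each with at most $\Delta(G)+1$ children, so the search tree has $O((\Delta(G)+1)^b)$ nodes; multiplying by the $O(|R|^2)$ work per node gives the claimed bound. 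Correctness follows because propagation only commits forced elements, while the branching is exhaustive over the representatives of a clique that must occur in any solution.

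The step I expect to be the main obstacle is the cheap-completion subroutine and, in particular, localizing its failure to a single clique~$Q_i$ all of whose weight-$0$ representatives provably force a positive-weight commitment. For the dual-pair case ($\Delta(G)=1$) this is clean: the closure relation turns the choice of one rotation from each pair into an implication instance, forbidding positive-weight values are unit clauses, and unit propagation to a contradiction exhibits the culprit clique. For cliques of size larger than two the cheap-completion problem becomes a constraint-satisfaction instance, and the careful point is to argue that infeasibility can still be charged to a single clique (so that the branching factor stays $\Delta(G)+1$ rather than $|R|\cdot(\Delta(G)+1)$) and that each branch strictly consumes budget; the monotonicity of closure forcing---committing more elements never relaxes a forced commitment---is what I would lean on to make this charging argument go through.
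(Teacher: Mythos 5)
Your high-level plan coincides with the paper's: a depth-$b$ search tree with branching factor $\Delta(G)+1$, where each node commits one representative per clique together with its predecessor-closure $T^\uparrow_p=\{q \mid q \predr p\}$, and each branching step is charged against the budget. The crucial difference is the branching criterion, and that is where your argument has a genuine gap. You branch on a clique diagnosed by a failed ``cheap completion'' check, and your budget accounting requires that \emph{every} weight-$0$ representative of the diagnosed clique forces, through closure, some positive-weight element. This localization can fail: take three size-$2$ cliques $A=\{a_0,a_1\}$, $B=\{b_1,b_2\}$, $C=\{c_0,c_1\}$ with $a_1,c_1$ of positive weight and all other elements of weight $0$, and with $b_1\predr a_0$ and $b_2\predr c_0$. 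Cheap completion fails ($a_0$ forces $b_1$, $c_0$ forces $b_2$, conflict in $B$), yet no clique has the property that all its weight-$0$ representatives force a positive-weight element under the predecessor-closure you propagate. For size-$2$ cliques you can rescue this by closing under contrapositives as well (excluding one element of a pair forces the other), which is exactly the $2$-\textsc{Sat} machinery you invoke; but for cliques of size three or more, excluding one element forces nothing specific, so both the single-clique localization and the ``spent increases in every child'' invariant break down. Since the lemma is stated for arbitrary cliques, this is a real hole, not just a technicality — and it is precisely the obstacle you flag yourself without closing it.

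The paper sidesteps the issue with a coarser but self-contained criterion. After reduction rules that (i) delete the successor-closure $T^\downarrow_p$ of any $p$ with $w(T^\uparrow_p)>b$, (ii) auto-select the representative of any single-vertex clique, and (iii) auto-select forced predecessors of whole cliques, it branches \emph{only} on cliques $K$ in which $w(T^\uparrow_p)>0$ for every $p\in K$; then each child decreases the budget by at least one directly, giving depth $b$ with no feasibility subroutine and no charging argument. All remaining cliques — those containing some $p$ with $w(T^\uparrow_p)=0$ — are deferred to a final phase in which such zero-weight representatives are picked greedily at zero cost. (That final greedy phase is itself stated rather tersely in the paper, and in the instances actually produced by \cref{cons:ISR-noties->WCFCS} the conflict graph is a perfect matching, where everything is clean.) If you want to keep your route, you should either restrict the statement to $\Delta(G)=1$ and carry out the $2$-\textsc{Sat} argument in full, or replace your diagnosed-clique branching by the paper's ``all representatives have positive closure-weight'' criterion, which makes the budget decrease immediate and removes the need for the cheap-completion oracle altogether.
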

\appendixproof{lem:algorithm}{
\begin{proof}
  % \subsection{FPT Algorithm for \WCFCS with \texorpdfstring{$\ell$}{l}-Cliques}
  We present an algorithm for  the special case of \WCFCS when the conflict graph $G$ is a union of exactly $\ell$ cliques (see \cref{alg:1}).

First we introduce some notation.
For every element~$p \in R$, let
\[
T^\uparrow_p=\{q \in R \mid q \predr p\}
\qquad\qquad \textrm{and} \qquad\qquad
T^\downarrow_p= \{q \in R \mid p \predr q\}
\]
be the set of all predecessors and the set of all ancestors of~$p$ (including $p$ itself).
For a set $P \subseteq R$ we define $T^\uparrow_P$ as the union $\bigcup_{p \in P} T^\uparrow_p$; the set $T^\downarrow_P$ is defined analogously.

Before we continue with the proof, we observe the following.
\begin{observation}
\label{obs:T_p in S and T_-p' not in S}
Let~$S$ be a solution to \WCFCS.
If~$p \in S$, then~$T^\uparrow_p \subseteq S$ and~\mbox{$T^\downarrow_{N_G(p)} \cap S = \emptyset$}.
\end{observation}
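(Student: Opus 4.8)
The plan is to read off both conclusions directly from the two defining properties of a solution $S$ to \WCFCS: namely that $S$ is closed with respect to the partial order~$\predr$ and that $S$ is independent in the conflict graph~$G$. Each of the two claimed statements will follow from exactly one of these properties, so no additional machinery is needed. The only point to keep in mind is that, by definition, both $T^\uparrow_p$ and $T^\downarrow_p$ include~$p$ itself, and that closedness propagates membership \emph{downward} along~$\predr$ (predecessors of an element in~$S$ stay in~$S$).

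First I would establish $T^\uparrow_p \subseteq S$. By definition $T^\uparrow_p = \{q \in R \mid q \predr p\}$, so I simply take an arbitrary $q \in T^\uparrow_p$, which means $q \predr p$. Since $p \in S$ and $S$ is closed with respect to~$\predr$, the defining property of a closed set immediately yields $q \in S$. As $q$ was arbitrary, this gives $T^\uparrow_p \subseteq S$.

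Second, I would prove the disjointness $T^\downarrow_{N_G(p)} \cap S = \emptyset$ by contradiction. Suppose some $r$ lies in both sets. By definition $T^\downarrow_{N_G(p)} = \bigcup_{q \in N_G(p)} T^\downarrow_q$, so there is a neighbour $q \in N_G(p)$ of~$p$ with $r \in T^\downarrow_q$, i.e.\ $q \predr r$. Because $r \in S$ and $S$ is closed, closedness forces $q \in S$. But now $p$ and $q$ are both in~$S$ and are adjacent in~$G$ (as $q \in N_G(p)$), contradicting the independence of~$S$ in the conflict graph. Hence no such $r$ exists, and $T^\downarrow_{N_G(p)} \cap S = \emptyset$.

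There is essentially no technical obstacle here; the observation is a bookkeeping consequence of unfolding the definitions of ``closed'' and ``independent.'' The one thing to state carefully is the chain in the second part, where closedness is applied to the successor relation $q \predr r$ to pull the neighbour~$q$ into~$S$ before invoking independence. This observation is then exactly the structural fact needed so that branching on whether $p \in S$ can be done while correctly forcing $T^\uparrow_p$ in and ruling $T^\downarrow_{N_G(p)}$ out, which is what the subsequent algorithm for \WCFCS relies on.
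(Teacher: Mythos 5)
Your argument is correct and is essentially the paper's own proof, just spelled out in more detail: the first inclusion is the definition of closedness applied to $p$, and the disjointness follows because any element of $T^\downarrow_{N_G(p)}$ in $S$ would, by closedness, pull its neighbour $q \in N_G(p)$ into $S$ and violate independence. Nothing is missing.
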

% \appendixproof{obs:T_p in S and T_-p' not in S}{
\begin{proof}   \renewcommand{\qedsymbol}{(of \cref{obs:T_p in S and T_-p' not in S})~$\diamond$}
When~$p \in S$, since~$S$ is closed, we get~$T^\uparrow_p \subseteq S$.
Now, due to the conflicts $S \cap N_G(p) = \emptyset$ and consequently no element of $T^\downarrow_{N_G(p)}$ can be a part of $S$.
%Since~$S$ is complete and~$T^\uparrow_p \subseteq S$, then~$T^\downarrow_{\neg p} \cap S = \emptyset$.
\end{proof}
% }

According to \cref{obs:T_p in S and T_-p' not in S}, when we decide to include~$p$ into~$S$, we have to also include all elements in~$T^\uparrow_p$ into~$S$ and delete all elements in~$T^\uparrow_p$ and $T^\downarrow_{N_G(p)}$ from~$R$.
We extend the weight of a single element to the set~$T \subseteq R$ of elements in the natural way, that is, we set $w(T) \coloneqq \sum_{p \in T} w(p)$.

%\subparagraph{Data Reduction Rules.}
To upper-bound the weights of the sets $T^\uparrow_p$ from above, we present the following data reduction rules.

\begin{rrule}\label{rr:largeWeights}
    If $w(T^\uparrow_p) > b$, then we delete $T^\downarrow_p$ from $R$.
\end{rrule}
\begin{rrule}\label{rr:singleElementClique}
  If \cref{rr:largeWeights} is not applicable and there exists a clique with exactly one vertex~$p$, then we add $T^\uparrow_p$ to $S$ and decrease $b$ by its weight $w(T^\uparrow_p)$.
\end{rrule}
\begin{rrule}\label{rr:nonemptyIntersection}
  Let $K$ be a clique in $G$.
  If $p \in T^\uparrow_K$, then we add $T^\uparrow_p$ to $S$ and decrease~$b$ by its weight $w(T^\uparrow_p)$.
\end{rrule}

In fact it follows from the results of Gusfield~\cite{gusfield1988structure} that \cref{rr:nonemptyIntersection} is never used in any \WCFCS instance resulting from the construction we just defined.
% \begin{comment}%%%% keep me for long version
% \begin{observation}[\cite{gusfield1988structure}]
% \label{obs:T_p T_-p'}
% It holds that $T^\uparrow_p \cap T^\downarrow_{\neg p}= \emptyset$ and $p' \in T^\uparrow_p$ if and only if $\neg p' \in T^\downarrow_{\neg p}$.
% \end{observation}
% \end{comment}

\begin{claim}\label{lem:rrSoundness}
   \cref{rr:largeWeights,rr:singleElementClique,rr:nonemptyIntersection} are sound.
   Furthermore, \cref{rr:largeWeights,rr:singleElementClique,rr:nonemptyIntersection} can be implemented to run in $O(|R|^2)$ time.
\end{claim}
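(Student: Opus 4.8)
The plan is to derive all three soundness proofs from a single structural fact together with \cref{obs:T_p in S and T_-p' not in S}. Since $G$ is a union of $\ell$ disjoint cliques and every solution $C$ is an independent set with $|C| = \ell$, the set $C$ must contain \emph{exactly one} vertex of each clique. This, combined with closedness, is what lets me declare certain rotations forced out of, or forced into, every solution.

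First I would handle \cref{rr:largeWeights}. I claim that no solution intersects $T^\downarrow_p$: if a solution $C$ contained some $q$ with $p \predr q$, then closedness gives $p \in C$ and hence $T^\uparrow_p \subseteq C$, so $\sum_{c \in C} w(c) \ge w(T^\uparrow_p) > b$, a contradiction. Deleting $T^\downarrow_p$ therefore discards no solution; it creates none either, and I would check that $R \setminus T^\downarrow_p$ stays downward-closed (if $q' \predr q$ and $q \notin T^\downarrow_p$ then $q' \notin T^\downarrow_p$), so that closed subsets correspond in both directions. For \cref{rr:singleElementClique}, the ``one vertex per clique'' fact forces the unique vertex $p$ of a singleton clique into every solution, and closedness then forces $T^\uparrow_p$; committing $T^\uparrow_p$ to $S$ is thus correct, and inapplicability of \cref{rr:largeWeights} guarantees $w(T^\uparrow_p) \le b$, keeping the updated budget non-negative. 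The same forced-inclusion argument drives \cref{rr:nonemptyIntersection}: if $p$ precedes every vertex of a clique $K$, then whichever vertex of $K$ a solution selects lies in $T^\downarrow_p$, so closedness again forces $p$ and $T^\uparrow_p$ may be committed.

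For the running time I would precompute, once, all up-sets $T^\uparrow_p$ and down-sets $T^\downarrow_p$ by reachability in the precedence DAG, storing them as bit-vectors. Given these, evaluating $w(T^\uparrow_p)$ for all $p$ costs $O(|R|^2)$, the trigger tests of all three rules reduce to scanning the (at most $|R|$) cliques against the stored sets, and each deletion or commit touches $O(|R|)$ elements and updates $O(|R|)$ stored sets, for a total of $O(|R|^2)$ per application.

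The step I expect to be the main obstacle is not any single inequality but making the informal ``add $T^\uparrow_p$ to $S$, decrease $b$'' fully precise so that the equivalence composes across repeated applications. In particular I would need to argue that committing $T^\uparrow_p$ leaves a legitimate instance: the residual poset is still downward-closed, the remaining graph is still a disjoint union of cliques with required solution size equal to the number of cliques, and—critically—that if $T^\uparrow_p$ ever contains two vertices of one clique, the resulting forced conflict correctly signals a no-instance rather than silently corrupting the invariant. Verifying the exact trigger condition of \cref{rr:nonemptyIntersection} (that $p$ is forced for \emph{every} clique-vertex choice, not merely one) is the delicate point on which soundness of that rule hinges.
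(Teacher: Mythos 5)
Your proposal is correct and follows essentially the same route as the paper: soundness of \cref{rr:singleElementClique,rr:nonemptyIntersection} rests on the fact that a size-$\ell$ independent set in a disjoint union of $\ell$ cliques must contain exactly one vertex of each clique (combined with closedness, which forces the whole up-set $T^\uparrow_p$), soundness of \cref{rr:largeWeights} rests on the observation that no solution can contain a set of weight exceeding $b$, and the $O(|R|^2)$ bound comes from maintaining the stored up-/down-sets and weights after each application. The one point worth noting is that the worry you flag about the trigger of \cref{rr:nonemptyIntersection} is genuine and the paper does not resolve it: with $T^\uparrow_K$ defined as the \emph{union} $\bigcup_{q\in K}T^\uparrow_q$ the trigger is satisfied by every $p$ (since $p\in T^\uparrow_p$) and forcing $p$ into every solution really requires the intersection reading you describe; the paper sidesteps this by remarking that this rule is never invoked on instances arising from its construction.
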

% \appendixproof{lem:rrSoundness}{
\begin{proof}    \renewcommand{\qedsymbol}{(of \cref{lem:rrSoundness})~$\diamond$}
  Note that we have to take at least one vertex from each clique in $G$ into~$S$ since we are about to take $\ell$ elements in total; thus, \cref{rr:nonemptyIntersection,rr:singleElementClique} are sound.
  Clearly, no solution $S$ can contain any set $T$ of weight $w(T) > b$, yielding soundness of \cref{rr:largeWeights}.

  Indeed, \cref{rr:largeWeights} runs in $O(R)$~time.
  As for \cref{rr:singleElementClique,rr:nonemptyIntersection}, we need to update~$T^\uparrow_q$, $T^\downarrow_q$ and the corresponding weights for every element $q$ left in the (updated) set of rotations $R$.
  This can be done in a straightforward way in $O(|R|^2)$ time.
\end{proof}
% }

% The following pseudocode summarizes our fixed-parameter algorithm
% for \textsc{Incremental Stable Roommates}.

  \begin{algorithm}[t!]
\SetKwFunction{FindS}{FindS}
\SetKwProg{Fn}{}{}{}
%\DontPrintSemicolon

\Fn{\FindS{$R, G, b, \ell$}}{
  \lIf{$b < 0$}{\Return No}
  $S \leftarrow \emptyset$\;
  \lWhile(\Comment{\cref{rr:largeWeights}}){$\exists p \in R$ with~$w(T^\uparrow_p)>b$}{
    delete $T^\downarrow_p$ from $R$ and $G$
    }
  \While(\Comment{\cref{rr:nonemptyIntersection}}){$\exists p \in R$ with~$p \in T^\uparrow_K$ for some clique $K$ in $G$}{
    $S \leftarrow S \cup T^\uparrow_{p}$ \;
    delete $T^\uparrow_p$ from $R$ and $G$ \;
    $b \leftarrow b - w(T^\uparrow_p)$ \;
  }
  \lIf{$G$ contains less than $\ell$ cliques}{\Return No}
  \While(\Comment{\cref{rr:singleElementClique}}){$G$ contains an isolated vertex $p$}{
    $S \leftarrow S \cup T^\uparrow_{p}$ \;
    delete $T^\uparrow_p$ from $R$ and $G$ \;
    $b \leftarrow b - w(T^\uparrow_p)$ \;
    $\ell \leftarrow \ell - 1$
  }

  \While{$\exists K$ clique in $G$ with~$w(T^\uparrow_p)>0$ for all $p \in K$}{
    \ForEach{$p$ in $K$}{
      $S' \leftarrow$ \FindS{$R \setminus \left( T^\uparrow_p \cup T^\downarrow_{K-p} \right), G, b - w(T^\uparrow_p), \ell - 1$} \;
      \lIf{$S' \neq$ No \textnormal{\textbf{and}} $S \cup T^\uparrow_p \cup S'$ is independent in $G$}{
      \Return $S \cup T^\uparrow_p \cup S'$}
    }
    \Return No \;
  }

  \lIf(\Comment{$R \neq \emptyset$}){$\exists p \in R$ in clique $K$ with $w(T^\uparrow_p) = 0$}{\Return $S \cup $ \FindS{$R \setminus \left( T^\uparrow_{p} \cup T^\downarrow_{K - p} \right), G, b, \ell - 1$}}

  \Return $S$ \;
}
\caption{}\label{alg:1}
\end{algorithm}

%\begin{proof}[Proof of \cref{lem:algorithm}]
  We apply \cref{rr:largeWeights,rr:singleElementClique,rr:nonemptyIntersection} exhaustively.
  Now, we branch on a clique $K$ with~$0 < w(T^\uparrow_p) \le b$ for every $p \in K$ since we have to add one element of $K$ into any solution anyway.
  Note that in this branching we omit those cliques $K$ for which $\min_{p \in K} w(T^\uparrow_p) = 0$ holds.
  This procedure yields a search tree of depth at most $b$ with branching factor $\Delta(G) + 1$.
  Suppose that there is a leaf with nonnegative budget in which there is no clique left for branching.
  We claim that this is a yes-instance.
  At this point any clique left in $G$ contains at least one element $p$ with $w(T^\uparrow_p) = 0$; we select greedily these elements into the solution $S$.
  Observe that this gives a valid solution since all of these sets are of total weight $0$ (note that by including $p$ in the solution all of the weights can only decrease).

  As for the claimed running time, recall that it is possible to update all of the sets $T$ left for branching in $O(|R|^2)$ time.
  Furthermore, \cref{rr:singleElementClique,rr:nonemptyIntersection} are invoked $O(b)$ times and \cref{rr:largeWeights} at most once.
%\end{proof}
\end{proof}
}

We finally come to our main theorem.
\begin{proof}[Proof of  \cref{thm:ISR FPT without ties} ]
  By \cref{cons:ISR-noties->WCFCS}, in polynomial time,  we construct an instance for \WCFCS{} where the budget~$b$ is upper-bounded by $k/2$ and the conflict graph consists of $|R_2|/2$~edges~(see the reasoning right after \cref{cons:ISR-noties->WCFCS}).
  By \cref{lem:algorithm}, we can solve this instance and thus our problem in~$O(2^k \cdot n^4)$ time.
\end{proof}

\section{Hard cases of \ISR}
\label{sec:ISR instractable}

% \subparagraph*{Some problems go incremental.}
%
% % In \cref{sec:ISM without ties},
% % %we modify the last reduction in \cref{sec:with ties} by replacing ties with big cycles to show that when there is no tie in preferences and~$\left| \mathcal{P}_1 \oplus \mathcal{P}_2 \right|=1$, the problem is still NP-hard.
% % %On the positive side,
% % we show that when there is no tie, the problem is FPT with respect to~$k$ for both perfect~$M_2$ and
% % \todo{Where I am from and why am I here?--rb}
% % \todo{comes from ``organization'' paragraph -- I hope we can simply omit it --dk}
% % imperfect~$M_2$.
% % %Then for imperfect~$M_2$, we show the problem is still FPT with respect to~$k$ by solving the problem in two steps.
% % %The first step is to find the subset of agents which could be matched in a stable matching, and the second step is the same as the method for perfect~$M_2$.
% % The open case is for the parameter~$\left| \mathcal{P}_1 \Delta \mathcal{P}_2 \right|$.

Throughout this section, we are using the following non-standard ``incremental'' (resp.\ ``decremental'')
variants of the \textsc{Independent Set} (resp.\ \textsc{Clique}) problem to show parameterized
intractability.
% The first problem simply asks for an independent set of size~$h$ in the case when an independent
% set of size~$h-1$ is already known.
%
% \decprob{\IIS}
% {A graph~$G$ and an independent set~$S^*$ of size~$h-1$.}
% {Does there exist an independent set~$S$ of size~$h$ in~$G$?
% }

% Our second problem asks for an independent set of size~$h$ for some graph in the case
Our first problem asks for an independent set of size~$h$ for some graph in the case
when an independent set of size~$h$ for the graph minus an edge is already known.

\decprob{\EIIS}
{A graph~$G$, a distinguished edge~$e^* \in E(G)$, a positive integer~$h$, and an independent set~$S^*$ of size~$h$ for $G-e^*$.}
{Does there exist an independent set~$S$ of size~$h$ in $G$?
}

% Our third problem asks for a clique with pending edges of size~$h$ for some graph in the case
Our second problem asks for a clique with pendant edges of size~$h$ for some graph in the case
when a clique of size~$h$ with pendant edges for the graph with an additional edge is known.
A \emph{clique with pendant edges for a graph~$G$} is a subset~$V' \subset V(G)$ of vertices such that $V'$ forms a \emph{clique} in $G$, i.e., each two vertices in $V'$ are adjacent,
and that each vertex in $V'$ has at least one neighbor outside~$V'$.
The size of a \emph{clique with pendant edges} is defined as the number of vertices in the clique.

\decprob{\EDClPE}
{A graph~$G = (V,E)$, a distinguished edge~$e^* \in E(G)$, a positive integer~$h$,
and a clique~$S^*\subseteq V$ with pendant edges of size~$h$ for~$G$.}
{Is there a clique $S \subseteq V$ with pendant edges of size~$h$ in~$G - e^*$?
}

\begin{lemma}
  \label{lem:W1h-EIIS-EDClPE}
  % \IIS,
  \EIIS and \EDClPE are \np-hard
  and \woneh{} with respect to $h$.%the size~$h$ of the independent set (resp.\ of the clique).
\end{lemma}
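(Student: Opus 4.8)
The natural approach is to reduce from \textsc{Independent Set} (which is \np-hard and \woneh{} with respect to the solution size), showing that both problems are at least as hard while the "incremental"/"decremental" structure (the known near-solution) comes essentially for free.

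For \EIIS, the plan is as follows. Given an \textsc{Independent Set} instance $(H, h)$, I want to build a graph $G$ together with a distinguished edge $e^*$ and a witnessing independent set $S^*$ of size $h$ in $G - e^*$, such that $G$ has an independent set of size $h$ if and only if $H$ does. The trick is to engineer the instance so that the witness $S^*$ is easy to produce but does not already certify a solution for $G$ itself. Concretely, I would take $G$ to be $H$ together with a small gadget attached via the single edge $e^*$: for instance, add one fresh vertex $v^*$ whose only neighbor is one designated vertex of $H$, with $e^*$ being that connecting edge, and possibly pad $H$ with isolated vertices so that an independent set of size $h$ is guaranteed to exist in $G - e^*$ (using the gadget plus isolated vertices) regardless of whether $H$ has one. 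The key obstacle is balancing two constraints simultaneously: (i) $G - e^*$ must have an easily-exhibited independent set of size exactly $h$, and (ii) the existence of a size-$h$ independent set in $G$ must still encode the original \textsc{Independent Set} question. I expect the cleanest route is to choose the gadget so that removing $e^*$ only "frees up" one vertex, forcing any size-$h$ independent set in $G$ to essentially solve $H$. Since the parameter $h$ is preserved (up to an additive constant), \woneh{ness} transfers directly.

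For \EDClPE, I would use the complementation duality between independent sets and cliques, adapted to the "with pendant edges" condition. Starting again from \textsc{Independent Set} (or equivalently \textsc{Clique}), I would build $G$ so that the cliques-with-pendant-edges of size $h$ correspond to the independent sets/cliques of the source instance; the pendant-edge requirement is enforced by attaching to each relevant vertex a private neighbor of degree one, guaranteeing the "at least one neighbor outside" condition is automatically satisfied for any candidate clique. The distinguished edge $e^*$ to be removed plays the dual role to the added edge in \EIIS: I want $G$ (the larger graph) to admit the known clique-with-pendant-edges $S^*$ of size $h$, while $G - e^*$ admitting such a clique encodes the hard question. Here the main subtlety is that deleting $e^*$ can destroy the clique property of $S^*$, so I must place $e^*$ inside the "known" clique and arrange the rest of the construction so that repairing the clique after deletion forces a genuine solution to the underlying problem.

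The step I expect to be the genuine obstacle is \emph{certifying the input witness} in each reduction: unlike ordinary \np-hardness, these incremental problems demand that a near-solution ($S^*$ for $G-e^*$, respectively the clique for $G$) be supplied as part of the instance, so my gadgets must be designed so that such a witness provably exists and can be computed in polynomial time from the source instance, yet its existence gives no shortcut to the target question. Once the gadgets are fixed, verifying correctness reduces to two routine directions (a solution to the source instance yields one for the target, and conversely any size-$h$ solution in the target must avoid the gadget and hence project down to the source), and the parameter bound $h' = h + O(1)$ gives the \woneh{ness} immediately. I would finish by noting that both reductions are polynomial and parameter-preserving, so \np-hardness and \woneh{ness} with respect to $h$ follow at once.
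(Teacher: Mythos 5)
There is a genuine gap in your reduction for \EIIS. The instance must come with a certified independent set $S^*$ of size exactly $h$ in $G-e^*$, but your gadget (one fresh pendant vertex $v^*$ attached by $e^*$ to a designated vertex of $H$, plus isolated padding vertices) only ``frees up'' the single vertex $v^*$; the remaining $h-1$ witness vertices must then be the isolated padding vertices, and those are independent in $G$ itself, not just in $G-e^*$. Hence $\{v^*\}\cup\{\text{padding}\}$ (which avoids the designated endpoint of $e^*$) is already an independent set of size $h$ in $G$, so every constructed instance is a trivial yes-instance and the reduction collapses. You correctly identified the balancing obstacle, but the proposed resolution does not overcome it. The paper's construction resolves it differently: it adds a set $S^*$ of $h$ fresh vertices, joins \emph{every} new vertex to \emph{every} vertex of the original graph (a complete bipartite join), and places $e^*$ between two vertices of $S^*$. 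The join forces any independent set of $G$ to lie entirely inside $V(H)$ or entirely inside $S^*$, and the internal edge $e^*$ caps independent sets inside $S^*$ at size $h-1$; so a size-$h$ independent set in $G$ must be one of $H$, while $S^*$ remains a valid witness for $G-e^*$. Without the join, nothing prevents mixing gadget vertices with original vertices, which is exactly what breaks your version.

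Your treatment of \EDClPE is essentially the paper's: complement the graph so independent sets become cliques, attach a private degree-one neighbor to each original vertex to satisfy the pendant-edge condition automatically, and exploit that $e^*$ sits inside the known clique so that its deletion forces a genuine re-solve. (The paper chains this from \EIIS rather than directly from \textsc{Independent Set}, taking $G'$ to be the complement of $G-e^*$ plus pendants, so that $G'-e^*$ is the complement of $G$; but the idea is the same.) Once the \EIIS reduction is repaired as above, this second half goes through.
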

\appendixproof{lem:W1h-EIIS-EDClPE}
{
\begin{proof}
%   Using simple many-one reductions from \IS we show that all three problems are
%   \textsf{NP}-hard and \textsf{W}[1]-hard.
%   Note that \textsf{W}[1]-hardness under Turing-reductions is immediate for these problems and
%   the simple tricks used to show the following lemma are to obtain hardness also under many-one reductions.
%
%   \subparagraph*{\IIS.}
%   We give a reduction from \IS parameterized~$h$ to \IIS parameterized by~$h$.
%   Let $(\hat{G}, h)$ be an instance of \IS.
%   We construct graph $G$ from $\hat{G}=(V,E)$ as follows: We add a set $S^*$ of $h-1$ new vertices and all of the edges $\{v,s\}$ for~$v\in V$ and~$s \in S^*$.
%   Clearly, $S^*$ is an independent set of size~$h-1$ in~$G$ as required.
%   Moreover, no vertex from~$S^*$ can be part of an independent set of size~$h$.
%   Hence, graph~$G$ has an independent set of size~$h$ if and only of~$\hat{G}$ has
%   an independent set of size~$h$.

  Note that \woneh{ness} under Turing-reductions is immediate for these problems and
  the simple tricks used to show the following lemma are to obtain hardness also under many-one reductions.

  \smallskip
  \noindent \textbf{\EIIS.}
  We give a reduction from \IS parameterized by the size~$h$ of the solution to \EIIS again with parameter $h$.
  Let $(\hat{G}, h)$ be an instance of \IS.
  We construct graph $G$ from $\hat{G}=(V,E)$ as follows: We add a set $S^*$ of $h$ new vertices and for each vertex $v\in V$ and each vertex $s \in S^*$ we add edge~$\{v, s\}$.
  We finish the construction by picking two distinct vertices~$s$ and $s'$ from $S^*$ and adding to $E$ the edge $e^* = \{s,s'\}$. % between two distinct vertices $s,s' \in S^*$.
  Clearly, $S^*$ is an independent set of size $h$ in $G - e^*$ as required.

  Observe that for each independent set $S$ in $G$ we either have $S \subseteq V$ or $S \subseteq S^*$, since we have added a complete bipartite graph between $V$ and $S$.
  Now, since any independent set $S$ in $G$ with $S \subseteq S^*$ contains at most $h-1$ vertices, an independent set of size $h$ in $G$ can only contain vertices from the set $V$, thus it must be an independent set of size $h$ in the graph $\hat{G}$.

  \smallskip
  \noindent  \textbf{\EDClPE.}
  Without loss of generality, we assume that~$h>2$ and describe a straight-forward parameterized reduction from \EIIS parameterized by the size~$h$ of the solution to \EDClPE with parameter~$h$.
  Given the instance~$(G,e^*,S^*,h)$ of \EIIS, we create the graph~$G'$ by complementing the graph~$G - e^*$   and adding for each original vertex in~$v\in V(G')$ a new vertex that is only connected to~$v$.
  It is easy to verify that~$(G',e^*,S^*,h)$ is a yes-instance of \EDClPE if and only if $(G, e^*, S^*, h)$ is a yes-instance of \EIIS.
  (The newly added vertices ensure the existence of the pendant edges and can never be part of a clique of size more than two.)
\end{proof}
}

\subsection{\ISR is generally hard even without ties}
% Like in \SM, Gusfield~\cite{gusfield1988structure} gave a structural representation for all stable matchings for \SR.
% The difference is that in \SR stable matchings are one-to-one correspondence with the complete and closed subsets of the rotation poset.
% Here, ``complete'' means that for a rotation~$p$, if its negation is also a rotation (these rotations are called \emph{non-singular} rotations), then we have to choose exactly one of them.
To show that \ISR without ties is \np-hard, we identify a relation of it to an egalitarian variant of stable matching
where the egalitarian cost is minimized. % sum of the ranks of all partners is minimized.
Here, the \emph{egalitarian cost} of a matching~$M$ is defined as the sum of the ranks of the agents with respect to their partners,
and
the \emph{rank} of an agent~$x$ with respect to its partner~$M(y)$ is equal to the number agents that $x$ prefer over $y$.
Feder~\cite{feder1992new} showed that finding a stable matching with minimum egalitarian cost is \np-hard for \SR{}, even for complete preferences without ties (see the work of \citet{CHSYicalp-par-stable2018} for fixed-parameter tractability results on this problem).

The original hardness proof by \citet{feder1992new} is to reduce from \textsc{Vertex Cover}, which,
given an undirected graph~$G$ and an integer~$h' \in \mathds{N}$,
asks whether $G$ admits a \emph{vertex cover} of size~$h'$, i.e., a size-$h'$ vertex subset of~$V'\subseteq V(G)$ such that each edge in $E(G)$ is incident to at least one vertex from $V'$.
The basic idea behind the reduction is that putting a vertex to the solution set is equivalent to increasing the egalitarian cost by one.
This correspondence can also be achieved in \ISR by choosing an initial matching~$M_1$ which is associated to an empty vertex set,
thus finding a stable matching~$M_2$ closest to~$M_1$ is equivalent to finding a vertex cover of minimum size.
Note that \textsc{Vertex Cover} and \IS are dual to each other, i.e.,
a vertex subset~$V'$ is a vertex cover of size $h'$ if and only if $V(G)\setminus V'$ is an independent set of size $|V(G)|-h'$.
Due to this and since \IS is \woneh{} with respect to $|V|-h'$, we will directly reduce from \IS,
also showing parameterized intractability for our problem.

% Hua: No need to mention this 2-SAT problem because Feder basically mentioned that the proof is from Vertex Cover.
% The proof is based on a reduction from \textsc{Min Ones 2-Sat}, and setting one more variable equal to 1 increases the cost function by one.
% Thus, to find an egalitarian optimal stable matching, one needs to find a \textsc{2-Sat} solution with fewest variables set to 1, which is equivalent to \textsc{Vertex Cover} and hence NP-complete.
% This correspondence can also be achieved in \ISR by choosing~$M_1$ such that the corresponding variables are all equal to 0, thus finding a matching~$M_2$ close to~$M_1$ is just finding a solution for \textsc{Min Ones 2-Sat}.
% Therefore, \ISR also becomes NP-complete.
% Due to the duality of \textsc{Vertex Cover} and \IS, this reduction can be used to show the \wonehness with respect to the dual parameter~$n-k$ by reducing from \textsc{Max Ones 2-Sat}, which asks for an assignment with at least~$k$ variables set to 1 and which is \woneh.

\begin{proposition}
\label{prop:InSR no tie NP-hard}
\ISR without ties is \np-hard and \woneh{} with respect to~$k'=|M_1\cap M_2|$.
\end{proposition}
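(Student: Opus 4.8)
The plan is to reduce from \IS parameterized by the solution size~$h$, which is \woneh{}, by recasting Feder's~\cite{feder1992new} reduction from \textsc{Vertex Cover} (to minimum-egalitarian \SR) in the incremental setting. Given an instance~$(\hat{G}, h)$ of \IS with $\hat{G} = (V,E)$, I would first build, following Feder, a \emph{tie-free and complete} \SR profile~$\mathcal{P}_2$ on $O(|V|+|E|)$ agents whose stable matchings are in one-to-one correspondence with the vertex covers of~$\hat{G}$, in such a way that the set of ``selected'' (covered) vertices can be read off the matching. The second ingredient is the reference matching: I would take $M_1$ to be the (perfect) matching that in this encoding corresponds to the \emph{empty} vertex set, i.e.\ every vertex ``unselected'', and choose $\mathcal{P}_1$ to be any profile for which $M_1$ is stable---for instance, letting each agent rank its $M_1$-partner first makes $M_1$ trivially stable for $\mathcal{P}_1$. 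Since we parameterize by $k'$ and the swap distance $|\mathcal{P}_1 \oplus \mathcal{P}_2|$ is unconstrained in this row of \cref{tab:results}, we are free to pick $\mathcal{P}_1 \neq \mathcal{P}_2$.

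Next I would set up the gadget correspondence. Each vertex~$v$ gets a gadget admitting exactly two locally stable configurations, ``selected'' and ``unselected'', and each edge~$\{u,w\}\in E$ gets a gadget that is blocked precisely when both~$u$ and~$w$ are unselected; hence a matching is stable for~$\mathcal{P}_2$ iff the selected vertices form a vertex cover, in which case the unselected vertices form an independent set. The heart of the argument is a counting step, where $c$ denotes the constant number of matching edges a single vertex gadget contributes. Because $M_1$ encodes the empty set, in $M_1$ every vertex is unselected, whereas in any stable $M_2$ every edge $\{u,w\}$ has a selected endpoint; thus every edge gadget, and every selected-vertex gadget, is matched differently in $M_1$ and $M_2$ and so contributes only to $M_1 \Delta M_2$, never to $M_1 \cap M_2$. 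Therefore the shared edges are exactly the $c$ edges of each unselected-vertex gadget, giving
\[
  |M_1 \cap M_2| = c\cdot |\{v : v \text{ unselected in } M_2\}| = c\cdot |I|,
\]
where $I$ is the independent set encoded by $M_2$.

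To finish, I would use that both $M_1$ and $M_2$ are perfect of the same size $N$, so that $|M_1 \Delta M_2| = 2(N - |M_1 \cap M_2|)$ and the \ISR constraint $|M_1 \Delta M_2|\le k$ is equivalent to $|M_1 \cap M_2| \ge N - k/2$. Setting $k := 2N - 2ch$ turns the question into ``is there a stable $M_2$ with $|M_1 \cap M_2| \ge ch$?'', which by the displayed identity holds iff $\hat{G}$ has an independent set of size at least~$h$. As the parameter $k' = |M_1 \cap M_2|$ of such a solution is $\Theta(h)$, this is a valid parameterized reduction, yielding \woneh{ness} with respect to~$k'$; since the reduction runs in polynomial time and \IS is \np-hard, \np-hardness follows as well.

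The main obstacle will lie entirely in the gadget verification rather than in this counting: one must exhibit concrete tie-free complete preference lists realizing the selected/unselected dichotomy and the edge constraints, prove both directions of the correspondence between stable matchings and vertex covers (in particular that no spurious stable matchings arise), and confirm that in \emph{every} stable $M_2$ the only edges shared with $M_1$ are the unselected-vertex edges. The cleanest route I would take is to reuse Feder's gadgets essentially verbatim and merely re-interpret his ``egalitarian cost one per selected vertex'' as ``$c$ matching edges changed per selected vertex'', so that minimum egalitarian cost translates into minimum $\dist(M_1, M_2)$; the extra care needed is to check that this bookkeeping isolates the independent-set vertices as the sole source of agreement between $M_1$ and $M_2$, which is exactly what pins $k'$ down to $\Theta(h)$.
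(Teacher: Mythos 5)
Your proposal is correct and takes essentially the same route as the paper: the paper's proof also reduces from \IS{} using Feder's four-agent vertex gadgets verbatim for~$\mathcal{P}_2$, takes $M_1$ to be the ``all unselected'' configuration ($p_i$ with $\bar q_i$ and $q_i$ with $\bar p_i$), makes it stable for~$\mathcal{P}_1$ by promoting $\bar q_i$ above $p_i$'s neighbors (a cosmetic variant of your choice of~$\mathcal{P}_1$), and observes that a stable matching for~$\mathcal{P}_2$ can retain exactly the gadget pairs of an independent set, so that $|M_1\cap M_2|=\Theta(h)$.
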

\todo{Full proof.}

\appendixproof{prop:InSR no tie NP-hard}
{
\begin{proof}[Proof Sketch]
  We present a reduction from the \textsc{Independent Set} problem which is \woneh{} when parameterized by the solution size.
  Let $G = (V, E)$ be a graph and $h$~be the desired size of the independent set.

  Following essentially the construction of Feder~\cite{feder1992new}, we create our instance of \ISR as follows.
  For each vertex~$v_i \in V$, we create four agents $p_i, \bar{p}_i, q_i, \bar{q}_i$.
  The preferences of these agents in~$\mathcal{P}_2$ are identical to the profile constructed by Feder~\cite{feder1992new}:
  \begin{alignat*}{4}
    \agent~p_i\colon &  \bar{p}_i \succ [\{p_j \mid \{p_i,p_j\} \in E\}] \succ \bar{q}_i, && \agent~\bar{p}_i\colon&    q_i \succ p_i,   \\
  \agent~q_i\colon   & \bar{q}_i \succ \bar{p}_i,  &\qquad& \agent~\bar{q}_i\colon & p_i       \succ q_i.
\end{alignat*}
The old profile $\mathcal{P}_1$ is defined as follows.
  \begin{alignat*}{4}
    \agent~p_i\colon &  \bar{p}_i \succ \bar{q}_i \succ [\{p_j \mid \{p_i,p_j\} \in E\}], && \agent~\bar{p}_i\colon&    q_i \succ p_i,   \\
  \agent~q_i\colon   & \bar{q}_i \succ \bar{p}_i,  &\qquad& \agent~\bar{q}_i\colon & p_i       \succ q_i.
\end{alignat*}
  % \[
  %   \textrm{Agent } p_i:       \qquad \bar{p}_i \succ [\{p_j \mid \{p_i,p_j\} \in E\})] \succ \bar{q}_i
  % \]
  % \[
  %   \textrm{Agent } \bar{q}_i: \qquad p_i       \succ q_i \quad\quad  \textrm{Agent } q_i:       \qquad \bar{q}_i \succ \bar{p}_i \quad\quad \textrm{Agent } \bar{p}_i: \qquad q_i       \succ p_i\\
  % \]
  % In profile $\mathcal{P}_1$, we simply have:
  % \[
  %   \textrm{Agent } p_i: \qquad \bar{q}_i \succ \bar{p}_i \quad\quad \textrm{Agent } \bar{q}_i: \qquad p_i       \succ q_i
  % \]
  % \[
  %   \textrm{Agent } q_i:       \qquad \bar{p}_i \succ \bar{q}_i \quad\quad \textrm{Agent } \bar{p}_i: \qquad q_i       \succ p_i\\
  % \]
  Following the proof of Feder~\cite{feder1992new} we can conclude that matching all $p_i$ with the $\bar{q}_i$ and $q_i$ with the $\bar{p}_i$ is a stable matching for $\mathcal{P}_1$.
  In order to obtain a stable matching for $\mathcal{P}_2$, however, one can only keep those $p_i$ matched with the $\bar{q}_i$ that correspond to an independent set.
\end{proof}
}

% The reduction behind \cref{prop:InSR no tie NP-hard} leaves the complexity for the parameter~$k$ open (since \textsc{Min Ones 2-Sat} is fixed-parameter tractable with respect to the number of variables set to~1).
%This is an interesting parameter if in the problem definition for \ISR we ask for a stable matching~$M'$ such that~$\dist(M,M')=1/2|M \triangle M'| \le d$.

\subsection{Ties in the preferences make \ISM hard}
We show that \ISM becomes intractable when ties are allowed
even if the two preference profiles~$\mathcal{P}_1$ and~$\mathcal{P}_2$ are
almost identical.
The following result is achieved by a parameterized reduction from \EDClPE.

\begin{theorem}\label{thm:ISMTiesIncompleteHardness}
  \ISM with ties is \woneh{} with respect to $k$, even if $|\mathcal{P}_1 \oplus \mathcal{P}_2| = 1$.
\end{theorem}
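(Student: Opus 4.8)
The plan is to give a parameterized reduction from \EDClPE, which is \woneh{} with respect to $h$ by \cref{lem:W1h-EIIS-EDClPE}, to \ISM with ties, producing an instance in which $|\mathcal{P}_1 \oplus \mathcal{P}_2| = 1$ and the target distance $k$ is bounded by a function of $h$. Given an instance $(G, e^*, h, S^*)$ of \EDClPE with $e^* = \{a,b\}$, I would build a bipartite marriage instance with one ``vertex gadget'' per vertex of $G$ and one ``edge gadget'' per edge of $G$, wired so that the stable matchings of the new profile $\mathcal{P}_2$ correspond precisely to cliques-with-pendant-edges of size $h$ in $G - e^*$, while the stable matchings of the old profile $\mathcal{P}_1$ correspond to such cliques in $G$. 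The given clique $S^*$, which is valid in $G$, would be encoded by the given initial stable matching $M_1$ for $\mathcal{P}_1$.

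The central design goal is to make selection of a vertex $v$ into the clique correspond to a local rerouting inside $v$'s gadget, where ties endow each vertex with the genuine freedom to be ``selected'' or ``unselected'' in a stable matching. The clique constraint---every two selected vertices must be adjacent---would be enforced by edge gadgets that create a blocking pair exactly when two selected vertices are non-adjacent, and the pendant-edge requirement (each selected vertex has a neighbor outside the clique) is what lets each selected gadget settle into a stable local configuration: it supplies the ``anchor'' partner that prevents a spurious blocking pair inside the gadget. I would tune the gadgets so that flipping a single vertex $v$ between the selected and unselected states changes only $O(1)$ matched pairs; then a size-$h$ clique differs from the ``empty'' baseline by $O(h)$ pairs, and setting $k \coloneqq c \cdot h$ for the appropriate constant $c$ makes ``$\dist(M_1,M_2) \le k$'' equivalent to ``there is a clique-with-pendant-edges of size $h$''. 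This keeps $k$ bounded by a function of $h$, as required for a parameterized reduction.

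The heart of the construction is realizing the deletion of $e^*$ by a single swap. I would arrange $\mathcal{P}_1$ and $\mathcal{P}_2$ to be identical everywhere except in the preference list of one designated agent of the edge gadget for $e^* = \{a,b\}$: in $\mathcal{P}_1$ this agent orders two particular alternatives one way, so that $a$ and $b$ may be jointly selected (reflecting $e^* \in E(G)$), while in $\mathcal{P}_2$ a single adjacent transposition reverses them, so that $a$ and $b$ can no longer be jointly selected (reflecting the removal of $e^*$). By construction $|\mathcal{P}_1 \oplus \mathcal{P}_2| = 1$. Correctness then splits into the two standard directions: from a clique $S$ of size $h$ in $G - e^*$ one reads off a stable matching $M_2$ for $\mathcal{P}_2$ with $\dist(M_1, M_2) \le k$; conversely, from any stable $M_2$ for $\mathcal{P}_2$ with $\dist(M_1, M_2) \le k$ one extracts a size-$h$ clique in $G - e^*$, using the distance bound to certify that exactly $h$ vertex gadgets are in the selected state and the edge gadgets to certify mutual adjacency and the pendant condition.

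The main obstacle I anticipate is precisely the single-swap requirement: a graph edge is naturally represented by entries in several preference lists, so concentrating the entire effect of deleting $e^*$ into one adjacent transposition---without inadvertently altering which matchings are stable elsewhere---is delicate. This is exactly where ties and the pendant-edge formulation of the source problem carry the weight: a single agent's indifference can encode the ``edge present'' state, and the pendant edges guarantee that every selected gadget retains a stabilizing external partner regardless of the perturbation, so that the one swap affects only the feasibility of $\{a,b\}$ being selected together and leaves the rest of the stable-matching structure untouched.
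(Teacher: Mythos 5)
Your high-level plan matches the paper's: both reduce from \EDClPE{} (using \cref{lem:W1h-EIIS-EDClPE}), both localize the deletion of $e^*$ into a single swap in the list of one agent attached to the $e^*$-gadget (in the paper, the agent $y^{\dagger}$, whose swap forces $e^*$ to be matched in every stable matching for $\mathcal{P}_2$ and hence prevents both endpoints of $e^*$ from being selected), and both let $M_1$ encode the given clique $S^*$. However, the proposal leaves the entire gadget construction unspecified, and the one concrete mechanism you do commit to---``edge gadgets that create a blocking pair exactly when two selected vertices are non-adjacent''---does not work as stated: there is no gadget for a \emph{non}-edge, so nothing local can detect that two selected vertices fail to be adjacent. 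The paper enforces the clique condition globally, by counting: a set $X$ of exactly $|V|-h$ ``cover'' agents (backed by a long path gadget that makes leaving any of them unmatched too expensive) forces exactly $h$ vertex agents to be selected, and a set $Y$ of exactly $|E|-\binom{h}{2}-h$ absorbing agents forces at least $\binom{h}{2}$ edge agents to be unmatched; stability forces every unmatched edge agent to have both endpoints selected, so the $\binom{h}{2}$ unmatched edge agents must be precisely all pairs inside the selected $h$-set, i.e., the selected set is a clique. This counting argument, together with the exact cardinalities of $X$ and $Y$, is the crux of the reduction and is missing from your plan.

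Two smaller points. First, your claim that flipping one vertex between selected and unselected changes only $O(1)$ matched pairs, yielding $k=c\cdot h$, is unlikely to be realizable: switching the clique from $S^*$ to $S$ changes the status of all $\Theta(h^2)$ clique edges (and the $Y$-agents absorbing them), and the paper accordingly sets $k=h^2+5h+4$. This does not threaten the reduction---any computable bound $k=g(h)$ suffices for \woneh{ness}---but the budget analysis must account for the quadratically many edge agents that change partners. Second, since $|U|\neq|W|$ in such a construction, some agents must remain unmatched, and you need to argue (as the paper does via dummy ``garbage collectors'' or directly) that the number of unmatched agents that differ between $M_1$ and $M_2$ stays bounded in $h$; otherwise the distance bound cannot be met.
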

\todohua{Hua: As already discussed in some earlier email, I think one can show that the problem remains W[1]-hard even if the input preferences are complete by appending to each preference list all remaining unlisted agents and by adding $\binom{h}{2}$ dummy agents to garbage collect those edge agents that are not matched. Since, this requires some more work, I'll leave this to later revision.}
\appendixproof{thm:ISMTiesIncompleteHardness}
{
\begin{proof}
  We present a parameterized polynomial-time reduction from the \woneh{} \EDClPE (see \cref{lem:W1h-EIIS-EDClPE}).
  Let $(G = (V, E), e^*, S^*, h)$ be an instance of \EDClPE, where $S^* \subseteq V$ is a clique with pendant edges of size~$h$ in~$G$.
  We assume without loss of generality that~$|E|>\binom{h}{2}+h$.
%  Furthermore, we assume that~$e^* \subseteq S^*$ as otherwise $S^*$~is also a clique in~$G-e^*$ and we have a trivial yes-instance. 
%  \todoJJ{JJ: What if $e^*$ is a pendant edge? This case is not a trivial yes-instance. I don't think it's necessary to make this assumption, otherwise we should discuss the case when $e^*$ is a pendant edge.}
  % \todohua{Hua: I aggree with JJ and also think that we don't need this assumption anymore.}

  To identify each clique vertex with its pendant edge, we introduce the following notion.
  Let $C$ be a clique with pendant edges.
  By definition, each clique vertex from $C$ has at least one neighbor outside $C$.
  Thus, for each clique vertex~$v\in C$, let $w$ be an arbitrary but fixed neighbor of $v$ outside $C$.
  We use $\pen_C(v)$ to denote the corresponding pendant edge~$\{v,w\}$.
%  Let $v$~be a vertex from a clique~$C$ with pendant edges.
 % By $\pen(v)$ we denote the corresponding pendant edge of~$v$.
%  That is, $v \in \pen(v)$ and $|\pen(v)\cap C| = 1$. \todohua{Hua:  ``$\pen(v)\subseteq C$'' seems to be wrong. Should $\pen(v)\subseteq C$ be replaced with $|\pen(v)\cap C| = 1$ because the other endpoint should not be part of the clique?} %and for each pair $\{v,v'\}\subseteq C$ it holds that $\pen(v)\neq\pen(v')$
 % \todohua{Hua: I think ``$\pen(v) \neq \pen(v')$'' is not necessary because it follows from $|\pen(v)\cap C|=1$.}
  %\todo[inline]{rb: Yes and yes.}
  Further, we denote by $\pen(C)\coloneqq \{\pen(v) \mid v \in C\}$ the set of pendent edges of clique~$C$.

  In what follows, agent sets~$U$ and~$W$ will not have the same cardinality so that some agents need to remain unmatched.
  Note that for incomplete preferences with ties, using standard padding tricks,
  we can make~$U$ and~$W$ have the same carnality and simulate an agent
  being unmatched by being matched with dummy agents.
  Since using dummy agents to ``garbage collect'' the unmatched agents will increase the size of symmetric difference between our initial matching and the sought matching,
  we need to make sure that the number of unmatched agents is upper-bounded by a function in $k$.
  
  \paragraph*{Agents, preferences, and the symmetric difference bound~$k$.}
  In our \ISM instance, we will have one agent for each vertex from~$V$ and one agent for each edge from~$E$.
  For simplicity, we use the same symbols for vertices (edges) and the corresponding agents.
  Further, we introduce three sets of auxiliary agents:
  $X\coloneqq\{x_1,\dots,x_{|V| - h}\}$, $Y\coloneqq\{y_1,\dots,y_{|E| - \binom{h}{2} - h - 1}, y^*\}$, and~$Q\coloneqq\{q_1,\dots,q_t\}$,
  where $t$~is an even number which is to be determined later,
  and two special agents~$y^{\dagger}$ and $e^{\dagger}$.

  We partition our agents such that $U=V \cup Y \cup \{ e^{\dagger} \} \cup \{ q_1, q_3, \ldots, q_{t-1}\}$ forms one part and
  $W=X \cup E \cup \{y^{\dagger} \} \cup \{ q_2, q_4, \ldots, q_{t}\}$ forms the other part.
  The preferences of the agents in profile~$\mathcal P_1$ are set as follows.
  \begin{alignat*}{4}
  ~~&  \agent~v \in V\colon && (\left\{ e \in E \mid v \in e \right\}) \succ (X), &~~ &  \agent~x\in X\colon&& (V) \succ q_1,\\
  & \agent~y \in Y\setminus \{y^*\} \colon && (E), &&\agent~e \in E\!\setminus\! \{e^*\}\colon && Y \succ (\{u \in V\mid u \in e\}),\\
  \intertext{The path agents:}
  &  \agent~q_1\colon && (X) \succ q_2, && \agent~q_{t}\colon & & q_{t-1}\\  
  \intertext{For each~$i \in \{2,\ldots, t-1\}\colon$}
  &  \agent~q_{i}\colon && q_{i-1} \succ q_{i+1},\\
  \intertext{Finally, we list the preferences of the four special agents associated with the extra edge \mbox{$e^* = \{ u^*, v^*\}$}:}
  &\agent~y^*\colon &&  y^{\dagger} \succ E, && \agent~e^* \colon &&  e^{\dagger} \succ (Y \cup \{y^*\}) \succ (\{u^*,v^*\}), \\
  &\agent~e^{\dagger}\colon&&  y^{\dagger} \succ e^*, && \agent~y^{\dagger}\colon && e^{\dagger} \succ y^*.
  \end{alignat*}
  The only agent changing its mind in profile $\mathcal P_2$, compared to $\mathcal P_1$, is the agent $y^{\dagger}$ who switches the relative order of the (only) two potential partners in its preference list.
The preference list of agent~$y^{\dagger}$ in $\mathcal{P}_2$ becomes $y^* \succ e^{\dagger}$.
  We set the difference~$k$ between the two matchings~$M_1$ and~$M_2$ to $k\coloneqq h^2 + 5h + 4$ and set $t \coloneqq 2\cdot \lceil k/2 \rceil + 2$, which is strictly larger than $k$. 
  It remains to define the matching~$M_1$ that is stable for profile~$\mathcal P_1$ to complete the construction.
  Before we do this, we sketch the idea of the construction.

  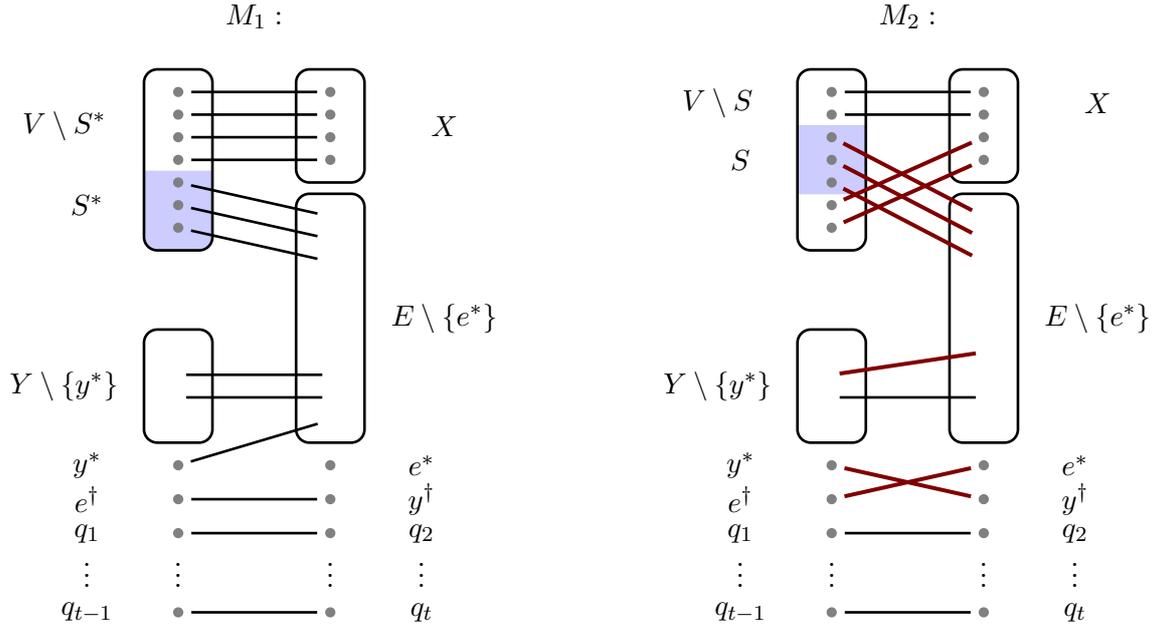
\begin{figure}[t]
  \centering
  \tikzstyle{bbox}=[draw, rounded corners=5pt]
  \tikzstyle{vertices}=[draw, circle, gray, fill=gray,inner sep=1pt]
  \begin{tikzpicture}[line width=1pt, scale=1]
    \def \xsc {0.3}

    \node at (1, 1) {$M_1:$};
    \foreach \i  in {1, 2, 3, 4} {
      \node[vertices] at (0, -\i*\xsc+\xsc) (v\i) {};
      \node[vertices] at (2, -\i*\xsc+\xsc) (x\i) {};
    }

    \foreach \i  in {5,6,7} {
      \node[vertices] at (0, -\i*\xsc+\xsc) (v\i) {};
    }
    % Box V
    \draw[bbox] (-1.5*\xsc,\xsc) rectangle (1.5*\xsc, -7*\xsc);

    % Box X
    \draw[bbox] (2-1.5*\xsc,\xsc) rectangle (2+1.5*\xsc, -4*\xsc);
    
    \gettikzxy{(v4)}{\fx}{\fy};
    \gettikzxy{(v5)}{\ffx}{\ffy};
    % \path %
    % (-1.5*\xsc, \fy*0.5+\ffy*0.5) edge (1.5*\xsc, \fy*0.5+\ffy*0.5);
    \begin{scope}
      \begin{pgfonlayer}{background}
      \clip[bbox] (-1.5*\xsc,\xsc) rectangle (1.5*\xsc, -7*\xsc);
      \draw[blue!20,fill=blue!20, even odd rule] (-1.5*\xsc,\xsc) rectangle (1.5*\xsc, -7*\xsc)
      (-1.5*\xsc, \xsc) rectangle (1.5*\xsc, \fy*0.5+\ffy*0.5);
      \end{pgfonlayer}
    \end{scope}
      
    \gettikzxy{(v2)}{\fx}{\fy};
    \gettikzxy{(v3)}{\ffx}{\ffy};
    \node (VS) at (-5*\xsc, \fy*0.5+\ffy*0.5) {$V\setminus S^*$};

    % Label X
    \node (X) at (2+5*\xsc, \fy*0.5+\ffy*0.5) {$X$};

    \gettikzxy{(v6)}{\fx}{\fy};
    \node (S) at (-4*\xsc, \fy) {$S^*$};
    
    \def \start {-10.5}
    % Box Y\{y^*}
    \draw[bbox] (-1.5*\xsc,\start*\xsc) rectangle (1.5*\xsc, \start*\xsc-5*\xsc);
    \node (Ys) at (-5*\xsc, \start*\xsc-2.5*\xsc) {$Y\setminus \{y^*\}$};

    % Box E
    \foreach \i in {5,6,7,14} {
      \node[vertices] at (2, -\i*\xsc-0.5*\xsc) (e\i) {};
    }

    \draw[bbox,fill=white] (2-1.5*\xsc,-4.5*\xsc) rectangle (2+1.5*\xsc, \start*\xsc-5*\xsc);
    \node (Es) at (2+5*\xsc, 0.5*\start*\xsc-4.75*\xsc) {$E\setminus \{e^*\}$};
    \path[shorten <= 3pt, shorten >= 3pt] (0, \start*\xsc-2*\xsc) edge (2,  \start*\xsc-2*\xsc);
    \path[shorten <= 3pt, shorten >= 3pt] (0, \start*\xsc-3*\xsc) edge (2,  \start*\xsc-3*\xsc);

    \def \start {-16.5}
    \node[vertices] at (0, \start*\xsc) (ys) {};
    \node[] at (-4*\xsc, \start*\xsc)  {$y^*$};
    \node[vertices] at (2, \start*\xsc) (es) {};
    \node[] at (2+4*\xsc, \start*\xsc)  {$e^*$};

    \def \start {-18}
    \node[vertices] at (0, \start*\xsc) (ed)  {};
    \node[] at (-4*\xsc, \start*\xsc)  {$e^{\dagger}$};
    \node[vertices] at (2, \start*\xsc) (yd)  {};
    \node[] at (2+4*\xsc, \start*\xsc)  {$y^{\dagger}$};

    \def \start {-19.5}
    \node[vertices] at (0, \start*\xsc)  (q1) {};
    \node[] at (-4*\xsc, \start*\xsc)  {$q_1$};
    \node[vertices] at (2, \start*\xsc)  (q2) {};
    \node[] at (2+4*\xsc, \start*\xsc)  {$q_2$};

    \def \start {-21}
    \node[] at (0, \start*\xsc)  (vds) {$\vdots$};
    \node[] at (-4*\xsc, \start*\xsc)  {$\vdots$};
    \node[] at (2, \start*\xsc)  (vdss) {$\vdots$};
    \node[] at (2+4*\xsc, \start*\xsc)  {$\vdots$};

    \def \start {-23}
    \node[vertices] at (0, \start*\xsc)  (qt1) {};
    \node[] at (-4*\xsc, \start*\xsc)  {$q_{t-1}$};
    \node[vertices] at (2, \start*\xsc)  (qt) {};
    \node[] at (2+4*\xsc, \start*\xsc)  {$q_{t}$};

    \foreach \s / \t in {q1/q2, qt1/qt, ed/yd, v1/x1,v2/x2,v3/x3,v4/x4,v5/e5,v6/e6,v7/e7, ys/e14} {
      \path[shorten <= 3pt, shorten >= 3pt] (\s) edge (\t);
    }

\end{tikzpicture}\quad\qquad\qquad
  \begin{tikzpicture}[line width=1pt, scale=1]
    \def \xsc {0.3}

    \node at (1, 1) {$M_2:$};
    \foreach \i  in {1, 2, 3, 4} {
      \node[vertices] at (0, -\i*\xsc+\xsc) (v\i) {};
      \node[vertices] at (2, -\i*\xsc+\xsc) (x\i) {};
    }

    \foreach \i  in {5,6,7} {
      \node[vertices] at (0, -\i*\xsc+\xsc) (v\i) {};
    }
    % Box V
    \draw[bbox] (-1.5*\xsc,\xsc) rectangle (1.5*\xsc, -7*\xsc);

    % Box X
    \draw[bbox] (2-1.5*\xsc,\xsc) rectangle (2+1.5*\xsc, -4*\xsc);
    
    \gettikzxy{(v2)}{\fx}{\fy};
    \gettikzxy{(v3)}{\ffx}{\ffy};
    % \path %
    % (-1.5*\xsc, \fy*0.5+\ffy*0.5) edge (1.5*\xsc, \fy*0.5+\ffy*0.5);

    \gettikzxy{(v5)}{\gx}{\gy};
    \gettikzxy{(v6)}{\ggx}{\ggy};
    % \path %
    % (-1.5*\xsc, \gy*0.5+\ggy*0.5) edge (1.5*\xsc, \gy*0.5+\ggy*0.5);
    \begin{scope}
      \begin{pgfonlayer}{background}
      \draw[blue!20,fill=blue!20] (-1.5*\xsc,\fy*0.5+\ffy*0.5) rectangle (1.5*\xsc, \gy*0.5+\ggy*0.5);
      \end{pgfonlayer}
    \end{scope}
    
    \gettikzxy{(v1)}{\fx}{\fy};
    \gettikzxy{(v2)}{\ffx}{\ffy};
    \node (VS) at (-5*\xsc, \fy*0.5+\ffy*0.5) {$V\setminus S$};

    % Label X
    \node (X) at (2+5*\xsc, \fy*0.5+\ffy*0.5) {$X$};

    \gettikzxy{(v4)}{\fx}{\fy};
    \node (S) at (-4*\xsc, \fy) {$S$};
    
    \def \start {-10.5}
    % Box Y\{y^*}
    \draw[bbox] (-1.5*\xsc,\start*\xsc) rectangle (1.5*\xsc, \start*\xsc-5*\xsc);
    \node (Ys) at (-5*\xsc, \start*\xsc-2.5*\xsc) {$Y\setminus \{y^*\}$};

    % Box E
    \foreach \i in {5,6,7} {
      \node[vertices] at (2, -\i*\xsc-0.5*\xsc) (e\i) {};
    }

    \draw[bbox,fill=white] (2-1.5*\xsc,-4.5*\xsc) rectangle (2+1.5*\xsc, \start*\xsc-5*\xsc);
    \node (Es) at (2+5*\xsc, 0.5*\start*\xsc-4.75*\xsc) {$E\setminus \{e^*\}$};
    \path[shorten <= 3pt, shorten >= 3pt, red!50!black, line width=1.5pt] (0, \start*\xsc-2*\xsc) edge (2,  \start*\xsc-\xsc);
    \path[shorten <= 3pt, shorten >= 3pt] (0, \start*\xsc-3*\xsc) edge (2,  \start*\xsc-3*\xsc);

    \def \start {-16.5}
    \node[vertices] at (0, \start*\xsc) (ys) {};
    \node[] at (-4*\xsc, \start*\xsc)  {$y^*$};
    \node[vertices] at (2, \start*\xsc) (es) {};
    \node[] at (2+4*\xsc, \start*\xsc)  {$e^*$};

    \def \start {-18}
    \node[vertices] at (0, \start*\xsc) (ed)  {};
    \node[] at (-4*\xsc, \start*\xsc)  {$e^{\dagger}$};
    \node[vertices] at (2, \start*\xsc) (yd)  {};
    \node[] at (2+4*\xsc, \start*\xsc)  {$y^{\dagger}$};

    \def \start {-19.5}
    \node[vertices] at (0, \start*\xsc)  (q1) {};
    \node[] at (-4*\xsc, \start*\xsc)  {$q_1$};
    \node[vertices] at (2, \start*\xsc)  (q2) {};
    \node[] at (2+4*\xsc, \start*\xsc)  {$q_2$};

    \def \start {-21}
    \node[] at (0, \start*\xsc)  (vds) {$\vdots$};
    \node[] at (-4*\xsc, \start*\xsc)  {$\vdots$};
    \node[] at (2, \start*\xsc)  (vdss) {$\vdots$};
    \node[] at (2+4*\xsc, \start*\xsc)  {$\vdots$};

    \def \start {-23}
    \node[vertices] at (0, \start*\xsc)  (qt1) {};
    \node[] at (-4*\xsc, \start*\xsc)  {$q_{t-1}$};
    \node[vertices] at (2, \start*\xsc)  (qt) {};
    \node[] at (2+4*\xsc, \start*\xsc)  {$q_{t}$};

    \foreach \s / \t in {q1/q2, qt1/qt, v1/x1, v2/x2} {
      \path[shorten <= 3pt, shorten >= 3pt] (\s) edge (\t);
    }
    \foreach \s / \t in {ed/es, v6/x3, v7/x4, v3/e5, v4/e6,v5/e7, ys/yd} {
      \path[shorten <= 3pt, shorten >= 3pt, red!50!black, line width=1.5pt] (\s) edge (\t);
    }

  \end{tikzpicture}
  \caption{Illustration of the stable matching~$M_1$ for $\mathcal P_1$ and a desired stable matching~$M_2$ for $\mathcal P_2$ (see \cref{def:M_1} and \cref{def:M_2}). Observe that each agent from $U$ is matched in both $M_1$ and $M_2$. The agents from $S$ and from $S^*\setminus S$ changed their partners. The agents from $\pen(S)$
    and from $E\setminus (\pen(S)\cup \binom{S}{2}) \setminus (E\setminus (\pen(S^*)\cup \binom{S^*}{2}))$ changed their partners.}
\end{figure}

  \paragraph{Construction idea and the initial matching~$\mathbf{M_1}$.}
  Observe that $|W|-|U|=\binom{h}{2}$ agents will remain unmatched.
  As already discussed at the beginning of the proof, we can introduce $\binom{h}{2}$ dummy agents to ``garbage collect'' these unmatched agents.
  However, this will increase our parameter~$k$ only by $2\cdot \binom{h}{2}$.
  To convey the actual idea of the reduction, however, we omit introducing these dummy agents.
  % \todo[inline]{Hua: Can we add dummy agents to $U$ to garbage-collecting the unmatched agents? This would be make it consistent to the problem definition.\newline rb: We have (now) a paragraph where we discuss this. I acutally would find it much more confusing to add yet another class of agents and making the preferences more complicated.\newline JJ: I agree with rb.}
%  \todohua{Hua: We actually need to address the effect of these unmathced agents because it enlarge the difference between the initial matching and the sought matching. Fortunately, this is still upper bounded by the a function in h.}

  Now, the idea of our construction is that the set~$\bot(M_1)$ of agents that are not matched
  by our initial matching~$M_1$ corresponds to the set of edges of the clique~$S^*$ for~$G$.
  Analogously, for every stable matching~$M_2$ for $\mathcal{P}_2$ the set~$\bot(M_2)$ of agents that are not matched by~$M_2$ will have to correspond to a set of edges of some clique~$S$ for~$G-e^*$.

  This can be formally captured as follows.
%  Given some number~$j \in \{1,2\}$ and a clique~$C \subseteq V$
 % let $M_j^C$ denote the matching constructed as follows.
  \begin{definition}\label{def:M_1}  We define the initial matching~$M_1$ as follows.
  \begin{enumerate}[(i)]
    \item Match agent~$e^{\dagger}$ with agent~$y^{\dagger}$. %, i.e., $M_1(y^{\dagger}) = e^{\dagger}$. % if $j=1$ \\ %and $y^*$ with $e^*$ if $j=1$ \\
%          Match the agent $y^{\dagger}$ with $y^*$ and $e^{\dagger}$ with $e^*$ if $j=2$.
    \item Match every clique vertex agent~$v \in S^*$ with its pendant edge agent~$\pen_{S^*}(v)$.
    \item Match every non-clique vertex agent~$v \in V\setminus S^*$ to an arbitrary but fixed agent from~$X$. % with an arbitrary unmatched (non-clique) vertex agent.
    \item\label{eq:Y} Match every agent from~$Y$ with an arbitrary but fixed non-pendant and ``non-clique'' edge agent~$e \in E\setminus ( \binom{S^*}{2} \cup \pen(S^*))$.
    \item For every odd number~$i \in \{1, 3, \ldots, {t-1}\}$, match agent~$q_i$ with agent~$q_{i+1}$.
  \end{enumerate}
\end{definition}

  % \begin{enumerate}
  %   \item Match the agent $y^{\dagger}$ with $e^{\dagger}$ if $j=1$ \\ %and $y^*$ with $e^*$ if $j=1$ \\
  %         Match the agent $y^{\dagger}$ with $y^*$ and $e^{\dagger}$ with $e^*$ if $j=2$.
  %   \item Match every clique vertex agent~$v \in C$ with its pendant edge agent~$\pen(v)$.
  %   \item Match every agent from~$X$ with an arbitrary unmatched (non-clique) vertex agent.
  %   \item Match every unmatched agent from~$Y \cup \{y^*\}$ with an arbitrary unmatched (non-pendant) edge agent~$\notin \binom{C}{2}$, and
  %   \item Match $q_i$ with $q_{i+1}$ for every $i \in \{1, 3, \ldots, {t-1}\}$.
  % \end{enumerate}
  % Setting $M_1=M_1^{S^*}$
  This completes the construction which can be computed in polynomial time.
  The following claim implies that~$M_1$ is stable for $\mathcal{P}_1$.
  It will be used later in the correctness proof.
  \begin{claim} \label{cl:ISMTiesIncompleteHardness_matchingstable}
    The set of unmatched agents in $M_1$ is $\binom{S^*}{2}$.
    Moreover, $M_1$ is stable for $\mathcal{P}_1$.
    % If $j=1$ and $C$\todo[inline]{Hua: Use a different $C$ maybe because $C$ is part of the input...\newline rb: Fixed.}~is a clique with pendant edges of size~$h$ for $G$, then
    % the matching $M_1^C$ is stable for profile~$\mathcal P_1$.
    % If $j=2$ and $C$~is a clique with pendant edges of size~$h$ for~$G-e^*$, then
    % the matching $M_2^C$ is stable for profile~$\mathcal P_2$.
  \end{claim}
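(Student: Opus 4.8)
The plan is to establish the two assertions separately: first a counting argument that pins down $\bot(M_1)$, and then a case analysis over all acceptable pairs outside $M_1$ to rule out blocking pairs.

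First I would verify that the unmatched agents are exactly the clique edges, simply by checking that every agent of $U$ is matched and then reading off which agents of $W$ are left over. Items (i)--(v) of \cref{def:M_1} match every agent of $V$ (the clique vertices through (ii); the remaining $|V|-h$ vertices bijectively onto $X$ through (iii), using $|X|=|V|-h=|V\setminus S^*|$), every agent of $Y$ (through (iv), using that $|Y|=|E|-\binom{h}{2}-h$ equals $|E\setminus(\binom{S^*}{2}\cup\pen(S^*))|$, so this match is a bijection), the pair $e^{\dagger},y^{\dagger}$ (through (i)), and all path agents (through (v), since $t$ is even). Hence no agent of $U$ is unmatched, while on the $W$ side the pendant edges $\pen(S^*)$ are consumed by (ii), the edges of $E\setminus(\binom{S^*}{2}\cup\pen(S^*))$ by (iv), $X$ by (iii), $y^{\dagger}$ by (i), and the even-indexed $q$'s by (v), leaving exactly the edge agents in $\binom{S^*}{2}$. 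Since $e^*$ joins two vertices of $S^*$ (so $e^*\in\binom{S^*}{2}$), this also accounts for $e^*$ being unmatched. Thus $\bot(M_1)=\binom{S^*}{2}$, which has $\binom{h}{2}=|W|-|U|$ elements, as it must.

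For stability I would organise the check around each agent's ``envy set'', the agents it strictly prefers to its $M_1$-partner (or to being unmatched), and show that for every acceptable non-matching pair at least one endpoint does not reciprocate. A large fraction of agents are matched to a most-preferred (tied) partner and hence have empty envy set: every $v\in S^*$ (matched to an incident pendant edge), every edge of $E\setminus(\binom{S^*}{2}\cup\pen(S^*))$ (matched to some $y\in Y$), every $y\in Y\setminus\{y^*\}$ (indifferent among all edges), the pair $e^{\dagger},y^{\dagger}$, every $x\in X$ (all of $V$ is tied at the top of its list), and every $q_i$ matched to its preferred neighbour $q_{i-1}$. It then remains to treat the agents with non-empty envy set and show their coveted partners are matched to something strictly preferred or tied: a vertex $v\in V\setminus S^*$ covets its incident edges, but each such edge is either a pendant edge indifferent between its two endpoints or an edge matched to a strictly preferred $Y$-agent; a pendant edge covets $Y$, which is indifferent among edges; $y^*$ covets $y^{\dagger}$, which under $\mathcal{P}_1$ strictly prefers $e^{\dagger}$; $q_1$ and the odd-indexed path agents covet $X$ respectively $q_{i-1}$, which are matched to strictly preferred partners; and the unmatched clique edges of $\binom{S^*}{2}$ covet $Y$ and their endpoints in $S^*$, none of which reciprocates.

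The main obstacle, and the part I would write most carefully, is the four-agent gadget $\{e^*,e^{\dagger},y^*,y^{\dagger}\}$ together with the placement of $e^*$. I would first confirm $e^*\in\binom{S^*}{2}$, so that $e^*$ is genuinely among the unmatched clique edges rather than matched through (iv), and then verify the gadget directly: $e^{\dagger}$ and $y^{\dagger}$ form a mutually most-preferred pair under $\mathcal{P}_1$, so neither blocks; the unmatched $e^*$ would accept $e^{\dagger}$, any agent of $Y\cup\{y^*\}$, or its endpoints $u^*,v^*$, but $e^{\dagger}$ is matched to its own top choice $y^{\dagger}$, the $Y$-agents are indifferent among edges, and $u^*,v^*\in S^*$ are indifferent between $e^*$ and their pendant edges; and $y^*$'s only improvement $y^{\dagger}$ is blocked because $y^{\dagger}$ prefers $e^{\dagger}$ in $\mathcal{P}_1$. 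Having placed every agent in exactly one of these buckets, I would conclude that $M_1$ admits no blocking pair and is therefore stable for $\mathcal{P}_1$.
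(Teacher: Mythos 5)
Your proof is correct and follows essentially the same route as the paper's: a direct accounting of \cref{def:M_1} for the first statement, and an exhaustive case analysis of potential blocking pairs (organized by which agents could strictly prefer an outside partner, with the tied top choices of $S^*$, $X$, $Y\setminus\{y^*\}$, the even path agents, and the $e^{\dagger}$--$y^{\dagger}$ pair doing most of the work) for the second. The only quibble is your parenthetical assertion that $e^*\in\binom{S^*}{2}$, which the theorem's hypotheses do not guarantee for an arbitrary \EDClPE{} instance; fortunately nothing in your argument actually depends on it, since if $e^*\notin\binom{S^*}{2}$ it would simply be matched to a $Y$-agent via \cref{def:M_1} and its only envied agent would be $e^{\dagger}$, who already holds its top choice $y^{\dagger}$.
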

  \begin{proof}\renewcommand{\qedsymbol}{(of \cref{cl:ISMTiesIncompleteHardness_matchingstable})~$\diamond$}
    The first statement follows from the fact that by \cref{def:M_1}, every agent from $U$ is matched and by \cref{def:M_1}\eqref{eq:Y},
    exactly those edge agents which correspond to the edges from $\binom{S^*}{2}$ are unmatched under $M_1$. %  every agent from $U$ is matched and by $|W|-|U|=(|V|-h+|E|+1+t/2)- (|V|+|E|-h-\binom{h}{2}+1+t/2)=\binom{h}{2}$. 
    As for the second statement, suppose for the sake of contradiction, that $\{u,w\}$ is a blocking pair of $M_1$ with $u\in U$ and $w\in W$.
    Now, observe that each agent from
    $S^* \cup (Y \setminus \{y^*\}) \cup \{e^{\dagger}\} \subseteq U$ already obtains its most preferred agent as a partner.
    Moreover, for each $i \in \{1, \ldots, t/2\}$, the path agent~$q_{2i-1}$ only prefers to an agent~$z$ to its assigned partner with $z\in X\cup \{q_{2i'}\mid 1\le i' \le t/2\}$, but agent~$z$ already obtains its most preferred agent as a partner (either some vertex agent or some path agent).
    It follows that $u\in V\setminus S^*$ or $u=y^*$.
    Since $y^*$ only prefers $y^{\dagger}$ to its partner but $y^{\dagger}$ already obtains its most preferred agent as a partner, it follows that $u\in V\setminus S^*$, i.e.,  $u$ is some non-clique vertex agent.
    This implies that $M_1(v) \in X$.
    By the preferences of the agents from $V$ and since $M_1(u)\in X$, it follows that $w$ is some edge agent such that $u\in w$.
    Since $u$ is not a clique vertex from $S^*$, it follows that $w\not \subseteq \binom{S^*}{2}$, implying that $w\in E\setminus \binom{S^*}{2}$.
    If $w\in E\setminus (\binom{S^*}{2}\cup \pen(S^*))$, i.e., it is also \emph{not} a pendant edge of some clique vertex from $S^*$, then it obtains a partner from $Y$ and will not form with $u$ a blocking pair because $u$ is a vertex agent.
    Thus, $w$ must be a pendant edge agent from $\pen(S^*)$.
    In this case, $w$ is matched to another incident vertex agent which is different from $u$.
    However, since $w$ considers both its incident vertex agents as tied, it cannot form with $u$ a blocking pair, a contradiction.    
  \end{proof}

  \paragraph{Correctness of the Construction.}
   We show that there is a stable matching $M_2$ for profile~$\mathcal P_2$ with $|M_1 \Delta M_2|\le k$ if and only if
   there is a clique~$C'$ with pendant edges of size~$h$ for~$G-e^*$.

   For the ``only if'' direction,
   let $S$~be a clique with pendant edges~$\pen_{S}$ of size~$h$ for~$G-e^*$.
%   Observe that by \cref{cl:ISMTiesIncompleteHardness_matchingstable} we have that $M_2^{C'}$ is indeed stable for~$\mathcal P_2$.
   For ease of notation let $E(S)=\binom{S}{2}\cup \pen(S)$ and $E(S^*)=\binom{S^*}{2}\cup \pen(S^*)$.   \begin{definition}\label{def:M_2}
   We construct the target matching~$M_2$ as follows.
     \begin{enumerate}[(i)]
     \item Match~$e^{\dagger}$ with $e^{*}$ and $y^*$ with $y^{\dagger}$.
     \item\label{eq:M2-pendant} Match each clique vertex~$v \in S$ with its pendant edge agent~$\pen_{S}(v)$.
     \item Match each non-clique vertex agent~$v\in V \setminus (S\cup S^*)$ with the partner~$M_1(v)$ (note that this partner is from $X$).
     \item Match each remaining non-clique vertex agent~$v\in S^*\setminus S$ with an arbitrary but fixed not-yet-matched agent from $X$.
     \item\label{eq:M2-non-clique-non-pendant} For each $y \in Y\setminus \{y^*\}$ with $M_1(y)\in E\setminus E(S)$,
     match $y$ with $M_1(y)$; recall that $M_1(y)\in E\setminus E(S^*)$.
     \item\label{eq:M2-non-clique-non-pendant2} Match each remaining not-yet-matched agent~$y\in Y\setminus \{y^*\}$ with an arbitrary but fixed not-yet-matched agent from $E\setminus E(S)$.
    \item For every odd number~$i \in \{1, 3, \ldots, {t-1}\}$, match agent~$q_i$ with agent~$q_{i+1}$.
   \end{enumerate}
 \end{definition}
 
  We claim the following for matching~$M_2$.
  \begin{claim} \label{cl:ISMTiesIncompleteHardness_matching2stable}
    The set of unmatched agents in $M_2$ is $\binom{S}{2}$.
    Moreover, $M_2$ is stable for $\mathcal{P}_2$.
  \end{claim}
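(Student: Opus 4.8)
The plan is to establish the two assertions separately: first that $M_2$ is a well-defined matching whose unmatched agents are exactly $\binom{S}{2}$, and then that it admits no blocking pair under $\mathcal P_2$. For the first part I would verify by a counting argument that the assignments (i)--(vii) of \cref{def:M_2} never reuse an agent and leave every agent of $U$ matched. On the $W$-side, $|X| = |V|-h = |V\setminus S|$ equals the number of non-clique vertex agents assigned to $X$ in (iii)--(iv), so all of $X$ is consumed; among the edge agents, $\pen(S)$ is taken by the clique vertices in (ii), $e^*$ by $e^{\dagger}$ in (i), and $E\setminus(E(S)\cup\{e^*\})$ by $Y\setminus\{y^*\}$ in (v)--(vi). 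Since $\binom{S}{2}$ and $\pen(S)$ are edges of $G-e^*$ we have $e^*\notin E(S)$, whence $|E(S)|=\binom{h}{2}+h$ and $|E\setminus(E(S)\cup\{e^*\})| = |E|-\binom{h}{2}-h-1 = |Y\setminus\{y^*\}|$; thus these two groups exactly exhaust one another and the only edge agents left unmatched are precisely those of $\binom{S}{2}$, as claimed.

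For stability I would argue by contradiction, supposing $\{u,w\}$ with $u\in U$ and $w\in W$ blocks $M_2$. The guiding observation is that almost every agent of $U$ already receives a most acceptable partner and hence cannot be the $U$-endpoint of a blocking pair: every clique vertex $v\in S$ is matched to an incident (top-tier) edge $\pen_S(v)$; every $y\in Y\setminus\{y^*\}$ is matched to an edge, all of which are tied at the top of its list; $y^*$ is matched to its top choice $y^{\dagger}$; and every even-indexed path agent $q_i$ is matched to $q_{i-1}$, which heads its list. This leaves only three possible types for $u$: a non-clique vertex from $V\setminus S$, the agent $e^{\dagger}$, and an odd-indexed path agent.

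Each of these I would rule out in turn. For $u=e^{\dagger}$, the only partner it strictly prefers to $e^*$ is $y^{\dagger}$; but in $\mathcal P_2$ agent $y^{\dagger}$ ranks $y^*$ (its current partner) above $e^{\dagger}$, so $\{e^{\dagger},y^{\dagger}\}$ does not block --- this is exactly where the single swap distinguishing $\mathcal P_1$ from $\mathcal P_2$ is used. For an odd path agent the only preferable alternatives are an $X$-agent (for $q_1$, matched to a vertex it prefers to $q_1$) or $q_{i-1}$ (matched to $q_{i-2}$, which it prefers to $q_i$), so neither reciprocates. The principal case is a non-clique vertex $u=v\in V\setminus S$, matched to some $x\in X$, which strictly prefers every incident edge $e$. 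I would show no incident $e$ reciprocates: if $e$ is matched to a $Y$-agent, then $e$ prefers $Y$ to any vertex; if $e=e^*$, then $e^*$ sits with its top choice $e^{\dagger}$; and if $e\in\pen(S)$ is matched to a clique vertex, then $v$ is forced to be the outside endpoint of $e$, yet $e$ ranks its two endpoints as tied and so does not strictly prefer $v$. Finally $e\in\binom{S}{2}$ is impossible, since $v\notin S$ is incident to no clique edge; hence every incident edge fails and we reach the desired contradiction.

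I expect the main obstacle to be this non-clique-vertex case, and specifically the fact that each edge agent is \emph{indifferent} between its two endpoints: this tie is precisely what prevents an unmatched clique edge or a matched pendant edge from blocking with an adjacent vertex, so the argument must be phrased carefully in terms of weak rather than strict preference. The other delicate point is the counting in the first part, where one must confirm that the surplus edge agents are exactly $\binom{S}{2}$ and that the lone removed edge $e^*$ is correctly accounted for on the side of $e^{\dagger}$ rather than among the $Y$-matched edges.
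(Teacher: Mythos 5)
Your proof is correct and follows essentially the same route as the paper's, which simply defers to the analogous argument for the stability of $M_1$ under $\mathcal{P}_1$: identify the $U$-agents that already receive a most-preferred partner, reduce to the remaining cases (here $V\setminus S$, $e^{\dagger}$, and the odd path agents instead of $V\setminus S^*$ and $y^*$), and use the ties in the edge agents' preferences over their endpoints to kill the vertex-agent case. You correctly spell out the counting for the unmatched set $\binom{S}{2}$ and correctly isolate the single swap at $y^{\dagger}$ as the reason $\{e^{\dagger},y^{\dagger}\}$ no longer blocks, which is exactly the observation the paper highlights.
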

  \begin{proof}\renewcommand{\qedsymbol}{(of \cref{cl:ISMTiesIncompleteHardness_matching2stable})~$\diamond$}
    Since $S$ is a clique with pendant edges for $G-e^*$, we can infer that
    $e^* \notin \binom{S}{2} \cup \pen(S)$.
    Thus, $M_2$ is indeed a matching.
    Moreover, by the definition of $M_2$ (see \cref{def:M_2}\eqref{eq:M2-pendant}, \eqref{eq:M2-non-clique-non-pendant}, \eqref{eq:M2-non-clique-non-pendant2}) and similarly to the proof for \cref{cl:ISMTiesIncompleteHardness_matchingstable}, we can conclude that exactly those edge agents  which correspond to some edge from $\binom{S}{2}$ are unmatched under $M_2$.

    As for the stability proof, observe that $y^{\dagger}$'s most preferred agent in $\mathcal{P}_2$ is agent~$y^*$.
    Using a proof similar to the one for \cref{cl:ISMTiesIncompleteHardness_matchingstable}, we can conclude that $M_2$ is stable for $\mathcal{P}_2$.
  \end{proof}
   
   % Note that we can assume without loss of generality that whenever an edge agent is matched to
   % an agent from~$Y$ in both~$M_1$ and~$M_2^{C'}$, then it is matched to the same agent.
   % Analogously whenever a vertex agent is matched to an agent from~$X$ in both~$M_1$ and~$M_2^{C'}$,
   % then it is matched to the same agent.
  Now, we upper-bound the size~$|M_1 \Delta M_2|$.
  \begin{claim}\label{claim:sym-diff-bound}
     It holds that $|M_1 \Delta M_2| \le k$.
  \end{claim}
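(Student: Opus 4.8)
The plan is to first turn the two-sided symmetric-difference count into a one-sided one. By \cref{cl:ISMTiesIncompleteHardness_matchingstable} and \cref{cl:ISMTiesIncompleteHardness_matching2stable}, the unmatched agents of $M_1$ and of $M_2$ are exactly $\binom{S^*}{2}$ and $\binom{S}{2}$, so $|\bot(M_1)| = |\bot(M_2)| = \binom{h}{2}$. Since both matchings live on the same agent set, this forces $|M_1| = |M_2|$ and hence $|M_1\setminus M_2| = |M_2\setminus M_1|$, giving $|M_1\Delta M_2| = 2\,|M_1\setminus M_2|$. It therefore suffices to show $|M_1\setminus M_2| \le k/2 = \binom{h}{2} + 3h + 2$.

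Next I would bound $|M_1\setminus M_2|$ by walking through the five groups of edges defining $M_1$ in \cref{def:M_1} and checking, group by group, which of them reappear in $M_2$ (\cref{def:M_2}). The path edges $\{q_i,q_{i+1}\}$ are identical in both matchings and contribute nothing, and the special edge $\{e^{\dagger},y^{\dagger}\}$ is always lost (in $M_2$ agent~$e^{\dagger}$ pairs with $e^*$ and $y^{\dagger}$ with $y^*$), contributing~$1$. A vertex--$X$ edge $\{v,M_1(v)\}$ survives exactly when $v\in V\setminus(S\cup S^*)$, so only the $|S\setminus S^*|\le h$ vertices of $S\setminus S^*$ lose it. The $h$ pendant edges $\{v,\pen_{S^*}(v)\}$ of $S^*$ contribute at most~$h$, since such an edge can persist only for $v\in S\cap S^*$ with coinciding pendant choices, which can only help. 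Finally, a $Y$--edge pair $\{y,M_1(y)\}$ survives exactly when $y\neq y^*$ and its edge lies in $E\setminus E(S)$; the lost ones are the pair incident to $y^*$ together with those whose edge lies in $E(S)\setminus E(S^*)$, at most $1 + |E(S)\setminus E(S^*)| \le 1 + \binom{h}{2} + h$.

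Summing these contributions yields $|M_1\setminus M_2| \le 1 + h + h + \bigl(1 + \binom{h}{2} + h\bigr) = \binom{h}{2} + 3h + 2$, and doubling gives $|M_1\Delta M_2| \le 2\binom{h}{2} + 6h + 4 = h^2 + 5h + 4 = k$, as claimed. The one genuinely delicate point is the $Y$-side bookkeeping: I must use that $M_1$ matches $Y$ bijectively onto $E\setminus E(S^*)$ while $M_2$ rematches $Y\setminus\{y^*\}$ into $E\setminus E(S)$, so that the edges of $E\setminus(E(S)\cup E(S^*))$ keep their $Y$-partner, those of $E(S)\setminus E(S^*)$ force a rematch, and $y^*$ peels off to join $y^{\dagger}$. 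Everything else is a routine edge-by-edge tally, and treating all possible pendant-edge coincidences as absent only shrinks $M_1\cap M_2$, keeping the estimate a valid upper bound.
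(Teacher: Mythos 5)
Your proof is correct and follows essentially the same route as the paper's: both reduce the symmetric difference to a one-sided count (you via $|M_1|=|M_2|$ and $|M_1\Delta M_2|=2|M_1\setminus M_2|$, the paper via the equivalent observation that every agent of $U$ is matched in both matchings, so it suffices to double the number of $U$-agents whose partner changes), and both then tally the same groups --- the special pair, the vertex agents of $S\cup S^*$, and the rematched $Y$-agents bounded by $|E(S)\setminus E(S^*)|\le\binom{h}{2}+h$ --- arriving at the identical bound $\binom{h}{2}+3h+2$ per side. No gaps.
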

    \begin{proof}\renewcommand{\qedsymbol}{(of \cref{claim:sym-diff-bound})~$\diamond$}
      To prove this, note that we only need to count the number of agents from $U$ who change their partners because both $M_1$ and $M_2$ match all agents from $U$ and unmatch exactly $\binom{h}{2}$ agents from $E$ (see \cref{cl:ISMTiesIncompleteHardness_matchingstable} and \cref{cl:ISMTiesIncompleteHardness_matching2stable}).
   Thus,  $|M_1\Delta M_2| \le 2 \cdot |\{u\in U \mid M_1 (u)\neq M_2(u)\}|$.
   Between $M_1$ and $M_2$, the following agents from $U$ may have changed their partners.
   \begin{enumerate}[(1)]
     \item The clique vertex agents from~$S$, 
     \item\label{bound-non-clique} the non-clique vertex agents from $S^* \setminus S$,
     \item\label{bound-Y} the agents~$y$ from $Y\setminus \{y^*\}$ with $M_2(y)\in
     \big(E\setminus E(S)\big) \setminus \big( E\setminus E(S^*)\big)$,
     \item the agents~$y^{*}$, and $e^{\dagger}$.
   \end{enumerate}
   Before we continue with the bound, let us recall a property in set theory.
   \begin{align}
 \text{ For each three sets~} A, B, C \text{ it holds that }     A\setminus (B\setminus C) = (A\cap C) \cup (A\setminus B).
   \end{align}
   Substituting $A=E\setminus E(S)$, $B=E $, and $C=E(S^*)$, we conclude the following for the set of agents described in \eqref{bound-Y}.
   \begin{align*}
     (E \setminus E(S)) \setminus (E\setminus E(S^*)) = \big( (E \setminus E(S) \cap E \big) \cup \big( (E\setminus E(S)) \setminus E\big)
     = (E\setminus E(S)) \cap E(S^*) \subseteq E(S^*).
   \end{align*}
%   Analogously, the set of agents described in~\eqref{bound-Y} is a subset of $E(S^*)=\binom{S^*}{2} \cup \pen(S^*)$.
   Summarizing, by the definition of $E(S^*)$, the number of agents from $U$ that changed their partners is at most $|S|+|S^*|+|\binom{S^*}{2}|+ |\pen(S^*)|+2 = 3h+\binom{h}{2}+2$.
   Thus, $|M_1 \Delta M_2| \le 2\cdot (3h+\binom{h}{2}+2) = h^2+5h+4$.
 \end{proof}

   For the ``if'' direction,
   let $M_2$~be an arbitrary stable matching for~$\mathcal P_2$ with $|M_1 \Delta M_2| \le k$.
   Our correctness proof mainly relies on the following claim.
   \begin{claim} \label{cl:ISMTiesIncompleteHardness_matchingimpliesclique}
%     Let $M_2$~be some stable matching for profile~$\mathcal P_2$ with $|M_1 \Delta M_2|\le k$.
     Let $Z \subseteq V$ be the set of vertex agents that are not matched to~$X$ in~$M_2$
     and let~$F =  E \cap \bot(M_2)$ be the set of unmatched edge agents. %\todohua{Hua: Should one show that $e^* \not in F$?}
%     \todo[inline]{rb: Yes and we indeed observe this in the very first line of the proof. Now we also state this to make it clearer.}
     The following holds.
     \begin{enumerate}[(i)]
       \item $e^* \notin F$.\label{eq:notinF}
       \item $|Z|=h$.\label{eq:size-Z}
       \item $F = \binom{Z}{2}$.\label{eq:F=binom{Z}{2}}
       \item Every agent from~$Z$ is matched to an edge agent from $E \setminus \binom{Z}{2}$. \label{eq:pendant-edges}
%     In particular, $|\{u^*,v^*\}\cap Z| \le 1$.
       \end{enumerate}
     \end{claim}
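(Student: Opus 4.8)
The plan is to fix an arbitrary stable matching $M_2$ for $\mathcal{P}_2$ with $|M_1 \Delta M_2| \le k$ and determine, step by step, which agents are matched and to whom, proving the four parts in the order (i), (ii), then (iii) and (iv) jointly. For (i), I would first note that in $\mathcal{P}_2$ the agents $y^{\dagger}$ and $y^*$ are each other's unique top choice (since $y^{\dagger}\colon y^* \succ e^{\dagger}$ and $y^*\colon y^{\dagger} \succ E$), so $\{y^{\dagger},y^*\} \in M_2$ in \emph{every} stable matching, as otherwise this mutually most preferred pair would block. Because $e^{\dagger}\in U$ only accepts $y^{\dagger}$ and $e^*$, and $y^{\dagger}$ is now taken, $e^{\dagger}$ must be matched to $e^*$ (that $e^{\dagger}$ is matched at all follows since an unmatched $e^{\dagger}$ blocks with $e^*$, whose top choice is $e^{\dagger}$). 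Hence $e^*$ is matched, i.e. $e^* \notin F$.

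For (ii), the heart is the path gadget $q_1,\dots,q_t$ with $t>k$. I would first argue that the budget forces $q_1$ to be matched to $q_2$: if not, then $q_1$ (which cannot be unmatched, since $\{q_1,q_2\}$ would block) is matched to some $x \in X$, and a forced cascade pushes the whole path into the shifted pattern $q_2q_3,q_4q_5,\dots$ with $q_t$ unmatched --- at each step an unmatched $q_{2i}$ would block with $q_{2i+1}$, so $q_{2i}$ is pinned to $q_{2i+1}$. This displaces all $t$ path agents, contributing more than $t>k$ to $|M_1 \Delta M_2|$, a contradiction. With $q_1$ matched to $q_2$, any unmatched $x\in X$ would block with $q_1$ (who prefers every $x$ to $q_2$), so all of $X$ is matched, and necessarily to vertex agents (not to the occupied $q_1$). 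Since $|X|=|V|-h$, exactly $|V|-h$ vertices are matched into $X$, leaving $|Z|=h$.

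For (iii) and (iv), I would run the same cascade to conclude that the entire path is matched as in $M_1$; together with $\{y^{\dagger},y^*\}$, $\{e^{\dagger},e^*\}$, and the matched $X$, this makes every $W$-agent that is not an edge agent matched. As all of $Y\setminus\{y^*\}$ is matched too (an unmatched $y$ would block with any unmatched edge agent, of which there are $\binom{h}{2}\ge 1$), the only potentially unmatched $U$-agents are $Z$-vertices; let $z_0$ be their number. Counting $|U|$ against $|W|$ then yields $|F| = z_0 + \binom{h}{2}$. On the other hand, an unmatched edge $e\neq e^*$ cannot be incident to a vertex of $V\setminus Z$ (such a vertex, matched into $X$, would block with $e$, since it prefers any incident edge to $X$ and $e$ prefers any incident vertex to being unmatched), so $F \subseteq \binom{Z}{2}$ and thus $|F|\le \binom{|Z|}{2}=\binom{h}{2}$. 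Combining the two bounds forces $z_0=0$ and $F=\binom{Z}{2}$, which is (iii) (and shows $Z$ is a clique). Finally, each $Z$-vertex, being matched but not into $X$, is matched to an incident edge $e_v$; since $e_v$ is matched it lies outside $F=\binom{Z}{2}$, giving (iv), and the $e_v$ are exactly pendant edges witnessing that $Z$ is a clique with pendant edges of size $h$ for $G-e^*$.

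I expect the path-gadget cascade --- pinning $q_1$ to $q_2$ (and the whole path to its $M_1$-pattern) using $t>k$ --- to be the main technical obstacle, since it requires a careful case analysis of blocking pairs along the path; the remaining delicacy is the counting argument that simultaneously forces $z_0=0$, $F=\binom{Z}{2}$, and the clique structure, which I would present as the clinching step.
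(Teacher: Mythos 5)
Your proposal is correct and follows essentially the same route as the paper's proof: forcing $\{y^{\dagger},y^*\}$ and $\{e^{\dagger},e^*\}$ for (i), the path-gadget cascade with $t>k$ to pin down $X$ for (ii), and $F\subseteq\binom{Z}{2}$ via blocking pairs combined with a counting argument for (iii) and (iv). The only (harmless) difference is in the counting step: the paper lower-bounds $|F|$ by enumerating the possible partners of edge agents in $Z\cup(Y\setminus\{y^*\})$, whereas you do a global $|U|$-versus-$|W|$ count that additionally yields $z_0=0$ and thereby slightly streamlines (iv).
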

\begin{proof}\renewcommand{\qedsymbol}{(of \cref{cl:ISMTiesIncompleteHardness_matchingimpliesclique})~$\diamond$}
  For the first statement,  observe that by the preferences of the agents from $\{y^*, e^{\dagger}, y^{\dagger}, e^*\}$ in $\mathcal{P}_2$ and by the stability of $M_2$ for $\mathcal{P}_2$ it must hold that $M_2(e^{\dagger}) = e^*$ and $M_2(y^{\dagger}) = y^*$.
  Thus, $e^{*}$ is matched, implying that $e^{*}\notin F$.
  
  For the second statement, note that $M_2$ must match every agent from~$X$ to some agent from~$V$ as otherwise there would be an agent $x \in X$ with $M_2(x) = q_1$.
  Every stable matching, however, that contains~$\{x,q_1\}$ must also match $q_{2i}$ with $q_{2i+1}$ for every number~$i \in \{1, 2, \ldots, {t/2-1}\}$.
  The symmetric difference between $M_1$ and $M_2$ is at least $2\cdot (t/2-1 + 1) = t > k$---a contradiction.
%  Since $t>2k+1$ it clearly follows that $|M_1 \Delta M_2| > k$---a contradiction.
  Since $|X|=|V|-h$ it follows that exactly $h$~agents from $V$ remain that are \emph{not} matched to~$X$.
  By the definition of $Z$, we immediately have
%    \begin{align*}
    $|Z|=h$. %\label{eq:size-Z=ell}
  %  \end{align*}

    To show the third statement, first, we show that
    \begin{align} F \subseteq \binom{Z}{2}. \label{eq:F<=binom{Z}{2}}\end{align}
    Towards a contradiction, suppose that there is an edge~$e \in F \setminus \binom{Z}{2}$, i.e.,
    there exists an unmatched edge agent~$e \in F$ for which at least one of its endpoint is matched to~$X$.
    Let $v \in e$ be such an incident vertex of $e$ with $M_2(v)\in X$. %; recall that this means that $M_2(v)\in X$, and thus $v$ exists.
    Note that by the first statement, we have that $e \neq e^*$ since $e$ is unmatched under $M_2$.
    Now, we claim that $M_2$ is not stable since $\{v,e\}$ is a blocking pair.
    Indeed, edge agent~$e$ is not matched and prefers to be matched with~$v$ and $v$ finds~$e$ better than its current partner (from~$X$).
    Thus, $M_2$~is not stable---a contradiction.

    Second, we show that
    $|F| \ge \binom{h}{2}$. %\label{eq:binom{Z}{2}<= F}
    To see this, observe that by the preferences of the agents from $E\setminus \{e^*\}$,
    each edge agent from $E\setminus \{e^*\}$ is either unmatched or matched to a vertex agent from $V$
    or matched to an agent from $Y\setminus \{y^*\}$.
    Now, recall by the definition of $Z$ that each agent from $V\setminus Z$ is matched to some agent from $X$.
    Thus, each edge agent from $E\setminus \{e^*\}$ is either unmatched or matched to an agent from $Z\cup Y\cup \{y^*\}$.
    By the definition of~$F$ and since $M_2(e^*)=e^{\dagger}$, it follows that
    \begin{align*}
      |F| \ge |E\setminus \{e^*\}| - |Z\cup (Y\setminus \{y^*\})| = |E|-1 -  (h+|E|-\binom{h}{2}-h-1) = \binom{h}{2}.
    \end{align*}
    The second to last equation holds because $|Z|=h$ (see the second statement).
    Together with \eqref{eq:F<=binom{Z}{2}}, we infer that $|F|=\binom{h}{2}$ and thus $F=\binom{Z}{2}$.

    Finally, to show the last statement, consider an arbitrary agent~$z\in Z$.
    By the definition of $Z$, it follows that $z$ is either unmatched or matched to some agent from $E\setminus F$.
%    Third, we show that every agent from~$Z$ is matched to some edge agent from $E \setminus \binom{Z}{2}$.
%    Assume towards a contradiction that
%    there exists an agent~$z \in Z$ which is not matched to any edge agent from $E \setminus \binom{Z}{2}$.
 %   Then, $z$ is not matched to any edge agent from $E$ because of the third statement.
 %   However,
    By the definition of $F$ and by the third statement, there exists an unmatched edge agent~$e\in F$
    which is incident to $z$.
    By the stability of $M_2$ it must hold that $z$ is matched to some edge agent from $E\setminus F$, i.e.,  from $E\setminus \binom{Z}{2}$.
    %
    %
   %  By the definition of $Z$, this agent is not matched to any agent from $X$ either.
   %  This implies that $z$ is unmatched under $M_2$ because by the third statement we know that $F=\binom{Z}{2}$, meaning that no edge agent from $\binom{Z}{2}$ is matched under $M_2$.
   %  Then, $\{z\}$
   %  Note that this implies that this agent can not be matched to any edge agent because $F = \binom{Z}{2}$.
   %  It follows that~$z$ forms a blocking pair with every (unmatched) edge~$F=\binom{Z}{2}$---a contradiction.
   %  \todoJJ{JJ: I propose the following counting proof for the second and the third:
   %  According to the preferences, agents from $E \setminus \{e^*\}$ can only be matched to agents from $Y \cup V$.
   %  Moreover, since $M_2(y^*)=y^\dagger$ and agents from $V \setminus Z$ are all matched to agents from $X$ in $M_2$, agents from $E \setminus \{e^*\}$ can only be matched to agents from $(Y\setminus \{y^*\}) \cup Z$. 
   %  Hence, the number of unmatched edge agents $|F| \ge |E \setminus \{e^*\}|-|(Y\setminus \{y^*\}) \cup Z|=\binom{h}{2}$. 
   %  On the other hand, we have shown that $F \subseteq \binom{Z}{2}$ and $|Z|=h$, so $|F| \le \binom{h}{2}$.
   %  Therefore, it must be that $|F|=\binom{h}{2}$ and $F = \binom{Z}{2}$.
   %  Now since $|F| =\binom{h}{2} = |E \setminus \{e^*\}|-|(Y\setminus \{y^*\}) \cup Z|$, it follows that every agents from $(Y\setminus \{y^*\}) \cup Z$ is matched to some edge agent from $E$.
   %  Togethter with that agents from $F=\binom{Z}{2}$ are not matched, we have that every agent from from $Z$ is matched to some edge agent from $E \setminus \binom{Z}{2}$.
   % }
  \end{proof}
  We construct our clique~$S$ with pendant edges~$\pen_S$ of size~$h$ for~$G-e^*$ by applying \cref{cl:ISMTiesIncompleteHardness_matchingimpliesclique} as follows.
  The clique is set to $S\coloneqq Z$ and for each vertex~$v\in S$ we set its pendant edge~$\pen_S(v)$ to $M_2(v)$.
%  By \cref{cl:ISMTiesIncompleteHardness_matchingimpliesclique}
 % and observing that~$Z$ is our desired clique.%\todohua{Hua: I miss the statement that $Z$ is a clique in $G-e^*$. Some statement that states that $e^*\notin F$ or that $|\{u^*,v^*\}\cap Z| \le 1$.}
%  \todo[inline]{rb: I think it should be as simple is this:}
  Note that by \cref{cl:ISMTiesIncompleteHardness_matchingimpliesclique}\eqref{eq:notinF} it holds that $e^* \notin F$.
  Together with and \cref{cl:ISMTiesIncompleteHardness_matchingimpliesclique}\eqref{eq:F=binom{Z}{2}}
  we infer that $|\{u^*,v^*\}\cap Z| \le 1$ and, hence, $Z$~is indeed a clique in $G-e^*$.
  By \cref{cl:ISMTiesIncompleteHardness_matchingimpliesclique}\eqref{eq:pendant-edges}, the defined pendant edges are all from $G-e^*$.
\end{proof}
}

In the following, we show that with respect to the number of common pairs between the target stable matching and the initial stable matching
the problem is parameterized intractable, even if the two input profiles differ by only two swaps.
The corresponding parameterized reduction is from \textsc{Independent Set}.

\todo[inline]{Update the constructed instance ($k$ and $k'$) in the proof for \cref{thm:ISMTiesDualParameter}---Hua}

\begin{theorem}\label{thm:ISMTiesDualParameter}
  \ISM with ties is \woneh{} 
  with respect to~$k'=|M_1 \cap M_2|$ of common pairs,
  even if\/ $\left| \mathcal{P}_1 \oplus \mathcal{P}_2 \right| = 2$.
\end{theorem}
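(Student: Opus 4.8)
The plan is to give a parameterized reduction from \IS{} (which is \woneh{} with respect to the solution size~$h$) to \ISM{} with ties, engineered so that the \emph{common} pairs of the two matchings encode an independent set. The guiding idea is to make~$M_1$ correspond to the configuration ``every vertex is selected'', to make every stable matching~$M_2$ of~$\mathcal P_2$ correspond to selecting an \emph{independent} set, and to arrange the counting so that exactly the selected vertices contribute shared pairs. For each vertex~$v_i$ I would install a constant-size gadget (a four-agent alternating cycle with ties already suffices to realise two weakly stable local states) with a \emph{selected} state~$A_i$ whose pairs lie in~$M_1$ and an \emph{unselected} state~$B_i$ sharing no pair with~$M_1$. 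With this, the contribution of the vertex gadgets to~$|M_1\cap M_2|$ is twice the number of selected vertices, so counting common pairs becomes counting selected vertices.

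Next I would encode the edges by adapting the ``prefer a neighbour'' trick of \citet{feder1992new}, which already drives \cref{prop:InSR no tie NP-hard}, to the bipartite marriage setting: for each edge~$\{v_i,v_j\}$ a small conflict gadget (an intermediate agent together with ties) is routed so that, \emph{once the conflict is active}, having both endpoints in their selected state exposes a blocking pair, while selecting at most one endpoint is harmless. The essential use of ties is that in~$\mathcal P_1$ these conflict gadgets must be \emph{inert}: there the partner used in the selected state is tied with the neighbour-agent, so under weak stability no blocking pair arises and the all-selected matching~$M_1$ is stable for~$\mathcal P_1$. Thus every stable~$M_2$ for~$\mathcal P_2$ selects an independent set, and conversely any independent set can be realised by a suitable choice of states.

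The heart of the construction---and what I expect to be the main obstacle---is to obtain a \emph{global} effect from a profile difference of exactly two swaps, i.e.\ $\left|\mathcal P_1\oplus\mathcal P_2\right|=2$. A single swap breaks only one tie locally, so neither the activation of all conflict gadgets nor the forced displacement of the remaining pairs can be done gadget by gadget. My plan is to funnel everything through one global \emph{trigger} agent whose list differs from~$\mathcal P_1$ to~$\mathcal P_2$ by swapping the relative order of its two potential partners, exactly as in the mechanism behind \cref{thm:ISMTiesIncompleteHardness}. Flipping this trigger must achieve two things at once through a shared propagation structure: it must switch on the edge conflicts, and it must force the \emph{entire} non-selected bulk of the matching to move to new pairs. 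The second point is crucial for the parameter: unlike the guard path of \cref{thm:ISMTiesIncompleteHardness}, which stays fixed and is therefore invisible to the symmetric-difference parameter, here the propagation structure must lie \emph{entirely} in~$M_1\Delta M_2$ (contribute \emph{zero} common pairs), since any pairs it left unchanged would inflate~$k'=|M_1\cap M_2|$. Hence the trigger has to force a full cascade/rotation through all auxiliary agents, with the vertex gadgets acting as branch points at which one may ``exit'' the cascade only by keeping a vertex selected---and keeping two adjacent vertices selected is blocked by the now-active conflict gadget.

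Finally I would set the distance bound~$k$ so that ``$\dist(M_1,M_2)\le k$'' is equivalent, via $\dist=|M_1|+|M_2|-2\,|M_1\cap M_2|$, to requiring at least~$2h$ common pairs (concretely $k\coloneqq |M_1|+|M_2|-4h$). Since every common pair must come from a selected vertex and the active conflicts force the selected set to be independent, such an~$M_2$ exists if and only if~$G$ has an independent set of size~$h$; note that~$k$ itself is large, which is why the statement concerns the parameter~$k'$ and not~$k$. The shared-pair threshold, and hence the parameter~$k'$, is~$\Theta(h)$, giving a valid parameterized reduction and \woneh{ness} for~$k'$ at swap distance~$2$. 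The two correctness directions would be verified by the same style of blocking-pair case analysis used in the proof of \cref{thm:ISMTiesIncompleteHardness}; I expect the delicate steps to be (i) showing that the two swaps make~$M_1$ unstable and force the whole propagation structure to change, so that it contributes no common pairs, and (ii) the exact bookkeeping that identifies the common-pair count with the size of the selected independent set.
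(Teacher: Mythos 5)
Your plan coincides with the paper's proof essentially step for step: the paper also reduces from \textsc{Independent Set}, uses a four-agent tied cycle per vertex whose ``selected'' state preserves the $M_1$ pairs, attaches Feder-style edge-conflict gadgets that ties render inert in~$\mathcal{P}_1$, breaks a tie at one point of a cyclic chain of auxiliary agents so that the resulting fixed pair cascades around the whole chain and forces every auxiliary pair into $M_1\Delta M_2$, and then identifies $|M_1\cap M_2|\ge 2h$ with the existence of a size-$h$ independent set. The only detail you gloss over is that the paper runs two parallel chains (through the $c_\ell,a_\ell,e_\ell$ agents and through the $d_\ell,b_\ell,f_\ell$ agents) and so breaks one tie in each of the two agents $c_0$ and $d_0$ to realise swap distance exactly~$2$, rather than funnelling everything through a single trigger agent.
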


\appendixproof{thm:ISMTiesDualParameter}
{
\begin{proof}
  We show this by reducing from the \woneh{} \textsc{Independent Set} problem, parameterized by the solution size.
  The parameterized \textsc{Independent Set} problem has, as input, an undirected graph~$G$ and an integer~$h \in \mathds{N}$--the parameter--and asks whether $G$ admits an \emph{independent set} of size~$h$, i.e.\ an $h$-vertex subset of~$V'\subseteq V(G)$ of pairwise nonadjacent vertices.
  Let $I=(G,h)$ be an instance of \textsc{Independent Set}.
  Further, let $V(G)=\{v_1,\ldots, v_n\}$ and $E(G)=\{e_0,\ldots, e_{m-1}\}$ denote the set of vertices and the set of edges in $G$ respectively; note that we start the index of the edges with zero to simplify our reasoning later.
  We construct an instance~$(P_1, P_2, M_1)$ of \ISM{} with ties,
  where the two profiles~$P_1$ and $P_2$ are preference profiles for two disjoint sets of agents, $U\cup W \cup E \cup F \cup A \cup B$ and $X\cup Y \cup H \cup C \cup D$,
  and the matching~$M_1$ is stable for $P_1$ such that $P_1$ and $P_2$ will differ from each other by only two swaps.
  We will be searching for stable matching~$M_2$ for $P_2$ with $|M_1 \cap M_2| \ge 2h$.

  \paragraph{The agent sets.}
  For each vertex~$v_i \in V(G)$, we introduce four agents~$u_i,w_i,x_i,y_i$,
  and add them to $U$, $W$, $X$, $Y$, respectively.

  For each edge~$e_\ell \in E(G)$ with two endpoints~$v_i$ and $v_j$, do the following.
  Introduce four agents~$e_\ell$, $f_\ell$, $h_\ell^{u_i}$, and $h_\ell^{u_j}$.
  Add $e_\ell$ to $E$, $f_\ell$ to $F$, and $h_{\ell}^{u_i}$ as well as $h_\ell^{u_j}$ to $H$.
  Additionally, introduce the auxiliary agents~$a_\ell$, $b_\ell$, $c_\ell$, and $d_\ell$,
  and add them to $A$, $B$, $C$, and $D$, respectively.

  \paragraph{The preference lists of the agents from $U$, $W$, $X$, and $Y$.}
  The preference lists of these agents are the same in both $P_1$ and $P_2$,
  and are constructed in such a way that an arbitrary stable matching must match these agents among themselves.
  Here, $[\star]$ means that the elements in~$\star$ are ranked in an arbitrary but fixed order, while~$(\star)$ means that the elements in~$\star$ are tied.
  \begin{alignat*}{4}
    \forall i \in [n], &\quad & u_i \colon & x_i \succ [ \{ h_\ell^{u_i} \mid  e_\ell \in E(G) \text{ with } v_i \in e_{\ell}\}] \succ y_i, &\qquad & x_i \colon && w_i \succ u_i,\\
    && w_i \colon & y_i \succ x_i && y_i \colon && u_i \succ w_i.\
  \end{alignat*}

  \paragraph{The preference lists of the agents from $E$, $F$, and $H$.}
  The preference lists of these agents are the same in both $P_1$ and $P_2$.
   \begin{alignat*}{4}
     \forall e_\ell \in E[G]\text{ with } e_\ell=\{v_i,v_j\} \text{ and } i < j, &\quad &e_{\ell} &\colon (h^{u_i}_{\ell}, h^{u_j}_{\ell}) \succ  c_\ell, & \qquad & h_\ell^{u_i} \colon && (e_\ell, u_i, a_\ell) \succ f_\ell, \\
     & &f_{\ell} &\colon (h^{u_i}_{\ell}, h^{u_j}_{\ell}) \succ  d_\ell, & \qquad & h_\ell^{u_j} \colon && (e_\ell, u_j, b_\ell) \succ f_\ell.
   \end{alignat*}

   \paragraph{The preference lists of the auxiliary agents from $A$, $B$, $C$, and $D$.}
   Except for the agents $c_0$ and $d_0$, the preference lists of these agents are the same in both $P_1$ and $P_2$.
   We first describe the preference lists of all the auxiliary agents in $P_1$, and then only describe the preference lists of $c_0$ and $d_0$ in $P_2$ as the other will remain the same.
  The operations~``$\ell+1$'' and ``$\ell-1$'' are taken modulo $m$.
    \begin{alignat*}{4}
      \forall e_\ell \in E[G]\text{ with } e_\ell=\{v_i,v_j\} \text{ and } i < j, &\quad &
      a_{\ell} &\colon  c_{\ell+1} \succ h_\ell^{u_i}, & \qquad & c_\ell \colon && (a_{\ell-1}, e_\ell), \\
     && b_{\ell} &\colon  d_{\ell+1} \succ h_\ell^{u_j}, & \qquad & d_\ell \colon && (b_{\ell-1}, f_\ell).
   \end{alignat*}
  In $P_2$, the preference lists of agents~$c_0$ and $d_0$ will be changed so that the acceptable agents are not tied any more:
  \begin{alignat*}{3}
     c_0\colon & a_{m-1} \succ e_{0}, \text{ and } d_0 \colon & b_{m-1} \succ f_0.
   \end{alignat*}

   It is straight-forward to verify that $P_1$ and $P_2$ differ by only two swaps.

   \paragraph{The initial stable matching~$M_1$ of $P_1$.}
   For each vertex~$v_i \in V(G)$, let $M_1(u_i)=y_i$, $M_1(w_i)=x_i$.
   For each edge~$e_\ell \in E(G)$ with $e_\ell=\{v_i, v_j\}$ and $i < j$,
   let $M_1(e_\ell)=c_\ell$, $M_1(f_\ell)=d_\ell$, $M_1(a_\ell)=h_\ell^{u_i}$, and $M_1(b_\ell)=h_\ell^{u_j} $.
   One can verify that this matching is indeed a stable matching of~$P_1$.

   As already mentioned, we set the minimum number of common pairs between $M_1$ and the target matching~$M_2$ (which shall be stable in $P_2$) to be $2h$.
   This completes the construction.

   The following figure depicts a portion of acceptability graph of profile~$P_1$, where the labels on the edges denote the ranks of the agents.
   The edges marked with gray colors are part of  matching~$M_1$.

   \begin{tikzpicture}
     \node[agent] (e0) {};
     \node[above=0pt of e0]  {$e_0$};
     \node[agent, below = 4ex of e0] (f0) {};
     \node[below = 0pt of f0] {$f_0$};

     \node[agent, right = 10ex of e0] (c0) {};
     \node[above=0pt of c0]  {$c_0$};
     \node[agent, below = 4ex of c0] (d0) {};
     \node[below = 0pt of d0] {$d_0$};

     \node[agent, right = 10ex of c0] (am1) {};
     \node[above = 0pt of am1] {$a_{m-1}$};

     \node[agent, below = 4ex of am1] (bm1) {};
     \node[below = 0pt of bm1] {$b_{m-1}$};

     \node[agent, right = 10ex of am1] (hm11) {};
     \node[above = 0pt of hm11,xshift=2ex] (hm11t) {$h_{m-1}^{u_{i_{m-1}}}$};

     \node[agent, above = 1ex of hm11t, xshift=-12ex] (uim1) {};
     \node[left = 0pt of uim1] {$u_{i_{m-1}}$};

     \node[agent, below = 4ex of hm11] (hm12) {};
     \node[below = 0pt of hm12,xshift=2ex] (hm12t) {$h_{m-1}^{u_{j_{m-1}}}$};

     \node[agent, below = 1ex of hm12t, xshift=-12ex] (ujm1) {};
     \node[left = 0pt of ujm1] {$u_{j_{m-1}}$};

     \node[agent, right = 12ex of hm11] (em1) {};
     \node[above = 0pt of em1] {$e_{m-1}$};

     \node[agent, below = 4ex of em1] (fm1) {};
     \node[below = 0pt of fm1] {$f_{m-1}$};

     \node[agent, right = 10ex of em1] (cm1) {};
     \node[above=0pt of cm1]  {$c_{m-1}$};
     \node[agent, below = 4ex of cm1] (dm1) {};
     \node[below = 0pt of dm1] {$d_{m-1}$};

     \node[agent, right = 10ex of cm1] (am2) {};
     \node[above = 0pt of am2] {$a_{m-2}$};

     \node[agent, below = 4ex of am2] (bm2) {};
     \node[below = 0pt of bm2] {$b_{m-2}$};

     \node[right = 2ex of am2]  (rd1) {$\cdots$};
     \node[below = 4ex of rd1] (rd2) {$\cdots$};

     \node[left = 2ex of e0]  (ld1) {$\cdots$};
     \node[below = 4ex of ld1] (ld2) {$\cdots$};

     \foreach \i / \j / \w / \u in {0/m1/0/0, m1/m2/0/0} {
       \foreach \c/\d in {c/a,d/b} {
         \draw (\c\i) edge node[pos=0.2, fill=white, inner sep=1pt] {\scriptsize $\w$}  node[pos=0.76, fill=white, inner sep=1pt] {\scriptsize $\u$} (\d\j);
     }
   }

     \foreach \i / \j / \w / \u in {0/0/0/2, m1/m1/0/2} {
       \foreach \c/\d in {c/e,d/f} {
         \draw (\c\i) edge node[pos=0.2, fill=white, inner sep=1pt] {\scriptsize $\w$}  node[pos=0.76, fill=white, inner sep=1pt] {\scriptsize $\u$} (\d\j);
     }
   }

   \foreach \c/\d/\u/\w in {am1/hm11/0/1,bm1/hm12/0/1, hm11/uim1/1/0,hm12/ujm1/1/0} {
     \draw (\c) edge node[pos=0.2, fill=white, inner sep=1pt] {\scriptsize $\w$}  node[pos=0.76, fill=white, inner sep=1pt] {\scriptsize $\u$} (\d);
   }

   \foreach \j in {hm11, hm12} {
     \draw (\j) edge node[pos=0.2, fill=white, inner sep=1pt] {\scriptsize $0$}  node[pos=0.76, fill=white, inner sep=1pt] {\scriptsize $0$} (em1);
     \draw (\j) edge node[pos=0.2, fill=white, inner sep=1pt] {\scriptsize $3$}  node[pos=0.76, fill=white, inner sep=1pt] {\scriptsize $0$} (fm1);
   }

   % matching M1
   \begin{pgfonlayer}{background}
     \foreach \i / \j in {e0/c0, f0/d0, am1/hm11,bm1/hm12, em1/cm1,fm1/dm1} {
       \draw[line width=5pt, gray!30] (\i) edge  (\j);
     }
   \end{pgfonlayer}

\end{tikzpicture}

   Before we show that this is parameterized reduction, we observe the following.

   \begin{claim}\label{claim:properties:M2}
     Every stable matching~$M$ of $P_2$ satisfies the following.
     \begin{enumerate}
       \item For each vertex~$v_i \in V(G)$, we have that $\{M(u_i), M(w_i)\}=\{x_i,y_i\}$.
       \item For each edge~$e_{\ell}\in E(G)$, agents~$c_{\ell}$ and $d_{\ell}$ must be assigned some partners under $M$.
       \item For each edge~$e_\ell\in E(G)$ with $e_\ell=\{v_i,v_j\}$,
       we have that $M(c_\ell)=a_{\ell-1}$, $M(d_\ell)=b_{\ell-1}$, and $\{M(h^{u_i}_{\ell-1}), M(h_{\ell-1}^{u_j})\} = \{e_{\ell-1}, f_{\ell-1}\}$.
    \item For each vertex~$v_i\in V(G)$ and each edge~$e_{\ell}\in E(G)$ with $v_i \in e_{\ell}$,
       if $M(u_i)=y_i$, then $M(h^{u_i}_{\ell})=e_{\ell}$.
    \end{enumerate}
   \end{claim}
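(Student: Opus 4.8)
The plan is to prove the four statements in the order listed, since each later one leans on the earlier ones, using throughout only weak stability of $M$ and the explicit preference lists of $P_2$; all index arithmetic on edge indices is modulo $m$. For the \emph{first statement}, note that $x_i,w_i,y_i$ accept only agents inside the gadget $\{u_i,w_i,x_i,y_i\}$, and among these three the only acceptable pairs are $\{w_i,x_i\}$ and $\{w_i,y_i\}$, so at most one of $x_i,y_i$ can be matched to $w_i$. First I would rule out that $u_i$ is matched to an edge-agent $h_\ell^{u_i}$ or left unmatched: in that case at least one of $x_i,y_i$ is unmatched, and a two-case argument exhibits a blocking pair, namely $\{u_i,x_i\}$ if $x_i$ is free (it is $u_i$'s top choice) and $\{w_i,y_i\}$ if instead $y_i$ is free while $w_i$ holds $x_i$ (since $w_i$ prefers $y_i$ to $x_i$). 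Hence $M(u_i)\in\{x_i,y_i\}$, and a one-line follow-up forcing the complementary partner by stability yields $\{M(u_i),M(w_i)\}=\{x_i,y_i\}$. The \emph{second statement} is then immediate: $c_\ell$ is the unique top choice of $a_{\ell-1}$ and $d_\ell$ the unique top choice of $b_{\ell-1}$, so if $c_\ell$ (resp.\ $d_\ell$) were unmatched the acceptable pair $\{a_{\ell-1},c_\ell\}$ (resp.\ $\{b_{\ell-1},d_\ell\}$) would block $M$.

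The \emph{third statement} is the heart of the argument and the step I expect to be hardest, because it requires pushing a single local perturbation all the way around the auxiliary cycle. The base case is that in $P_2$ agent $c_0$ strictly prefers $a_{m-1}$ to $e_0$ while $a_{m-1}$ ranks $c_0$ first; together with the second statement this forces $M(c_0)=a_{m-1}$, and symmetrically $M(d_0)=b_{m-1}$. I would then run a simultaneous backward induction on $\ell=m-1,m-2,\dots,0$ proving the coupled invariant $M(a_\ell)=c_{\ell+1}$ and $M(b_\ell)=d_{\ell+1}$. In the inductive step the hypothesis says $h_\ell^{u_i}$ is not matched to $a_\ell$ and $h_\ell^{u_j}$ is not matched to $b_\ell$, while the first statement guarantees neither $h$-agent is matched to its incident vertex agent; thus each of $h_\ell^{u_i},h_\ell^{u_j}$ can only be matched to $e_\ell$, $f_\ell$, or nobody. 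Stability forbids leaving an $h$-agent unmatched, since then at least one of $e_\ell,f_\ell$ (whichever is not matched to the other $h$-agent) strictly prefers it and forms a blocking pair; hence $\{M(h_\ell^{u_i}),M(h_\ell^{u_j})\}=\{e_\ell,f_\ell\}$. Because $e_\ell$ and $f_\ell$ are now taken, the second statement forces $c_\ell,d_\ell$ onto their remaining partners, giving $M(c_\ell)=a_{\ell-1}$ and $M(d_\ell)=b_{\ell-1}$, which is exactly the invariant at index $\ell-1$. The delicate point, and the reason the first statement is indispensable, is that without it an $h$-agent could escape to a vertex agent and the rotation could die before the cycle closes; I would also stress that the $a$--$c$--$e$ and $b$--$d$--$f$ sides must be propagated in lockstep, as the edge gadget of $e_\ell$ couples them.

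For the \emph{fourth statement}, fix $v_i\in e_\ell$ with $M(u_i)=y_i$. Since $y_i$ is $u_i$'s least preferred partner, $u_i$ strictly prefers every $h_\ell^{u_i}$ to its current partner. By the third statement we already know $M(h_\ell^{u_i})\in\{e_\ell,f_\ell\}$; if it were $f_\ell$, then $h_\ell^{u_i}$ would strictly prefer $u_i$ (a top-tie partner) to $f_\ell$, so $\{u_i,h_\ell^{u_i}\}$ would block $M$. Hence $M(h_\ell^{u_i})=e_\ell$, completing the claim.
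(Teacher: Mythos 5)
Your proposal is correct and follows essentially the same route as the paper: statement (1) by showing an unmatched $x_i$ or $y_i$ yields a blocking pair, (2) from $c_\ell$ being $a_{\ell-1}$'s unique top choice, (3) by anchoring at the forced pair $\{c_0,a_{m-1}\}$ (and $\{d_0,b_{m-1}\}$) and propagating backwards around the auxiliary cycle, and (4) via the blocking pair $\{u_i,h_\ell^{u_i}\}$ when $M(h_\ell^{u_i})=f_\ell$. Your write-up is if anything slightly more careful, making the coupled backward induction and its invariant explicit where the paper appeals to ``a similar reasoning.''
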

   \begin{proof}\renewcommand{\qedsymbol}{(of \cref{claim:properties:M2})~$\diamond$}
     For the first statement, suppose, towards a contradiction, that there exists some vertex~$v_i$ such that $\{M(u_i), M(w_i)\} \neq \{x_i, y_i\}$.
     By the preference lists of $x_i$ and $y_i$, it follows that $x_i$ or $y_i$ is not assigned a partner by $M$.
     If agent~$x_i$ was not assigned a partner, then it will form with $w_i$ a blocking pair of $M$ as $x_i$ is the only agent which $w_i$ prefers most.
     Analogously, if agent~$y_i$ was not assigned a partner, then it will form with $u_i$ a blocking pair of $M$.

     For the second statement, observe that $a_{\ell-1}$ ranks $c_{\ell}$ at the first place.
     If agent~$c_{\ell-1}$ would have been unmatched under~$M$, then it would form with $a_{\ell-1}$ a blocking pair.
     Analogously, we can obtain that $d_{\ell-1}$ must also be assigned a partner.

     For the third statement, observe that in $P_2$, agents~$c_0$ and $a_{m-1}$ form a fixed pair, meaning that they prefer each other more than any other agents.
     Thus, agents~$c_0$ and $a_{m-1}$ must be matched with each other in any stable matching of $P_2$.
     Analogously, agents~$d_0$ and $b_{m-1}$ must be matched with each other in any stable matching of $P_2$.

     Now, let $v_{i_{m-1}}$ and $v_{j_{m-1}}$ be the two endpoints of edge~$e_{m-1}$. % with ${i_{m-1}} < {j_{m-1}}$.
     By the preference lists of $e_{m-1}$ and $f_{m-1}$, we know that $h_{m-1}^{u_{i_{m-1}}}$ and $h_{m-1}^{u_{j_{m-1}}}$ must be assigned some partners as otherwise the one that is unmatched (under $M$) will form with $e_{m-1}$ and $f_{m-1}$ two blocking pairs of $M$.
     By the first statement we have that $M(u_{i_{m-1}})\in \{x_{i_{m-1}}, y_{i_{m-1}}\}$ and since $a_{m-1}$ is already matched to $c_0$, agent~$h_{m-1}^{u_{i_{m-1}}}$ can only be matched to an agent that is either $e_{m-1}$ or $f_{m-1}$.
     Analogously, agent~$h_{m-1}^{u_{j_{m-1}}}$ can only be matched to an agent that is either $e_{m-1}$ or $f_{m-1}$.

     We have just reasoned that the partners of $e_{m-1}$ and $f_{m-1}$ under $M$ are from~$\{h_{m-1}^{u_{i_{m-1}}}, h_{m-1}^{u_{j_{m-1}}}\}$.
     By the preference lists of $c_{m-1}$ and $d_{m-1}$, using the second statement,
     we conclude that $M(c_{m-1})=a_{m-2}$ and $M(d_{m-1})=b_{m-2}$.
     By a similar reasoning as we did for $a_{m-1}$ and $b_{m-1}$, we can achieve our desired third statement.

     For the fourth statement, assume that $M(u_i)=y_i$. %$M(h_{\ell}^{u_i})=e_{\ell}$.
     Since $u_i$ prefers $h_{\ell}^{u_i}$ to its partner~$y_i$,
     by the stability of $M$,
     it follows that $h_{\ell}^{u_i}$ must find its partner~$M(h_{\ell}^{u_i})$ at least as good as $u_i$.
     This means that $M(h_{\ell}^{u_i})\neq f_{\ell}$.
     By the third statement, it follows that $M(h_{\ell}^{u_i})=e_{\ell}$.
   \end{proof}

   Now, we are ready to show the correctness of the construction, i.e.\
   $I$ admits an independent set of size~$h$ if and only if $P_2$ admits a stable matching~$M_2$ with $|M_1\cap M_2|\ge 2h$.
   For the ``only if'' direction, assume that $V'$ is an independent set of $G$ with $h$ vertices.
   We construct matching~$M_2$ as follows.
   \begin{enumerate}
     \item For each vertex~$v_i\in V$, if $v_i\notin V'$ is \emph{not} from the independent set,
     then let $M_2(u_i)=x_i$ and $M_2(w_i)=y_i$; otherwise let $M_2(u_i)=M_1(u_i)=y_i$ and $M_2(w_i)=M_1(w_i)=x_i $.
     \item For each edge~$e_{\ell}\in E(G)$, let $v_i$ and $v_j$ denote the endpoints of $e_{\ell}$ with $i< j$.
     If $v_j\in V'$ is from the independent set, implying that $v_i \notin V'$, then let $M_2(e_{\ell})=h_\ell^{u_j}$ and $M_2(f_{\ell})=h_{\ell}^{u_i}$.
     Otherwise, let $M_2(e_{\ell})=h_\ell^{u_i}$ and $M_2(f_{\ell})=h_{\ell}^{u_j}$.
     Let $M_2(a_\ell)=c_{\ell+1}$ and $M_2(b_\ell)=d_{\ell+1}$.
   \end{enumerate}

   It is straight-forward to verify that $|M_1\cap M_2| = 2h$ as they share at the pairs that correspond to the vertices in the independent set.
   Now, we focus on the stability of $M_2$.
   Towards a contradiction, suppose that $M_2$ is not stable in $P_2$, and let $p$ be a blocking pair of $M_2$.
   This pair~$p$ must involve some agent from $U\cup W\cup E \cup F \cup A \cup B$.

   First of all, one can verify that no agent~$w_i$ from $W$ would be involved in a blocking pair because $x_i$ is the only agent with which $w_i$ could form a blocking pair, but $x_i$ will already obtain its most preferred agent~$y_i$.
   Further, no agent from $E\cup F \cup A \cup B$ would be involved in a blocking pair as they already received one their most preferred agents.
   Thus, we obtain that $p$ involves some agent from $U$, say $u_z$.
   We know that $M_2(u_z)=y_z$ as otherwise we have that $M_2(u_z)=x_z$ which is $u_z$'s most preferred agent--a contradiction to $u_z$ being in blocking pair~$p$.
   By our definition of $M_2$, it follows that $v_z\in V'$ is from the independent set.
   Hence, the other agent in blocking pair~$p$ must be some agent~$h_{\ell}^{u_z}$ with $v_z \in e_{\ell}$ such that
   $h_{\ell}^{u_z}$ prefers $u_z$ to $M_2(h_{\ell}^{u_z})$.
   By the preference list of $h_{\ell}^{u_z}$ it follows that $M_2(h_{\ell}^{u_z})=f_\ell$ as otherwise $h_{\ell}^{u_z}$ will not form with $u_z$ a blocking pair.
   Let $v_{z'}$ be the other endpoint of $e_{\ell}$, implying that $v_{z'}\notin V'$.
   If $z< z'$, then by our definition of $M_2$, it follows that $M_2(h_{\ell}^{u_z})=e_{\ell}$--a contradiction.
   If $z' < z$, then by our definition of $M_2$, it also follows that $M_2(h_{\ell}^{u_z})=e_{\ell}$--a contradiction.
This shows that $M_2$ is indeed a stable matching of $P_2$ which shares with $M_1$ by $2h$~pairs.

For the other direction, assume that $M_2$ is a stable matching of $P_2$, sharing with $M_1$ by at least $2h$~pairs.
By the third statement in \cref{claim:properties:M2}, it follows that $M_2$ can only share with $M_1$ by the pairs that involve some agents from $U\cup W\cup X \cup Y$.
Now we construct a vertex subset according to these pairs.
Let $V'=\{v_i \in V\mid M_2(u_i)=y_i\}$.
Clearly, using the first statement in \cref{claim:properties:M2}, we have that $|V'|\ge h$ as $|M_1\cap M_2|\ge 2h$.
Now, to show that $V'$ is indeed an independent set, suppose, towards a contradiction,
that $V'$ has two adjacent vertices, denoted as $v_i$ and $v_j$.
Let $e_{\ell}$ be the incident edge of $v_i$ and $v_j$.
By the fourth statement of \cref{claim:properties:M2} we have that $M_2(h_{\ell}^{u_i})=e_\ell=M_2(h_{\ell}^{u_j})$--a contradiction to $M_2$ being a matching.
\end{proof}
}

Finally, we show that even a single swap in the preference list of one single agent makes the \ISR problem \woneh{} with respect to~$k$ when ties are allowed.
To show this result, we give a reduction from \EIIS.
   The construction idea is inspired by a reduction from \textsc{Vertex Cover}
   to \SR with structured preferences~\cite{BCFN17}, which, however,
   is relying on incomplete preferences and not showing parameterized intractability.

\begin{theorem}
  \label{thm:W1h-ISR-ties-ONESWAP}
  \ISR with ties and complete preferences is \woneh{} with respect to~$k=|M_1 \Delta M_2|$,
  even if\/ $\left| \mathcal{P}_1 \oplus \mathcal{P}_2 \right|=1$.
  It remains \np-hard even if\/ $\left| \mathcal{P}_1 \oplus \mathcal{P}_2 \right|=1$,
  and the sought stable matching~$M_2$ only needs to satisfy that $\left| M_1 \cap M_2 \right| \ge 0$.
\end{theorem}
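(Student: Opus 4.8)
The plan is to give a parameterized reduction from \EIIS, which is \np-hard and \woneh{} with respect to the independent-set size~$h$ by \cref{lem:W1h-EIIS-EDClPE}. Given an instance~$(G, e^*, S^*, h)$ with~$e^* = \{u^*, v^*\}$, I would build an \ISR instance with complete preferences and ties in which the profile~$\mathcal{P}_1$ encodes the graph~$G-e^*$ and the profile~$\mathcal{P}_2$ encodes~$G$, the two profiles differing by exactly one swap located inside the gadget for~$e^*$. The construction would use three kinds of gadgets, all of which are \emph{identical} in~$\mathcal{P}_1$ and~$\mathcal{P}_2$ except for the single controlling swap: a constant-size \emph{vertex gadget} for each~$v_i$ whose internal stable configuration records whether~$v_i$ is ``selected''; an \emph{edge gadget} for each edge~$\{v_i,v_j\}$ that admits a fixed local stable configuration whenever at most one endpoint is selected, but forces a blocking pair when \emph{both} endpoints are selected; and a global \emph{demand gadget} that forces every stable matching to select at least~$h$ vertices. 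Stability would therefore be possible exactly when the selected vertices form an independent set of size at least~$h$, i.e.\ (since sub-selections of independent sets remain independent) exactly when the underlying graph admits an independent set of size~$h$.

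The role of the single swap is to toggle the~$e^*$ gadget: I would design the controlling agent so that in~$\mathcal{P}_1$ the ``both endpoints selected'' configuration is \emph{not} blocked (so the~$e^*$ constraint is inactive, matching~$G-e^*$), whereas after the single swap in~$\mathcal{P}_2$ that same configuration \emph{is} blocked (so the constraint becomes active, matching~$G$). I would then let the initial matching~$M_1$ put each vertex gadget into the state dictated by~$S^*$ and each edge gadget into its canonical local configuration; since~$S^*$ is an independent set of~$G-e^*$ of size~$h$ and the~$e^*$ constraint is inactive in~$\mathcal{P}_1$, this~$M_1$ is stable for~$\mathcal{P}_1$ (note that~$S^*$ may well select \emph{both} endpoints of~$e^*$, which is precisely why~$M_1$ fails to be stable for~$\mathcal{P}_2$).

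For correctness of the \woneh{ness} with respect to~$k$, I would set~$k$ to a function of~$h$ alone and argue both directions. For the forward direction, an independent set~$S$ of size~$h$ in~$G$ yields a stable matching~$M_2$ of~$\mathcal{P}_2$ obtained from~$M_1$ by reconfiguring only the vertex gadgets whose selection status differs between~$S^*$ and~$S$; since~$|S|=|S^*|=h$ we have~$|S \,\triangle\, S^*|\le 2h$, and the crucial design property is that flipping one vertex's status costs only a \emph{constant} number of partner changes and, because the selected set stays independent, leaves every edge gadget in its fixed configuration, so that~$\dist(M_1,M_2)$ stays within the chosen bound~$k=f(h)$. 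For the reverse direction, any stable matching~$M_2$ of~$\mathcal{P}_2$ forces every vertex gadget into a well-defined selected/unselected state, the demand gadget guarantees at least~$h$ selected vertices, and the now-active edge gadgets (including~$e^*$) guarantee that the selected vertices are pairwise non-adjacent in~$G$, yielding an independent set of size~$h$. The \np-hardness statement with the trivial output condition~$|M_1\cap M_2|\ge 0$ is the same construction with the distance budget dropped: by the above, $\mathcal{P}_2$ admits \emph{any} stable matching if and only if~$G$ has an independent set of size~$h$, so deciding the existence of a stable matching for~$\mathcal{P}_2$ is already \np-hard.

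The main obstacle I anticipate is the gadget engineering that makes all three requirements hold simultaneously: (i) capturing the entire effect of adding the edge~$e^*$ by a \emph{single} swap while every other gadget is identical across~$\mathcal{P}_1$ and~$\mathcal{P}_2$; (ii) ensuring that flipping a vertex's selection status affects only a constant number of agents, so that~$\dist(M_1,M_2)$ is bounded by a function of~$h$ and does \emph{not} scale with vertex degrees---this is essential because in the \EIIS instances arising from \IS the set~$S^*$ can consist of high-degree vertices disjoint from the target set~$S$; and (iii) upgrading the incomplete-preference construction of \cite{BCFN17} to \emph{complete} preferences, which I would handle by appending the remaining agents at the bottom of each list as a single tie and verifying that these padding entries never create a blocking pair.
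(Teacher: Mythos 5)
Your high-level strategy coincides with the paper's: both reduce from \EIIS, both let the single swap (de)activate only the constraint coming from~$e^*$, both take $M_1$ to encode the given independent set~$S^*$ of~$G-e^*$, and both derive the \np-hardness statement for $|M_1\cap M_2|\ge 0$ from the observation that the constructed profile~$\mathcal{P}_2$ admits \emph{some} stable matching if and only if $G$ has an independent set of size~$h$ (and then always one within distance $O(h)$ of $M_1$). The difference is that everything you defer to ``gadget engineering'' is precisely where the paper's proof lives, and the paper resolves all three of your anticipated obstacles by \emph{not} building edge gadgets at all. Each vertex~$v_i$ is a single agent whose $\mathcal{P}_2$-list is $[C]\succ[N(v_i)]\succ x_1\succ\cdots\succ x_h\succ\ldots$, where $C$ is a set of $r-h$ cover agents and $x_1,\dots,x_h$ are selector agents; each~$x_i$ is forced (via a triple $x_i,y_i,z_i$ and \cref{cl:TopTGadget}) to be matched to a vertex agent, and the cover agents absorb the remaining $r-h$ vertex agents. ``Both endpoints of $\{v_i,v_j\}$ selected'' then makes $\{v_i,v_j\}$ \emph{itself} a blocking pair---no auxiliary edge agents are needed---and deactivating the $e^*$-constraint in~$\mathcal{P}_1$ is a single transposition of $x_1$ and $v_{r-1}$ in $v_r$'s list, so that $v_r$ prefers its $M_1$-partner~$x_1$ to~$v_{r-1}$. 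This automatically delivers your requirement~(ii): reselecting a vertex costs a constant number of partner changes independent of its degree, whence $|M_1\Delta M_2|\le 4h$; and your requirement~(iii) is handled exactly as you suggest, by appending all remaining agents at the bottom of every list (an arbitrary fixed order suffices, since no agent is ever matched into that region and hence no padding pair can block).

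As written, though, your proposal is a blueprint rather than a proof: the vertex, edge and demand gadgets are never instantiated, and the per-edge-gadget architecture you sketch is the risky part. A separate edge gadget that ``forces a blocking pair when both endpoints are selected'' must read the selection state of two vertex gadgets, i.e., its agents' stable configuration must depend on how the vertex agents are matched; making that work for every edge while keeping all non-$e^*$ gadgets literally identical across $\mathcal{P}_1$ and $\mathcal{P}_2$, keeping preferences complete, and confining the entire difference for~$e^*$ to one transposition is exactly the difficulty you flag in~(i), and it is not resolved in the proposal. The paper's construction shows that the cleanest way out is to let the two adjacent vertex agents themselves form the blocking pair, which collapses the edge gadget, makes the single-swap localization trivial, and gives the degree-independent flip cost for free.
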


\newcommand*\showwidth[1]{%
  \textcolor{blue}{\rule{\csname#1\endcsname}{1pt}}\newline
  \texttt{\textbackslash#1}: \expandafter\the\csname#1\endcsname
  \par
}

\appendixproof{thm:W1h-ISR-ties-ONESWAP}
{
  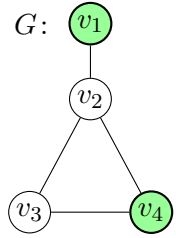
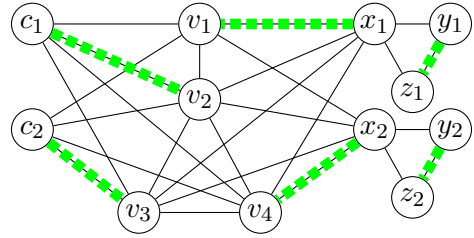
\begin{figure}[t]%\captionsetup[subfigure]{position=t}
    \begin{subfigure}[t]{0.45\textwidth}
      \centering
      \begin{tikzpicture}
      [scale=1.0,auto]
      \node[nodest] (n1) at (0,0) [thick,fill=green!40] {$v_1$};
      \node[nodest] (n2) at (0,-1) {$v_2$};
      \node[nodest] (n3) at (-.8,-2.5) {$v_3$};
      \node[nodest] (n4) at (.8,-2.5) [thick,fill=green!40] {$v_4$};
      \node[left = 0pt of n1] {$G\colon$};

      \foreach \from/\to in {n1/n2,n2/n3,n2/n4,n3/n4}
      \draw (\from) -- (\to);
      \end{tikzpicture}
      \caption{The graph of an \EIIS{} instance~$(G,e^*=\{v_3, v_4\}, h=2, S^*=\{v_3, v_4\})$.
              The instance is a yes-instance: The graph~$G$ admits an independent set~$\{v_1,v_4\}$ of size two, marked in light green.} \label{fig:W1h-ISR-ties-ONESWAP A}
    \end{subfigure}
    \hspace{0.55cm}
    \begin{subfigure}[t]{0.45\textwidth}
      \centering
      \begin{tikzpicture}
      [scale=1.0,auto]
        \node[nodest] (n1) at (0,0) {$v_1$};
        \node[nodest] (n2) at (0,-1) {$v_2$};
        \node[nodest] (n3) at (-0.8,-2.5) {$v_3$};
        \node[nodest] (n4) at (0.8,-2.5) {$v_4$};

        \node[nodest] (t1) at (2.3,-1.4) {$x_2$};
        \node[nodest] (t11) at (3.3,-1.4) {$y_2$};
        \node[nodest] (t12) at (2.8,-2.3) {$z_2$};

        \node[nodest] (t2) at (2.3,-0) {$x_1$};
        \node[nodest] (t21) at (3.3,0) {$y_1$};
        \node[nodest] (t22) at (2.8,-.9) {$z_1$};

        \node[nodest] (s1) at (-2.2,0) {$c_1$};
        \node[nodest] (s2) at (-2.2,-1.4) {$c_2$};

        \foreach \from/\to in {n1/n2,n2/n3,n2/n4,n3/n4}
          \draw (\from) -- (\to);
        \foreach \from/\to in {s1/n1,s1/n3,s1/n4}
          \draw (\from) -- (\to);
        \foreach \from/\to in {s2/n1,s2/n2,s2/n4}
          \draw (\from) -- (\to);
        \foreach \from/\to in {t1/n1,t1/n2,t1/n3,t1/t11,t1/t12}
          \draw (\from) -- (\to);
        \foreach \from/\to in {t2/n2,t2/n3,t2/n4,t2/t21,t2/t22}
          \draw (\from) -- (\to);
        \foreach \from/\to in {s1/n2,s2/n3,t1/n4,t2/n1,t21/t22,t11/t12}
          \draw (\from) -- (\to);
        \foreach \from/\to in {s1/n2,s2/n3,t1/n4,t2/n1,t21/t22,t11/t12}
          \draw[green,line width=4pt,dotted] (\from) -- (\to);
      \end{tikzpicture}
      \caption{The acceptability graph of the corresponding \ISR instance (for both $\mathcal{P}_1$ and $\mathcal{P}_2$). Profile~$\mathcal{P}_2$ admits a stable matching, marked by the thick dotted green lines.}\label{fig:W1h-ISR-ties-ONESWAP B}
    \end{subfigure}
    \caption{An illustration of the hardness reduction for \cref{thm:W1h-ISR-ties-ONESWAP}.}\label{fig:W1h-ISR-ties-ONESWAP}
  \end{figure}

\begin{proof}%[Proof of~\cref{thm:W1h-ISR-ties-ONESWAP}]
  Let $(G=(V,E),e^* \in E, h, S^* \subseteq V)$ be an \EIIS instance with $r \coloneqq |V|$ vertices.
  We assume w.l.o.g.\ that $h < r$, $S^*=\{v_{r-h+1},\dots,v_r\}$.
  Moreover, we assume that $e^*\subseteq S^*$ (as otherwise the input instance is a trivial yes-instance) with $e^*=\{v_{r-1}, v_r\}$.
  We construct an \ISR instance~$(\mathcal{P}_1,\mathcal{P}_2,M_1)$ with agent set~$U$
  and $\left| \mathcal{P}_1 \oplus \mathcal{P}_2 \right|=1$.
  We will show that~$G$ has an independent of size at least~$h$
  if and only if~$\mathcal{P}_2$ admits a stable matching~$M_2$ with $\left| M_1 \oplus M_2 \right|\le 4h$.

   Before we describe the construction, we prove the following claim which is heavily used
   in our preference profile construction to force two agents to be matched together.

   \begin{claim} \label{cl:TopTGadget}
    Let $\mathcal{P}$~be a profile for an agent set~$U$,
    and let $x$, $y$, and $z$ be three distinct agents with the following preference lists, 

    $\begin{array}{rcccccccrccccccc}
     \text{agent}~x\colon &V&\succ&y&\succ&z&\succ& \ldots, \\
     \text{agent}~y\colon &z&\succ&x&\succ& \ldots, \\
     \text{agent}~z\colon &x&\succ&y&\succ& \ldots, 
    \end{array}$

    \noindent  where $V\subseteq U \setminus \{x, y, z\}$ is a non-empty subset of agents,
    The symbol ``$\ldots$'' at the preference list of each agent~$a$ denotes  an arbitrary but fixed order of all remaining agents, other than $a$ and not explicitly stated before~``$\ldots$'':
   % Subsets of agents appearing in the preferences are placeholders for any ordering (that may include
   % ties) of the corresponding agents.
    Then, every stable matching~$M$ for~$\mathcal{P}$ must fulfill that
    (i) $M(x)\in V$ and (ii) $\{y,z\}\in M$.
   \end{claim}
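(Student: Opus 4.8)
The plan is to exploit the fact that, restricted to the three agents $x$, $y$, and $z$, the given lists form the classical cyclically inconsistent triple: reading off the top preference among each pair yields $x$ preferring $y$, $y$ preferring $z$, and $z$ preferring $x$, a rotationally symmetric cycle that admits no internal stable matching. Hence in any stable matching~$M$ the tension among the three must be released by sending $x$ into $V$ and pairing $y$ with $z$. I would establish the two assertions in the reverse of the stated order: first prove~(ii) that $\{y,z\}\in M$, and then deduce~(i) from it.

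To prove~(ii) I would argue by contradiction and assume $\{y,z\}\notin M$. Since $z$ is the \emph{unique} top choice of $y$, agent $y$ strictly prefers $z$ to $M(y)$ (this holds whether $M(y)=\bot$ or $M(y)$ is any other agent). Since $z$ ranks $y$ above everybody except $x$, agent $z$ strictly prefers $y$ to $M(z)$ unless $M(z)=x$. Thus the only way to avoid the blocking pair $\{y,z\}$ is to have $\{x,z\}\in M$. In that case I would exhibit the blocking pair $\{x,y\}$ instead: $x$ prefers $y$ to its current partner $z$ because $y\succ_x z$, while $y$ prefers $x$ to $M(y)$, since $M(y)$ is neither $z$ (taken by $x$) nor $x$ (taken by $z$) and $x$ is $y$'s second choice, dominating every remaining agent and the unmatched state. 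Either branch contradicts stability, so $\{y,z\}\in M$.

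With~(ii) in hand, $M(z)=y$, and in particular $M(x)\notin\{y,z\}$. To prove~(i) I would again argue by contradiction and suppose $M(x)\notin V$, so that $M(x)$ is either $\bot$ or an agent in the ``$\ldots$'' tail of $x$'s list. Then $x$ strictly prefers $z$ to $M(x)$, because $z$ is ranked above the whole tail and $x$ finds $z$ acceptable. Simultaneously $z$ prefers $x$ to its partner $y$, since $x\succ_z y$. Hence $\{x,z\}$ is a blocking pair, contradicting stability, and therefore $M(x)\in V$.

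The only real obstacle is bookkeeping: every preference comparison must be checked against the formal blocking-pair definition, including the disjuncts $M(\cdot)=\bot$ covering unmatched agents. The nonemptiness of $V$ plays no role in the logical derivation above, but it is what makes the conclusion consistent---it guarantees that $x$ has an admissible escape, so that stable matchings meeting the conclusion can exist at all. Settling $\{y,z\}\in M$ first is what keeps the case analysis for~(i) down to a single blocking pair.
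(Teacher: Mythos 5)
Your proof is correct and uses essentially the same elementary blocking-pair case analysis as the paper; the only difference is that you establish (ii) first and then deduce (i) via the single pair $\{x,z\}$, whereas the paper proves (i) first by a three-way case split on $M(x)$ and then gets (ii) for free. Both orderings are valid and of comparable length.
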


   \begin{proof}\renewcommand{\qedsymbol}{(of \cref{cl:TopTGadget})~$\diamond$}
    Assume towards a contradiction to (i) that $\mathcal{P}$ admits a stable matching~$M$ with $M(x)\notin V$.
    There are three cases:
    (1) $M(x)=y$, implying the blocking pair~$\{y,z\}$,
    (2) $M(x)=z$, implying the blocking pair~$\{x,y\}$, and
    (3) $M(x)\notin \{y,z\} \cup V$, implying the blocking pair~$\{x,z\}$.
    Thus, $x$~must be matched with some agent from~$V$.
    For (ii),  statement~(i) implies that neither $y$ nor $z$ is matched with~$x$.
    Now, if $\{y,z\}\notin M$, then~$\{y,z\}$ forms a blocking pair.
   \end{proof}

  \paragraph{Main idea and the constructed agents.}
  To explain the main idea of the reduction, we first describe the agent set~$U$ and
  the corresponding non-complete acceptability graph of~$\mathcal{P}$ as illustrated through an example in \cref{fig:W1h-ISR-ties-ONESWAP}.
  At the end of the proof, we show that we can adjust the reduction
  so that the acceptability graph becomes complete.

  For each vertex~$v_i \in V$, we introduce a \emph{vertex agent}~$v_i$
  (for the sake of simplicity, we use the same symbol for the vertex and the corresponding agent).
  Additionally, there is a set of \emph{cover agents}~$C\coloneqq \{c_1,\dots,c_{r-h}\}$ as well as
  three sets of \emph{selector agents} $X\coloneqq \{x_1,x_2,\ldots,x_{h}\}$,
  $Y\coloneqq \{y_1,y_2,\ldots,y_{h}\}$, and $Z\coloneqq \{z_1,z_2,\ldots,z_{h}\}$.
  The agent set~$U$ is defined as $V \cup C \cup X\cup Y\cup Z$.
  For the acceptability graph,
  we have that every vertex agent~$v_i$ accepts every cover agent from~$C$,
  every selector agent from~$X$,
  and every vertex agent~$v_j$ that corresponds to a neighbor of $v_i$ in the input graph~$G$.
  For each $i \in \{1,2,\ldots, h\}$,
  the collector agents $x_i$, $y_i$, and $z_i$ pairwisely accept each other.

  Using \cref{cl:TopTGadget} we construct the preference profile~$\mathcal{P}_2$ of the agents such that
  in every stable matching only the cover agents from~$C$ and the selector agents from~$X$ can be matched to the
  vertex agents and the vertex agents matched to the selector agents from~$X$ correspond to an independent set (of size~$|X|=h$).
  These properties are given by the subsequent \cref{claim:stable-matching-properties}.

  \paragraph{Agent preferences in $\mathcal{P}_1$ and $\mathcal{P}_2$.}
  First, we describe the preferences profiles $\mathcal{P}_2$ that realize the idea and the acceptability graph as described above.
  \begin{alignat*}{3}
   \mathcal{P}_2 \text{: }& \agent~v_i\colon & &[C] \succ [N(v_i)]  \succ x_1 \succ x_2 \succ \cdots \succ x_{h} \succ \ldots, &\quad& \forall 1\le i \le r-1,\\
  &  \agent~v_r\colon && [C] \succ [N(v_i)\setminus\{v_{r-1}\}] \succ v_{r-1}  \succ x_1 \succ x_2 \succ \cdots \succ x_{h}\succ \ldots, &\quad&\\
   & \agent~c_i\colon &&  v_1 \sim v_2 \sim \cdots \sim v_r\succ \ldots, & \quad& \forall 1\le i \le {r-h},\\
   & \agent~x_i\colon & & v_1 \sim v_2 \sim \cdots \sim v_r \succ y_i \succ z_i\succ \ldots, & \quad& \forall 1\le i \le h,\\
   & \agent~y_i\colon & & z_i \succ x_i\succ \ldots, & \quad& \forall 1\le i \le h,\\
   & \agent~z_i\colon &&  x_i \succ y_i\succ \ldots, & \quad& \forall 1\le i \le h.
  \end{alignat*}

  With a single swap in the preference list of agent~$v_r$, we obtain the profile~$\mathcal{P}_1$:
  \begin{alignat*}{3}
    \mathcal{P}_1 \text{: }& \agent~v_i\colon && [C] \succ [N(v_i)]  \succ x_1 \succ x_2 \succ \ldots \succ x_{h}\succ \ldots, &\quad& \forall 1\le i \le r-1,\\
    &  \agent~v_r\colon && [C] \succ [N(v_i)\setminus\{v_{r-1}\}] \succ x_1  \succ v_{r-1} \succ x_2 \succ \cdots \succ x_{h}\succ \ldots, &\quad&\\
    &  \agent~c_i\colon & & v_1 \sim v_2 \sim \cdots \sim v_r\succ \ldots, & \quad& \forall 1\le i \le {r-h},\\
    &  \agent~x_i\colon & &  v_1 \sim v_2 \sim \cdots \sim v_r \succ y_i \succ z_i\succ \ldots, & \quad& \forall 1\le i \le h,\\
    & \agent~y_i\colon & & z_i \succ x_i\succ \ldots, & \quad& \forall 1\le i \le h,\\
    &  \agent~z_i\colon & & x_i \succ y_i\succ \ldots, & \quad& \forall 1\le i \le h.
  \end{alignat*}

  \noindent Herein, for each agent subset~$D$,
  we denote by $[D]$ the lexicographic order of the agents by their names (resp.\ by their indices), called the \emph{canonical order}.
  The symbol~``$\ldots$'' at the preference list of each agent~$a$ denotes  an arbitrary but fixed order of all remaining agents, other than $a$ and not explicitly stated before~``$\ldots$'':

  \paragraph{Stable matching~$M_1$ for $\mathcal{P}_1$.}
  The initial matching~$M_1$ is defined as follows:
  \begin{itemize}
    \item for each $i\in \{1,2,\ldots, h\}$ set $M(v_{r-i+1})\coloneqq x_i$;
    \item for each $i\in \{1,2,\ldots, r-h\}$
    set $M(v_{i})=c_i$;
    \item for each $i\in \{1,2,\ldots, h\}$ set $M(y_i) = z_i$.
  \end{itemize}

  This completes the construction and can clearly be performed in polynomial~time.

  \paragraph{Correctness of the construction.}
  First of all, we claim that in the constructed \ISR instance~$M_1$ is indeed a stable matching for $\mathcal{P}_1$.
  \begin{claim}\label{claim:GivenM1stable}
    The matching~$M_1$ is a stable matching for the profile $\mathcal{P}_1$.
  \end{claim}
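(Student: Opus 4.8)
The plan is to exploit the fact that $M_1$ is a \emph{perfect} matching of the constructed agent set (every vertex agent is matched to a cover agent or to a selector agent from $X$, every $x_i$ to a vertex agent, and every $y_i$ to $z_i$), so that proving stability reduces to ruling out blocking pairs among the acceptable pairs. I would first dispense with most of the agents by observing that they already receive a (weakly) top choice and hence can never be the strictly-improving endpoint of a blocking pair. Concretely: each $c_i$ is matched to the vertex agent $v_i$, which is one of $c_i$'s tied most-preferred agents; each $x_i$ is matched to a vertex agent, again a tied top choice on $x_i$'s list; and each $y_i$ is matched to $z_i$, its unique favourite. Thus none of the agents in $C\cup X\cup Y$ can strictly prefer anyone to its current partner. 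The only remaining ``unhappy'' agent in a gadget is $z_i$, which does prefer $x_i$ to its partner $y_i$, but since $x_i$ is already matched to a top-tied vertex agent, $x_i$ does not reciprocate, so $\{x_i,z_i\}$ does not block. (This is consistent with \cref{cl:TopTGadget}, which forces $\{y_i,z_i\}\in M$ and $M(x_i)$ to be a vertex agent.)

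After this reduction, the only acceptable pairs that could possibly block are vertex--vertex pairs $\{v_i,v_j\}$ with $v_iv_j\in E(G)$, since every other type of acceptable edge has an endpoint that is already fully satisfied. For such a pair to block, \emph{both} endpoints must strictly prefer each other to their $M_1$-partners. Here I would use the structure of the preference lists: a vertex agent matched to a cover agent has $[C]$ at the very top of its list, strictly above every neighbour, so it never strictly prefers a neighbour to its cover partner. Consequently, in any blocking neighbour-pair both endpoints must be matched into $X$, i.e.\ both must lie in $S^*=\{v_{r-h+1},\dots,v_r\}$. Now the key structural input enters: since $S^*$ is an independent set for $G-e^*$, the \emph{only} edge of $G$ with both endpoints in $S^*$ is $e^*=\{v_{r-1},v_r\}$.

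It therefore remains to check the single candidate pair $\{v_{r-1},v_r\}$, where $M_1(v_r)=x_1$ and $M_1(v_{r-1})=x_2$. This is exactly where the single swap is decisive: in $\mathcal{P}_1$ the list of $v_r$ reads $[C]\succ[N(v_r)\setminus\{v_{r-1}\}]\succ x_1\succ v_{r-1}\succ x_2\succ\cdots\succ x_h\succ\ldots$, so $v_r$ ranks its partner $x_1$ \emph{above} $v_{r-1}$ and hence does not strictly prefer $v_{r-1}$. Therefore $\{v_{r-1},v_r\}$ is not blocking, and no blocking pair exists at all, so $M_1$ is stable for $\mathcal{P}_1$. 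The main (and essentially only) obstacle is the vertex--vertex case; everything hinges on combining the independence of $S^*$ in $G-e^*$ with the precise placement of $x_1$ above $v_{r-1}$ in $v_r$'s $\mathcal{P}_1$-list, which is exactly the ordering that gets swapped when passing to $\mathcal{P}_2$.
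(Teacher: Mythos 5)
Your proof is correct and follows essentially the same route as the paper's: dispose of $C$, $X$, $Y$, $Z$ via top-tied partners, reduce to vertex--vertex pairs both matched into $X$, invoke the independence of $S^*$ in $G-e^*$ to isolate $e^*=\{v_{r-1},v_r\}$, and kill that pair using the placement of $x_1$ above $v_{r-1}$ in $v_r$'s $\mathcal{P}_1$-list. Your write-up is if anything slightly more explicit (e.g.\ on why $z_i$ and cover-matched vertex agents cannot be strictly-improving endpoints) than the paper's own argument.
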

  \begin{proof}[Proof of \cref{claim:GivenM1stable}]\renewcommand{\qed}{\hfill~(of \cref{claim:GivenM1stable})~$\diamond$}
   First, observe that, by definition of~$M_1$, no agent from~$C$ or~$X$ can be involved in a blocking pair,
   since they all are matched to one of their (tied) favorite partners.
   Second, since the agents from~$X$ are not available also the agents from~$Y$ and from~$Z$ cannot be involved in blocking pairs.
   Finally, a blocking pair that consist of two agents from~$V$ (the last possibility for a blocking pair to emerge) would
   mean that we have two agents~$v_{r-i+1}i$ and~$v_{r-j+1}$ ($1\le i\neq j \le h$) that are both matched to some agent from $X$.
   Furthermore, in order to be a blocking pair both agents must be adjacent in~$G$ because only agents corresponding
   to neighbors in~$G$ are preferred to agents from~$X$.
   Since the vertices $v_{r-h+1},\dots,v_r$ are an independent set in $G - e^*$ for $e^*=\{v_{r-1},v_r\}$,
   only $\{v_{r-1},v_r\}$ could be a blocking pair.
   However, $\{v_{r-1},v_r\}$ is also not blocking $M_1$ because
   (due to the swap of~$x_1$ and~$v_{r-1}$ in the preferences of~$v_r$ between~$\mathcal{P}_2$ and~$\mathcal{P}_1$)
   agent~$v_r$ prefers its partner $M_1(v_r)=x_1$ over~$v_{r-1}$.
  \end{proof}

  Second,  we claim the following:
  \begin{claim}\label{claim:stable-matching-properties}
    Every stable matching~$M'_2$ for $\mathcal{P}_2$
    satisfies the following two properties:
    \begin{enumerate}
      \item \label{prop:vertices-matched}
      every vertex agent~$v_i$ is matched to either a cover agent from~$C$ or a selector agent from~$X$, that is,
      $M(v_i)\in C\cup X$ for every $v \in V$, and
      \item \label{prop:edges-covered}
      \emph{no} two vertex agents that are both matched to the selector agents are adjacent in~$G$.
    \end{enumerate}
  \end{claim}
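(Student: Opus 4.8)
The plan is to derive both properties from the rigidity that the three preference blocks (cover agents at the top, neighbourhood in the middle, selector triples at the bottom) force on any stable matching~$M'_2$ for $\mathcal{P}_2$. First I would dispose of the selector triples: for each $i \in \{1,\dots,h\}$ the agents $x_i$, $y_i$, $z_i$ have exactly the preference pattern required by \cref{cl:TopTGadget}, with the set $V$ there instantiated by the (non-empty) set of all vertex agents $\{v_1,\dots,v_r\}$. Applying \cref{cl:TopTGadget} then yields at once that $M'_2(x_i)\in\{v_1,\dots,v_r\}$ and $\{y_i,z_i\}\in M'_2$. In particular all of $Y\cup Z$ is matched internally and thus unavailable to vertex agents, while exactly $h$ vertex agents are matched to selector agents from $X$.

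Next I would show that every cover agent is matched to a vertex agent, which is the engine behind \cref{prop:vertices-matched}. Suppose some $c\in C$ were not matched to a vertex agent. Since the top (tied) block of $c$'s list is the whole set of vertex agents, $c$ strictly prefers every vertex agent to its current situation. Now take any vertex agent $v$ not matched to a cover agent; as the cover agents form the tied top block of $v$'s list and $v$ accepts every cover agent, $v$ strictly prefers $c$ to its partner, so $\{c,v\}$ blocks $M'_2$. Hence, if $c$ is not matched to a vertex agent, stability forces \emph{every} vertex agent to be matched to a cover agent; but there are only $r-h<r$ cover agents (we assumed $h<r$), and each can be matched to at most one vertex agent, so they cannot saturate all $r$ vertex agents---a contradiction. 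Therefore all $r-h$ cover agents are matched to pairwise distinct vertex agents. Together with the $h$ vertex agents matched to $X$, and since a matching uses each vertex agent at most once, these are $(r-h)+h=r$ distinct vertex agents, i.e.\ all of them; every vertex agent is thus matched into $C\cup X$, proving \cref{prop:vertices-matched}.

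For \cref{prop:edges-covered} I would argue by contradiction: suppose $v_i$ and $v_j$ are both matched to selector agents from $X$ and are adjacent in $G$. The key observation is that in $\mathcal{P}_2$ every vertex agent ranks each of its neighbours strictly above all selector agents $x_1,\dots,x_h$. This holds verbatim for $v_1,\dots,v_{r-1}$, and it also holds for $v_r$ because in $\mathcal{P}_2$ the neighbour $v_{r-1}$ still precedes $x_1$ in $v_r$'s list. Consequently $v_i$ prefers its neighbour $v_j$ to its partner $M'_2(v_i)\in X$, symmetrically $v_j$ prefers $v_i$ to $M'_2(v_j)\in X$, and since $v_i$ and $v_j$ accept each other, $\{v_i,v_j\}$ is a blocking pair---contradicting stability.

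The main point to check carefully is the treatment of the endpoints of $e^*=\{v_{r-1},v_r\}$: the single swap separating $\mathcal{P}_1$ from $\mathcal{P}_2$ only reorders $x_1$ and $v_{r-1}$ in $v_r$'s list, and the whole last paragraph hinges on verifying that this swap does \emph{not} destroy the uniform ``neighbours above selectors'' property in $\mathcal{P}_2$ (it does not, precisely because $v_{r-1}$ sits above $x_1$ in $\mathcal{P}_2$). This is exactly the hinge that will later distinguish the yes- and no-instances, so I would state it explicitly rather than fold it into the generic argument; apart from this, the counting step for the cover agents is the only place requiring genuine care, and it is elementary.
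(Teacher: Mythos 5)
Your proof is correct and follows essentially the same route as the paper's: invoke \cref{cl:TopTGadget} to pin down the selector triples, then combine a blocking-pair argument between cover agents and vertex agents with a counting argument to get \cref{prop:vertices-matched}, and observe that two adjacent vertex agents both matched into $X$ would block, which gives \cref{prop:edges-covered}. The only quibble is that the inequality $r-h<r$ you use in the counting step follows from $h\ge 1$ rather than from the assumption $h<r$ you cite, but this is immaterial since $h\ge 1$ holds for any nontrivial instance.
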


  \begin{proof}[Proof of \cref{claim:stable-matching-properties}]
    \renewcommand{\qed}{\hfill~(of \cref{claim:stable-matching-properties})~$\diamond$}
    Let $M'_2$~be a stable matching for~$\mathcal{P}$.
    For the first statement, observe \cref{cl:TopTGadget} immediately implies that
    for every selector agent~$x_i \in X$, it holds that $M(x_i) \in V$.
    Thus, there are exactly $r-h$~vertex agents left that are not matched to agents from~$X$.
    Suppose towards a contradiction that
    some cover agent~$c_j$ is not matched to any vertex agent,
    implying that at least one vertex agent~$v_i$ is left with $M(v_i)\notin X\cup C$.
    This, however, implies that $\{c_j, v_i\}$ is a blocking pair for $M'_2$---a contradiction.
    For the second statement,
    suppose towards a contradiction that there are two vertex agents~$v_i, v_j$ with $\{M(v_i), M(v_j)\}\subseteq X$
    as well as $\{v_i,v_j\}\in E$.
    The preference lists of $v_i$ and $v_j$ immediately imply that agents $v_i$ and $v_j$ form a blocking pair---a contradiction.
    %\myqedsymbol
  \end{proof}

  Now, we show that~$G$ has an independent set of size at least~$h$ if and only if $\mathcal{P}_2$ admits a stable matching~$M_2$
  with $\left| M_1 \oplus M_2 \right|\le 4h$.

  The ``if'' part follows immediately from \cref{claim:stable-matching-properties} because
  it directly implies that the vertices corresponding to the $h$~vertex agents matched to selector agents from~$X$
  are pairwisely non-adjacent in the graph~$G$.

  For the ``only if'' part, assume that $S\subseteq V$ be an independent set of size~$h$ for~$G$.
  Let~$S\coloneqq \{v_{s_1},v_{s_2},\ldots, v_{s_h}\}$ with $s_1 < s_2 < \cdots < s_h$ 
  such that the canonical order~$\succ$ implies $v_{s_1}\succ v_{s_2} \succ \cdots \succ v_{s_h}$.
  We construct a matching~$M_2$ (based on $M_1$) such that $\left| M_1 \oplus M_2 \right|\le 4h$
  and such that $M_2$ is stable for $\mathcal{P}_2$.
  First, we match all selector agents and the vertex agents corresponding to the independent set~$S$:
  \begin{itemize}
    \item for each $i\in \{1,2,\ldots, h\}$ set $M_2(y_i) \coloneqq z_i (= M_1(y_i))$;
    \item for each $i\in \{1,2,\ldots, h\}$ set $M_2(v_{s_i})\coloneqq x_i$.
  \end{itemize}
  So far, at most $2h$~agents have different partners in~$M_1$ and~$M_2$.
  At least $r-2h$~vertices are neither in the independent set~$S$ for~$G$
  nor in the independent set $S^*=\{v_{r-h+1},\dots,v_r\}$ for $G - e^*$.
  We keep the partners for the corresponding vertex agents unchanged:
  \begin{itemize}
    \item for each $v \in V \setminus (S \cup S^*)$ set $M_2(v)\coloneqq M_1(v)$.
  \end{itemize}
  Finally, some number~$q \le h$ of vertex agents, namely those corresponding to vertices from $S^* \setminus S$,
  are so far unmatched in~$M_2$.
  Symmetrically, also $q \le h$ cover agents, namely those cover agents that where matched to vertex agents from $S^* \setminus S$ in $M_1$,
  are also so far unmatched in~$M_2$.
  We pair these agents arbitrarily such that each vertex agent is matched to a cover agent to finalize the definition of~$M_2$ so far.
  Doing this, at most $2q \le 2h$~agents will get different partners in~$M_2$ when compared with~$M_1$.
  Hence, $|M_1 \Delta M_2| \le 4h$.
  
  It remains to show that~$M_2$ is stable.
  This can be done analogously to the proof of \cref{claim:GivenM1stable}.
  First, observe that, by the definition of~$M_2$, no agent from~$C$ or~$X$ can be involved in a blocking pair
  since they all are matched to one of their (tied) favorite partners.
  Second, since agents from~$X$ are not available, the agents from~$Y$ and from~$Z$ cannot be involved in blocking pairs.
  Finally, a blocking pair that consist of two agents from~$V$ (the last possibility for a blocking pair to emerge) would
  mean that we have two agents~$v_i$ and~$v_j$ that are both matched to some agent from $X$.
  Furthermore, in order to be a blocking pair both agents must be adjacent in~$G$ because only agents corresponding
  to neighbors in~$G$ are preferred to agents from~$X$.
  This is a contradiction to our definition of~$M_2$, where the $h$~agents from~$X$ have been matched to $h$~vertex agents
  corresponding to the independent set~$S$.

  % We don't need the following.
  % \paragraph{Making the acceptability graph complete.}
  % So far, the agents do not rank all other agents.
  % To make the acceptability graph in the reduction complete, we can simply append in every preference list the so far
  % unranked agents at the end in an arbitrary order.
  % This does not change the correctness of the construction, since all of the above proof arguments still work.
  % In particular, observe that the proof of \cref{claim:GivenM1stable} is unaffected because every agent in $M_1$ is matched
  % to some partner so that adding further agents at the end of the preference list cannot introduce blocking pairs.
  % By the same reasoning, also the stability of~$M_2$ (in the ``only if'' direction) is not affected.
  % Finally, the proof of \cref{claim:stable-matching-properties} also works analogously:
  % It still holds by the same arguments that
  % \begin{itemize}
  %  \item every selector agent from~$X$ must be matched to some agent from~$V$,
  %  \item every cover agent from~$C$ must be matched to some agent from~$V$,
  %  \item every vertex agent from~$V$ must be matched either to some agent from~$X$ or to some agent from~$C$, and
  %  \item every agent from~$Y$ must be matched to some agent from~$Z$.
  % \end{itemize}
  % From these properties, we can conclude that appending the originally unranked agents indeed cannot affect
  % the stability of a matching for~$\mathcal{P}_2$.

  \paragraph{Unbounded $k$.}
  Note that in this proof, we never used an explicit bound on the parameter~$k$ for any
  proof argument to work.
  In particular, if there is any stable matching for~$\mathcal{P}_2$, then there is one
  with distance $\left| M_1 \oplus M_2 \right|\le 4h$
  (this follows from \cref{claim:stable-matching-properties} and from the idea given in the ``only if'' part).
  In particular, the problem remains \np-hard if we set~$k=2n$, that is, $k'=2n-k=0$.
  In other words, the problem remains hard even if the distance between the matchings~$M_1$
  and~$M_2$ is unbounded.
\end{proof}
}

\section{Conclusion}
\label{sec:conclusion}
Motivated by dynamically changing preferences and the necessity
to adapt the corresponding solutions, we introduced an ``incremental view''
on the computation of stable matchings. We believe that there are plenty of opportunities for future research, including, for instance, to study the
role of parameters measuring the number of ties---in many hardness reductions
this parameter is unbounded. We also left open whether \textsc{Incremental
Stable Roommates} without ties is fixed-parameter tractable
when parameterized by the swap distance between the two input preference profiles.
Naturally, there are also future directions concerning more conceptual
work, e.g., also studying further stability concepts in the context of
our incremental model.

%\textbf{TBD Rolf}
%Collection of ideas, please add!:
%\begin{itemize}
%
%\item Exciting and highly relevant new model
%
%\item Kernelization?
%
%\item Study (parameterized) approximation?
%
%\item Extend to...; ongoing work?
%
%\item Study parameters which measure ties; for many hardness reduction, the number of ties is unbounded.
%
%\item Extend the work to temporal setting, where the preferences may change at each time stamp. Here, maybe only Stable Marriage without Ties because it is in P in the incremental setting.
%
%\end{itemize}

%%
%% Bibliography
%%

%% Please use bibtex,

%\bibliography{ISM}

\begin{thebibliography}{33}
\providecommand{\natexlab}[1]{#1}
\providecommand{\url}[1]{\texttt{#1}}
\expandafter\ifx\csname urlstyle\endcsname\relax
  \providecommand{\doi}[1]{doi: #1}\else
  \providecommand{\doi}{doi: \begingroup \urlstyle{rm}\Url}\fi

\bibitem[Abu-Khzam et~al.(2015)Abu-Khzam, Egan, Fellows, Rosamond, and
  Shaw]{Abu-Khzam+2015}
F.~N. Abu-Khzam, J.~Egan, M.~R. Fellows, F.~A. Rosamond, and P.~Shaw.
\newblock On the parameterized complexity of dynamic problems.
\newblock \emph{Theoretical Computer Science}, 607:\penalty0 426--434, 2015.

\bibitem[Aziz et~al.(2016)Aziz, Bir{\'{o}}, Gaspers, de~Haan, Mattei, and
  Rastegari]{AzBiGaHaMaRa2016}
H.~Aziz, P.~Bir{\'{o}}, S.~Gaspers, R.~de~Haan, N.~Mattei, and B.~Rastegari.
\newblock Stable matching with uncertain linear preferences.
\newblock In \emph{Proceedings of the 9th International Symposium on
  Algorithmic Game Theory}, pages 195--206, 2016.

\bibitem[Bhattacharya et~al.(2015)Bhattacharya, Hoefer, Huang, Kavitha, and
  Wagner]{BHHKW15}
S.~Bhattacharya, M.~Hoefer, C.~Huang, T.~Kavitha, and L.~Wagner.
\newblock Maintaining near-popular matchings.
\newblock In \emph{Proceedings of the 42nd International Colloquium on
  Automata, Languages, and Programming}, pages 504--515, 2015.

\bibitem[B{\"{o}}ckenhauer et~al.(2018)B{\"{o}}ckenhauer, Burjons, Raszyk, and
  Rossmanith]{abs-1809-10578}
H.~B{\"{o}}ckenhauer, E.~Burjons, M.~Raszyk, and P.~Rossmanith.
\newblock Reoptimization of parameterized problems.
\newblock \emph{CoRR}, abs/1809.10578, 2018.

\bibitem[Bredereck et~al.(2017)Bredereck, Chen, Finnendahl, and
  Niedermeier]{BCFN17}
R.~Bredereck, J.~Chen, U.~P. Finnendahl, and R.~Niedermeier.
\newblock Stable roommate with narcissistic, single-peaked, and single-crossing
  preferences.
\newblock In \emph{Proceedings of the 5th International Conference on
  Algorithmic Decision Theory}, pages 315--330, 2017.

\bibitem[Charikar et~al.(2004)Charikar, Chekuri, Feder, and
  Motwani]{ChChFeMo2004}
M.~Charikar, C.~Chekuri, T.~Feder, and R.~Motwani.
\newblock Incremental clustering and dynamic information retrieval.
\newblock \emph{SIAM Journal on Computing}, 33\penalty0 (6):\penalty0
  1417--1440, 2004.

\bibitem[Chen et~al.(2018{\natexlab{a}})Chen, Hermelin, Sorge, and
  Yedidsion]{CHSYicalp-par-stable2018}
J.~Chen, D.~Hermelin, M.~Sorge, and H.~Yedidsion.
\newblock How hard is it to satisfy (almost) all roommates?
\newblock In \emph{Proceedings of the 45th International Colloquium on
  Automata, Languages, and Programming}, pages 35:1--35:15, 2018{\natexlab{a}}.

\bibitem[Chen et~al.(2018{\natexlab{b}})Chen, Niedermeier, and
  Skowron]{CheNieSkoECmstable2018}
J.~Chen, R.~Niedermeier, and P.~Skowron.
\newblock Stable marriage with multi-modal preferences.
\newblock In \emph{Proceedings of the 19th ACM Conference on Economics and
  Computation}, pages 269--286, 2018{\natexlab{b}}.

\bibitem[Chen et~al.(2019)Chen, Skowron, and
  Sorge]{ChenSkowronSorge2019ec-robust}
J.~Chen, P.~Skowron, and M.~Sorge.
\newblock Matchings under preferences: {S}trength of stability and trade-offs.
\newblock In \emph{Proceedings of the 20th ACM Conference on Economics and
  Computation}, pages 41--59, 2019.

\bibitem[Cseh and Manlove(2016)]{CsehManlove2016}
{\'{A}}.~Cseh and D.~F. Manlove.
\newblock Stable marriage and roommates problems with restricted edges:
  Complexity and approximability.
\newblock \emph{Discrete Optimization}, 20:\penalty0 62--89, 2016.

\bibitem[Drummond and Boutilier(2013)]{DruBou13}
J.~Drummond and C.~Boutilier.
\newblock Elicitation and approximately stable matching with partial
  preferences.
\newblock In \emph{Proceedings of the 23rd International Joint Conference on
  Artificial Intelligence}, pages 97--105, 2013.

\bibitem[Feder(1992)]{feder1992new}
T.~Feder.
\newblock A new fixed point approach for stable networks and stable marriages.
\newblock \emph{Journal of Computer and System Sciences}, 45\penalty0
  (2):\penalty0 233--284, 1992.

\bibitem[Genc et~al.(2017{\natexlab{a}})Genc, Siala, O'Sullivan, and
  Simonin]{GenSiaOSuSim2017ijcai}
B.~Genc, M.~Siala, B.~O'Sullivan, and G.~Simonin.
\newblock Finding robust solutions to stable marriage.
\newblock In \emph{Proceedings of the 26th International Joint Conference on
  Artificial Intelligence}, pages 631--637, 2017{\natexlab{a}}.

\bibitem[Genc et~al.(2017{\natexlab{b}})Genc, Siala, Simonin, and
  O'Sullivan]{GenSiaSimOSul17cocoa}
B.~Genc, M.~Siala, G.~Simonin, and B.~O'Sullivan.
\newblock On the complexity of robust stable marriage.
\newblock In \emph{Proc. COCOA-17}, pages 441--448, 2017{\natexlab{b}}.

\bibitem[Genc et~al.(2017{\natexlab{c}})Genc, Siala, Simonin, and
  O'Sullivan]{GenSiaSimSul17aaai}
B.~Genc, M.~Siala, G.~Simonin, and B.~O'Sullivan.
\newblock Robust stable marriage.
\newblock In \emph{Proceedings of the 31st AAAI Conference on Artificial
  Intelligence}, pages 4925--4926, 2017{\natexlab{c}}.

\bibitem[Genc et~al.(2019)Genc, Siala, Simonin, and
  O'Sullivan]{genc_complexity_2019tcs}
B.~Genc, M.~Siala, G.~Simonin, and B.~O'Sullivan.
\newblock Complexity study for the robust stable marriage problem.
\newblock \emph{Theoretical Computer Science}, 775:\penalty0 76--92, 2019.

\bibitem[Ghosal et~al.(2017)Ghosal, Kunysz, and Paluch]{GKP17}
P.~Ghosal, A.~Kunysz, and K.~E. Paluch.
\newblock The dynamics of rank-maximal and popular matchings.
\newblock \emph{CoRR}, abs/1703.10594, 2017.

\bibitem[Gusfield(1988)]{gusfield1988structure}
D.~Gusfield.
\newblock The structure of the stable roommate problem: {E}fficient
  representation and enumeration of all stable assignments.
\newblock \emph{SIAM Journal on Computing}, 17\penalty0 (4):\penalty0 742--769,
  1988.

\bibitem[Gusfield and Irving(1989)]{GusfieldIrving1989}
D.~Gusfield and R.~W. Irving.
\newblock \emph{The Stable Marriage Problem--{S}tructure and Algorithms}.
\newblock Foundations of Computing Series. {MIT} Press, 1989.

\bibitem[Hartung and Niedermeier(2013)]{HarNie2013}
S.~Hartung and R.~Niedermeier.
\newblock Incremental list coloring of graphs, parameterized by conservation.
\newblock \emph{Theoretical Computer Science}, 494:\penalty0 86--98, 2013.

\bibitem[Irving(1985)]{irving1985efficient}
R.~W. Irving.
\newblock An efficient algorithm for the “stable roommates” problem.
\newblock \emph{Journal of Algorithms}, 6\penalty0 (4):\penalty0 577--595,
  1985.

\bibitem[Irving(1994)]{Irving1994}
R.~W. Irving.
\newblock Stable marriage and indifference.
\newblock \emph{Discrete Applied Mathematics}, 48\penalty0 (3):\penalty0
  261--272, 1994.

\bibitem[Kanade et~al.(2016)Kanade, Leonardos, and Magniez]{KanLeoMag2016}
V.~Kanade, N.~Leonardos, and F.~Magniez.
\newblock Stable matching with evolving preferences.
\newblock In \emph{Approximation, Randomization, and Combinatorial
  Optimization. Algorithms and Techniques}, volume~60 of \emph{LIPIcs}, pages
  36:1--36:13, 2016.
\newblock ISBN 978-3-95977-018-7.

\bibitem[Krithika et~al.(2018)Krithika, Sahu, and Tale]{krithika2018dynamic}
R.~Krithika, A.~Sahu, and P.~Tale.
\newblock Dynamic parameterized problems.
\newblock \emph{Algorithmica}, 80\penalty0 (9):\penalty0 2637--2655, 2018.

\bibitem[Luo et~al.(2018)Luo, Molter, Nichterlein, and Niedermeier]{LuoMNN18}
J.~Luo, H.~Molter, A.~Nichterlein, and R.~Niedermeier.
\newblock Parameterized dynamic cluster editing.
\newblock In \emph{Proceedings of the 38th International Conference on
  Foundations of Software Technology and Theoretical Computer Science}, pages
  46:1--46:15, 2018.

\bibitem[Mai and Vazirani(2018{\natexlab{a}})]{MaiVaz2018}
T.~Mai and V.~V. Vazirani.
\newblock Finding stable matchings that are robust to errors in the input.
\newblock In \emph{Proceedings of the 26th Annual European Symposium on
  Algorithms}, pages 60:1--60:11, 2018{\natexlab{a}}.

\bibitem[Mai and Vazirani(2018{\natexlab{b}})]{MaiVaz2018-birkhoff-arxiv}
T.~Mai and V.~V. Vazirani.
\newblock A generalization of {Birkhoff's} theorem for distributive lattices,
  with applications to robust stable matchings.
\newblock Technical report, arXiv:1804.05537 [cs.DM], 2018{\natexlab{b}}.

\bibitem[Manlove(2013)]{Manlove2013}
D.~F. Manlove.
\newblock \emph{Algorithmics of Matching Under Preferences}.
\newblock WorldScientific, 2013.

\bibitem[Marx and Schlotter(2010)]{MarxSchlotter2010}
D.~Marx and I.~Schlotter.
\newblock Parameterized complexity and local search approaches for the stable
  marriage problem with ties.
\newblock \emph{Algorithmica}, 58\penalty0 (1):\penalty0 170--187, 2010.

\bibitem[Marx and Schlotter(2011)]{MarxSchlotter2011}
D.~Marx and I.~Schlotter.
\newblock Stable assignment with couples: {P}arameterized complexity and local
  search.
\newblock \emph{Discrete Optimization}, 8\penalty0 (1):\penalty0 25--40, 2011.

\bibitem[Miyazaki and Okamoto(2017)]{MiOk2017}
S.~Miyazaki and K.~Okamoto.
\newblock Jointly stable matchings.
\newblock In \emph{Proceedings of the 28th International Symposium on
  Algorithms and Computation}, pages 56:1--56:12, 2017.

\bibitem[Nimbhorkar and Rameshwar(2019)]{NV19}
P.~Nimbhorkar and V.~Rameshwar.
\newblock Dynamic rank-maximal and popular matchings.
\newblock \emph{Journal of Combinatorial Optimization}, 37\penalty0
  (2):\penalty0 523--545, 2019.

\bibitem[Schieber et~al.(2018)Schieber, Shachnai, Tamir, and
  Tamir]{schieber2018theory}
B.~Schieber, H.~Shachnai, G.~Tamir, and T.~Tamir.
\newblock A theory and algorithms for combinatorial reoptimization.
\newblock \emph{Algorithmica}, 80\penalty0 (2):\penalty0 576--607, 2018.

\end{thebibliography}

\appendix

\clearpage
% \section{\ISR With Ties}
% \label{sec:ISR with ties}
% When designing an algorithm one can hope that if the two profiles $\mathcal{P}_1,\mathcal{P}_2$ do not differ by a lot, then it could be easier to find a stable matching for $\mathcal{P}_2$ using the knowledge of the given stable matching for a close profile $\mathcal{P}_1$.
% We are about to see that changing only few agents does not lead to tractability.
%Let $\mathcal{P}_1,\mathcal{P}_2$ be profiles for the same set of agents.
%By $\left| \mathcal{P}_1 \Delta \mathcal{P}_2 \right|$ we denote the number of agents whose preferences are not the same in both $\mathcal{P}_1$ and $\mathcal{P}_2$.

%%%%%%%%%%%%%%%
\section{\ISM without ties}
\label{sec:ISM without ties}
In this section, we show that \ISM without ties can be solved in polynomial time,
using an idea similar to the one for finding the so-called \emph{maximum weight stable matching}, a stable matching whose corresponding closed subset of rotations have maximum weight~\cite[Chapter 3.6.2]{GusfieldIrving1989}; the concept revolving around the rotations will be defined shortly.
\citeGIauthors~\cite{GusfieldIrving1989} presented a network flow approach to find such a maximum-weight stable matching in $O(n^2 \cdot w)$~time, where $n$ and $w$ denote the number of agents and the sum of weights of rotations, respectively.
We will show that for our problem the sum of the weights of rotations is bounded by $n$, which is the size of the intersection between our sought stable matching and the target stable matching.

%More specifically, we reduce our problem to finding a minimum $s$-$t$ cut in a flow network which features the precedence relation between the rotations (to be defined shortly).
% Herein, a \emph{rotation} for a \textsc{Stable Marriage} instance (without ties) is defined relative to a stable matching (rather to a preference table in the roommates setting).
%The only difference between our approach and the approach of finding a minimum egalitarian cost stable matching lies in how the weight of a rotation is defined.
%Indeed, we will show that the constructed flow network~$G$ has $O(n^2)$ vertices and $O(n^2)$~arcs such that the minimum cut has weight bounded by $O(n)$, where $n$ denotes the number of agents in an \ISM instance without ties.
%Using the Ford-Fulkerson algorithm, we can solve our problem in $O(|E(G)|\cdot C(K))=O(n^3)$~time, where $C(K)$ is the weight of a minimum $s$-$t$-cut in~$G$.
In the remainder of the section, we present necessary notions and the definition of the weights of the rotations and refer to Chapter~3.6.1 by \citeGIauthors{} for further details.

\subsection{Preliminaries}
The approach will heavily utilize the structural properties revolving around stable matchings and rotations of a \textsc{Stable Marriage} instance without ties.
%Thus, before we present our algorithm, we introduce well-known fundamental concepts and structural properties. For more details, we refer to the exposition by~\citet{GusfieldIrving1989}.

As already observed in the literature, for each \textsc{Stable Marriage} instance~$P$ with two disjoint sets $U$ and $W$ of agents, when operating on the agent set~$U$, the Gale-Shapely algorithm always returns a $U$-optimal stable matching~$M$.
The matching $M$ is \emph{$U$-optimal stable matching} if it is stable and there is no other stable matching~$M'$ such that an agent from $U$ would prefer its partner from $M$ to the partner from $M'$.
Starting from the $U$-optimal stable matching, we can successively eliminate the so-called rotations to obtain further stable matchings.
\begin{definition}[Successor agent, rotations, and rotation elimination]\label{def:rotations}
  Let $P$ be a \textsc{Stable Marriage} instance with two disjoint sets of agents,~$U$ and $W$, and with (possibly) incomplete preferences.
  Given a stable matching $M$ for $P$, for each agent~$u\in U$, we define its \emph{successor~$\sucw_M(u)$} as the \emph{first} (after $M(u)$) agent~$w$ on the preference list of $u$ such that $w$ is matched under $M$ and prefers $u$ to its partner~$M(w)$.

  A sequence~$\rho=((u_{0},w_{0}), (u_1,w_1), \ldots, (u_{r-1}, w_{r-1}))$ of pairs is called a \emph{rotation} if there exists a stable matching $M$ for $P$ such that
  for each $i\in \{0,1,\ldots,r-1\}$ we have $(u_i,w_i)\in U\times W$, $M(u_i)=w_i$, and $\sucw_M(u_i)=w_{i+1}$ (index $i+1$ taken modulo $r$).
  We say that rotation~$\rho$ is \emph{exposed} in~$M$.

  We use the notation~$M/\rho$ to refer to the matching resulting from $M$ by replacing each pair~$\{u_i,w_i\}$ with $\{u_i,w_{i+1}\}$.
  Formally,
  \begin{align*}
    M/\rho = M \setminus \{\{u_i,w_i\} \mid 0\le i \le r-1\} \cup \{\{u_i,w_{i+1}\} \mid 0\le i \le r-1\}.
  \end{align*}
%  for which $(u,w)\notin \rho$ it holds\emph{not} in $\rho$ belongs to $M/\rho$, and each agent~$u'_i$ in $\rho$ ($0\le i \le r-1$) has the partner~$w'_{i+1}=\sucw_M(u'_{i})$.
  The transformation of $M$ to $M/\rho$ is called the \emph{elimination of $\rho$ from $M$}.
\end{definition}
  We illustrate the concept of the successor below:
  \begin{align*}
    u\colon \ldots M(u) \ldots \sucw_M(u) \ldots,
    \quad \qquad \sucw_M(u)\colon \ldots u \ldots M(\sucw_M(u)) \ldots
  \end{align*}

  \noindent
  Eliminating a rotation from a stable matching results in another stable matching~\cite[Lemma 2.5.2]{GusfieldIrving1989}.

\begin{definition}[Predecessors of rotations, the rotation poset, and the rotation digraph]\label{def:digraph}
  Let $\pi$ and $\rho$ be two rotations for a \textsc{Stable Marriage} instance~$P$.
  We say that $\pi$ is a \emph{predecessor} of $\rho$,
  written as \emph{$\pi \pred^P \rho$},
  if no stable matching in which $\rho$ is exposed can be obtained from the $U$-optimal stable matching by a sequence of eliminations of rotations without eliminating~$\pi$ first.
  The reflexive closure of the relation~$\pred^P$, denoted as $\predr^P$, defines a partial order on the set of all rotations and is called the \emph{rotation poset} for~$P$.
  We abbreviate the name of a subset of the poset that is closed under predecessors as a \emph{closed subset}.

  An alternative representation of the rotation poset~$\predr(P)$ is through an acyclic directed graph, called \emph{rotation digraph of $P$} and written as \emph{$G(P)$}, whose vertex set is the set of rotations of $P$, and there is a direct arc from rotation $\pi$ to rotation $\rho$ if and only if~$\pi$ precedes~$\rho$ and there is no other rotation~$\sigma$ such that $\pi \pred^P \sigma \pred^P \rho$.
\end{definition}

Finally, let us describe a central result from the literature that relates rotations and stable matchings.

\begin{proposition}[{\cite[Theorem~2.5.7, Lemma~3.3.2]{GusfieldIrving1989}}]\label{prop:rotations+stable-matchings}
  Let $R$ denote the set of all rotations of a preference profile~$P$,
  and let $G(P)$ denote the rotation digraph of $P$. The following holds.
  \begin{enumerate}
    \item\label{rot:closedsubet-stable} A matching~$M$ is a stable matching of $P$ if and only if there is a closed subset of rotations~$R'\subseteq R$ with respect to the precedence relation $\pred^P$ such that $M$ can be generated by taking the $U$\nobreakdash-optimal stable matching and by eliminating the rotations in $R'$ in an order consistent with~$\pred^P$.
    \item\label{rot:runtime} The rotation set~$R$ and the rotation digraph~$G(P)$ can be computed in $O(n^2)$~time.
  \end{enumerate}
\end{proposition}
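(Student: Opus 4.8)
The plan is to prove the two parts separately, relying on the distributive-lattice structure of the set of stable matchings and treating rotations as the elementary descending moves of this lattice. Throughout, let $M_0$ be the $U$-optimal stable matching (computable by Gale--Shapley), work with the $U$-dominance order in which $M_0$ is the maximum, and use the fact already available in the excerpt that eliminating an exposed rotation from a stable matching again yields a stable matching.

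For the correspondence in Part~\ref{rot:closedsubet-stable}, I would first dispatch the easier backward direction. Given a closed subset $R' \subseteq R$, I process its rotations in an arbitrary linear extension of $\pred^P$ restricted to $R'$. At each step the rotation $\rho$ to be eliminated has all predecessors already eliminated (by closedness together with the chosen order), so $\rho$ is exposed in the current matching, and its elimination preserves stability; hence the final matching~$M$ is stable. To make $M$ well defined I also need \emph{confluence}: incomparable rotations commute, since eliminating one does not disturb the exposure of the other, so any two linear extensions are connected by adjacent transpositions of commuting eliminations and produce the same~$M$. For the forward direction, given a stable matching~$M$ I would start at $M_0$ and repeatedly eliminate an exposed rotation whose elimination keeps the current matching weakly above~$M$ in the dominance order, never overshooting. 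The hard part will be the supporting lemma that, as long as the current matching strictly dominates~$M$, such an exposed rotation exists; this is precisely the statement that rotations are the join-irreducible elements of the lattice. Granting it, the process terminates exactly at~$M$ and the set $R'$ of eliminated rotations is closed, and recovering~$M$ from $R'$ by the confluent elimination shows the two constructions are mutually inverse.

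For the running time in Part~\ref{rot:runtime}, I would compute $M_0$ in $O(n^2)$ time and then run the elimination procedure while maintaining, for each agent~$u \in U$, a pointer into its preference list. Computing a successor $\sucw_M(u)$ and eliminating rotations only ever advance these pointers, and each pointer advances at most $n$ positions over the whole execution, giving an amortized $O(n^2)$ bound on all successor computations. Each rotation is read off by following successor pointers around a cycle; since every pair~$(u,w)$ lies in at most one rotation, the total size of~$R$ is $O(n^2)$, and the arcs of~$G(P)$ can be recorded on the fly by noting, for each newly created pair, which rotation produced it, so that all bookkeeping stays within $O(n^2)$.

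The main obstacle is the forward direction of Part~\ref{rot:closedsubet-stable} together with confluence: one must show that every stable matching is reachable from~$M_0$, that the set of eliminated rotations is independent of the elimination order, and that it is closed. This is the Birkhoff-type representation of the distributive lattice of stable matchings, identifying stable matchings with the down-sets of the rotation poset whose join-irreducible elements are the rotations. Establishing the commutation of incomparable rotations and the existence of a ``toward-$M$'' exposed rotation at each step is the technical heart of the argument, and it is exactly the content attributed to the cited results of \citeGIauthors.
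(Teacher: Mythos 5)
The paper does not prove this proposition at all---it is imported verbatim from \citeGIauthors{} (Theorem~2.5.7 and Lemma~3.3.2 of their book) and used as a black box, so there is no in-paper argument to compare against. Your outline does faithfully reconstruct the structure of the proof in that reference: the backward direction via a linear extension of the closed set, the forward direction via the distributive-lattice (Birkhoff-type) correspondence between stable matchings and down-sets of the rotation poset, and the $O(n^2)$ bound via monotone preference-list pointers together with the fact that each pair $(u,w)$ belongs to at most one rotation.

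Judged as a self-contained proof, however, the proposal has a genuine gap, and you name it yourself: both the confluence of eliminations of incomparable rotations and the existence, at every stable matching strictly $U$-dominating $M$, of an exposed rotation whose elimination stays weakly above $M$, are asserted rather than proved. These two lemmas \emph{are} the theorem---without them neither the well-definedness of the matching generated by $R'$ nor the surjectivity of the correspondence is established---so what you have is a correct roadmap with the load-bearing steps deferred to the very citation the proposition already carries. One further subtlety in Part~2: the on-the-fly rule you describe (record, for each newly created pair, which rotation produced it) is the construction of Gusfield and Irving's Lemma~3.3.2, but it yields a sparse digraph whose \emph{transitive closure} equals the precedence relation $\pred^P$, not the Hasse diagram that the paper's Definition of the rotation digraph literally prescribes (arc from $\pi$ to $\rho$ iff nothing lies strictly between). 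Computing the actual transitive reduction in $O(n^2)$ is not claimed in the cited lemma, so if you intend your proof to match the paper's definition exactly, you would need to either weaken the definition to ``a digraph whose transitive closure is $\pred^P$'' (which is all the paper ever uses) or supply an additional argument.
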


\todo{Dušan: Say that $R'$ is associated with $M$.}
\todohua{Hua: Where?}

\subsection{From \ISM{} to Finding Maximum-Weight Closed Subset of Rotations}
Given an instance~$I=(\mathcal{P}_1,\mathcal{P}_2,M_1,k)$ of \ISM,
let $M_0$ be the $U$-optimal stable matching of $\mathcal{P}_2$,
let $R_2$ denote the set of rotations for $\mathcal{P}_2$, and let $G_2$ be the rotation digraph for $\mathcal{P}_2$.
%We define the weight of the $U$-optimal stable matching~$M_0$ as $w(M_0)=|M_0\cap M_1|$.
%We construct a flow network~$\mathcal{G}_2$ by augmenting $G_2$ with two designated vertices~$s$ and $t$ with capacitated arcs as follows.
Towards finding a stable matching which shall be as close to the input matching~$M_1$ as possible, we assign a weight to each rotation from $R_2$ which shall indicate the benefit of eliminating this rotation.

For each rotation~$\rho\in R_2$ with $\rho=((x_0, y_0), \cdots, (x_{r-1}, y_{r-1}))$ let
\[
  w(\rho)\coloneqq |\{(x_i, y_{i+1}) \mid \{x_i, y_{i+1}\} \in M_1, 0\le i \le r-1\}\}| - |\{(x_i, y_i) \mid \{\{x_i, y_i\} \in M_1, 0\le i \le r-1\}| .
\]
That is, we count the number of pairs in $M_1$ the elimination of $\rho$ introduces minus the number of pairs in $M_1$ we loose when eliminating $\rho$.
By the above definition, we can derive the following.
\begin{lemma}\label{app:lem:SM-rotation-weight}
  If $\rho\in R_2$ is a rotation exposed in a stable matching~$M$ for $\mathcal{P}_2$, then
  \[
  |M_1\cap (M/\rho)| = |M_1\cap M| + w(\rho) \,.
  \]
\end{lemma}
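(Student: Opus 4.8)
The plan is to read off the change in the intersection with $M_1$ directly from the elimination operation in \cref{def:rotations} and match it to the definition of $w(\rho)$. Writing $\rho=((x_0,y_0),\dots,(x_{r-1},y_{r-1}))$, the crucial structural fact is that eliminating $\rho$ touches only the $2r$ agents occurring in $\rho$. Concretely, from the formula $M/\rho = M \setminus \{\{x_i,y_i\}\} \cup \{\{x_i,y_{i+1}\}\}$ (indices modulo $r$) in \cref{def:rotations}, together with the hypothesis that $\rho$ is exposed in $M$ (so $M(x_i)=y_i$ for all $i$), I would first establish the two explicit difference sets
\[
  (M/\rho)\setminus M = \{\{x_i,y_{i+1}\} \mid 0\le i \le r-1\}, \qquad M\setminus (M/\rho) = \{\{x_i,y_i\} \mid 0\le i \le r-1\}.
\]

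Next I would apply the elementary set identity that holds for any three sets $A,B,C$, namely $|A\cap B| - |A\cap C| = |A\cap (B\setminus C)| - |A\cap (C\setminus B)|$ (immediate from indicator functions), with $A=M_1$, $B=M/\rho$, and $C=M$. This gives
\[
  |M_1 \cap (M/\rho)| - |M_1 \cap M| = |M_1 \cap ((M/\rho)\setminus M)| - |M_1 \cap (M \setminus (M/\rho))|.
\]
Substituting the two difference sets and using that the $x_i$ are pairwise distinct (so each pair is counted at most once), the two terms on the right become $|\{i : \{x_i,y_{i+1}\}\in M_1\}|$ and $|\{i : \{x_i,y_i\}\in M_1\}|$ respectively, whose difference is precisely $w(\rho)$ by the definition preceding the lemma. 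Rearranging yields the claim.

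I do not anticipate a genuine obstacle; the proof is essentially bookkeeping on the symmetric difference $M \Delta (M/\rho)$. The only points needing care are justifying the exact form of the two difference sets: that $\rho$ being exposed in $M$ forces $\{x_i,y_i\}\in M$ for every $i$ (so the removed set is exactly as claimed), and that the newly created pairs $\{x_i,y_{i+1}\}$ are genuinely absent from $M$ (so $(M/\rho)\setminus M$ is not a proper subset of the listed set). Both follow at once from $y_{i+1} = \sucw_M(x_i) \ne y_i = M(x_i)$, guaranteed by \cref{def:rotations}, which also ensures the two difference sets are disjoint.
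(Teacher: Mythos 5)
Your proof is correct and takes essentially the same route as the paper's: both identify the pairs gained ($\{x_i,y_{i+1}\}$) and lost ($\{x_i,y_i\}$) by the elimination, note that the gained pairs lie outside $M$ and the lost ones inside it, and match the resulting difference in $|M_1\cap\cdot|$ to the definition of $w(\rho)$. The paper phrases this as the decomposition $M/\rho=\bigl((M/\rho)\cap M\bigr)\uplus\bigl((M/\rho)\setminus M\bigr)$ rather than via your symmetric-difference identity, but the bookkeeping is identical.
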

\begin{proof}
  Let $\rho = ((x_0, y_0), \cdots, (x_{r-1}, y_{r-1}))$ be a rotation exposed in the stable matching~$M$.
  In the following, all subscripts~$i+1$ are taken modulo $r$.
  By the definition of $M/\rho$ we have that $M/\rho =\{\{x,y\} \in M \mid
  (x,y) \notin \rho\} \uplus \{\{x_i, y_{i+1}\} \mid 0 \le i \le r-1\}$.
  Thus, we prove the statement by showing the following; note that no $\{x_i,y_i\}$ belongs to $M/\rho$, $0\le i \le r-1$.
  \begin{align*}
    |M_1 \cap (M/\rho)|
    = &  |M_1 \cap \big((M/\rho) \cap M\big)| + |M_1 \cap \big( (M/\rho) \setminus M \big)|& \\
     = &  |\{\{x,y\} \in M_1 \mid \{x,y\} \in M \wedge (x,y) \notin \rho\}| +\\
      & |\{\{x,y\} \in M_1 \mid (x,y)=(x_i, y_{i+1}) \text{ for some }0\le i \le r-1\}| & \\
    = &  |\{\{x,y\} \in M_1 \cap M\}| -
      |\{\{x,y\} \in M_1 \cap M \mid (x,y) \in \rho \} |  + & \\
       &  |\{\{x,y\} \in M_1 \mid (x,y)=(x_i, y_{i+1}) \text{ for some }0\le i \le r-1\}| &\\
    = & |M_1 \cap M| + w(\rho).
  \end{align*}
  Note that the second to last equation holds because $(x,y) \in \rho$ implies that $\{x,y\}\in M$.
\end{proof}

By applying \cref{app:lem:SM-rotation-weight} repeatedly, we obtain the following.
\begin{corollary}\label{app:cor:SM-rotation-weights}
  If $C\subseteq R_2$ is the (unique) closed subset of rotations associated with stable matching~$M$ of $\mathcal{P}_2$,
  then $|M_1\cap M| = |M_1\cap M_0| + \sum_{\rho\in C}w(\rho)$.
\end{corollary}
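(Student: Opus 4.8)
The plan is to apply \cref{app:lem:SM-rotation-weight} iteratively along a valid elimination order for the closed subset~$C$, so that the per-rotation change in the intersection with~$M_1$ telescopes into the claimed sum. First I would invoke the first part of \cref{prop:rotations+stable-matchings} to fix a sequence $\rho_1, \rho_2, \ldots, \rho_{|C|}$ listing the rotations of~$C$ in an order consistent with the precedence relation $\pred^{\mathcal{P}_2}$, and set $M^{(0)} \coloneqq M_0$ and $M^{(t)} \coloneqq M^{(t-1)}/\rho_t$ for $1 \le t \le |C|$. Since every prefix $\{\rho_1, \ldots, \rho_t\}$ is again a closed subset, that proposition guarantees that each $M^{(t)}$ is a stable matching of~$\mathcal{P}_2$ (namely the one associated with this prefix) and that $M^{(|C|)} = M$; in particular $\rho_t$ is exposed in $M^{(t-1)}$, which is exactly what makes each elimination well-defined.

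Next I would argue by induction on~$t$ that $|M_1 \cap M^{(t)}| = |M_1 \cap M_0| + \sum_{s=1}^{t} w(\rho_s)$. The base case $t = 0$ is immediate, as the sum is empty and $M^{(0)} = M_0$. For the inductive step, since $\rho_t$ is exposed in $M^{(t-1)}$, \cref{app:lem:SM-rotation-weight} yields $|M_1 \cap M^{(t)}| = |M_1 \cap M^{(t-1)}| + w(\rho_t)$; combining this with the induction hypothesis gives the statement for~$t$. Taking $t = |C|$ and noting that $\sum_{s=1}^{|C|} w(\rho_s) = \sum_{\rho \in C} w(\rho)$, because the value of the sum does not depend on the order, completes the proof.

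The one point that needs care---though it is not really an obstacle---is that the weight $w(\rho)$ depends only on the pairs of~$\rho$ and on~$M_1$, and \emph{not} on the particular stable matching in which $\rho$ happens to be exposed. This is precisely how~$w$ was defined, and it is what lets me reuse the same value $w(\rho_t)$ regardless of which intermediate matching $M^{(t-1)}$ exposes it. The only genuine prerequisite is the existence of a consistent elimination sequence in which each successive rotation is exposed, and this is guaranteed by \cref{prop:rotations+stable-matchings}; given that, the telescoping argument is routine.
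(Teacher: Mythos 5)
Your proposal is correct and is exactly the argument the paper intends: the paper proves this corollary with the single line ``By applying \cref{app:lem:SM-rotation-weight} repeatedly,'' and your write-up simply makes explicit the elimination order from \cref{prop:rotations+stable-matchings}, the telescoping induction, and the (correct) observation that $w(\rho)$ is independent of the intermediate matching. No differences in approach, and no gaps.
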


Since a stable matching~$M_2$ for $\mathcal{P}_2$ with minimum symmetric difference~$\dist(M_1, M_2)$ has the maximum intersection~$M_1 \cap M_2$,
we obtain the following.
\begin{lemma}\label{app:lem:sd-inter}
  Let $M_2$ be a stable matching for $\mathcal{P}_2$ and let $C$ be the associated closed subset of rotations.
  Then, $\dist(M_1, M_2)\le k$ if and only if $\sum_{\rho \in C}w(\rho) \ge \frac{\dist(M_1,M_0) - k }{2}$.
\end{lemma}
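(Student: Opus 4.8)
The plan is to combine the rotation-weight bookkeeping of \cref{app:cor:SM-rotation-weights} with the fact that all stable matchings of $\mathcal{P}_2$ cover the same agents, thereby turning the claimed weight inequality into a direct restatement of $\dist(M_1,M_2)\le k$.

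First I would record the size bookkeeping. Since $\mathcal{P}_2$ is tie-free, \cref{prop:IncompleteNoTies} partitions the agents into those matched by \emph{every} stable matching of $\mathcal{P}_2$ and those matched by \emph{none}; in particular the $U$-optimal matching $M_0$ and the matching $M_2$ cover the same agent set, so $|M_0| = |M_2|$. Writing the symmetric difference of any matching $M$ with $M_1$ as
\[
  \dist(M_1, M) = |M_1 \Delta M| = |M_1| + |M| - 2\,|M_1 \cap M|,
\]
I would apply this to both $M = M_0$ and $M = M_2$ and subtract. The two $|M_1|$ terms cancel, and using $|M_0| = |M_2|$ the remaining size terms cancel as well, leaving
\[
  \dist(M_1, M_2) - \dist(M_1, M_0) = -2\big(|M_1 \cap M_2| - |M_1 \cap M_0|\big).
\]

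Next I would feed in the rotation weights. By \cref{app:cor:SM-rotation-weights}, the closed subset $C$ associated with $M_2$ satisfies $|M_1 \cap M_2| = |M_1 \cap M_0| + \sum_{\rho \in C} w(\rho)$, hence $|M_1 \cap M_2| - |M_1 \cap M_0| = \sum_{\rho \in C} w(\rho)$. Substituting yields the clean identity
\[
  \dist(M_1, M_2) = \dist(M_1, M_0) - 2\sum_{\rho \in C} w(\rho).
\]
The lemma then follows by pure rearrangement: $\dist(M_1, M_2) \le k$ is equivalent to $\dist(M_1, M_0) - 2\sum_{\rho \in C} w(\rho) \le k$, i.e.\ to $\sum_{\rho \in C} w(\rho) \ge \tfrac{\dist(M_1, M_0) - k}{2}$.

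I expect the only genuine subtlety—and hence the one point to get right—to be the size equality $|M_0| = |M_2|$; without it the $|M_0|$ and $|M_2|$ terms would not cancel and the bound would acquire a spurious additive constant. This is precisely where the no-ties hypothesis enters, via \cref{prop:IncompleteNoTies}. Everything else is the routine $|M_1| + |M| - 2|M_1 \cap M|$ expansion together with the already-established additive behaviour of the rotation weights, so no new combinatorial argument is required.
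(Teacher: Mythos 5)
Your proposal is correct and follows essentially the same route as the paper's proof: both expand $\dist(M_1,M)=|M_1|+|M|-2|M_1\cap M|$, invoke \cref{prop:IncompleteNoTies} to get $|M_0|=|M_2|$, and apply \cref{app:cor:SM-rotation-weights} to arrive at the identity $\dist(M_1,M_2)=\dist(M_1,M_0)-2\sum_{\rho\in C}w(\rho)$, from which the equivalence is immediate. You also correctly identify the size equality as the one place where the no-ties hypothesis is genuinely needed.
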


\begin{proof}
  By the definition of symmetric difference, we derive the following.
  \begin{align}\label{eq:sd-inter}
    \dist(M_1, M_2) = |M_1| + |M_2| - 2|M_1 \cap M_2|.
  \end{align}
  Since for preferences without ties, all stable matchings match the same set of agents (\cref{prop:IncompleteNoTies}), we have that $|M_0|=|M_2|$.
  Together with \cref{app:cor:SM-rotation-weights}, \eqref{eq:sd-inter} is equivalent to
   \begin{align*}\label{eq:sd-inter}
     \dist(M_1, M_2) &= |M_1| + |M_0| - 2 \left( |M_1 \cap M_0| + \sum_{\rho\in C}w(\rho) \right)\\
     & = \dist(M_1, M_0) - 2 \sum_{\rho\in C}w(\rho).
   \end{align*}
   Our statement follows immediately.
\end{proof}

By the above, our problem reduces to finding a maximum-weight closed subset of rotations.
In the following, we show that the sum of the weights of the rotations is at most $n$ and finding such a subset of rotations can thus be done efficiently.

\begin{lemma}\label{app:lem:SM-max-w-rotations}
  $\sum_{\rho \in R_2}w(\rho)\le |M_1| $ and finding a closed subset of rotations with maximum weight can be done in $O(n^3)$~time.
\end{lemma}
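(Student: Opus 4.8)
The plan is to prove the two assertions separately, reusing the per-rotation bookkeeping already set up in \cref{app:lem:SM-rotation-weight}. For the weight bound I would first split each rotation weight into its positive and negative parts, writing $w(\rho)=w^{+}(\rho)-w^{-}(\rho)$, where $w^{+}(\rho)=|\{(x_i,y_{i+1}) : \{x_i,y_{i+1}\}\in M_1\}|$ counts the $M_1$-edges \emph{gained} and $w^{-}(\rho)=|\{(x_i,y_i) : \{x_i,y_i\}\in M_1\}|$ counts those \emph{lost} when eliminating $\rho=((x_0,y_0),\dots,(x_{r-1},y_{r-1}))$. Since $w^{-}(\rho)\ge 0$, it suffices to show $\sum_{\rho\in R_2}w^{+}(\rho)\le|M_1|$.

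The crux is a disjointness observation drawn from the rotation structure of \citeGIauthors~\cite{GusfieldIrving1989}: along any sequence of rotation eliminations starting from the $U$-optimal matching $M_0$, the partner assigned to a fixed agent $u\in U$ strictly worsens in $\succ_u$ at each step, so $u$ is assigned any given partner $w$ at most once over the whole process. Consequently each ordered pair $(u,w)$ with $u\in U$ occurs as an ``incoming'' pair $(x_i,y_{i+1})$ in at most one rotation of $R_2$; and the $x_i$ within a single rotation are distinct, so no pair is double counted inside a rotation either. Hence the collection of all incoming pairs $(x_i,y_{i+1})$, taken over every rotation $\rho\in R_2$ and every index $i$, consists of pairwise distinct ordered pairs. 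Because every edge of the bipartite matching $M_1$ has a unique orientation $(u,w)$ with $u\in U$, the quantity $\sum_{\rho}w^{+}(\rho)$ counts each edge of $M_1$ at most once, giving $\sum_{\rho}w^{+}(\rho)\le|M_1|$ and therefore $\sum_{\rho\in R_2}w(\rho)\le|M_1|$.

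For the algorithmic claim I would reduce to finding a \emph{maximum-weight closed subset} of the rotation poset, which by \cref{app:cor:SM-rotation-weights} (together with \cref{app:lem:sd-inter}) is exactly what maximizes $|M_1\cap M_2|$ and hence minimizes $\dist(M_1,M_2)$. This is the classical maximum-weight closure problem on the rotation digraph $G_2$, solved by the standard minimum-cut construction: add a source $s$ with an arc of capacity $w(\rho)$ to each rotation of positive weight, a sink $t$ with an arc of capacity $-w(\rho)$ from each rotation of negative weight, keep the precedence arcs of $G_2$ at infinite capacity, and read off the optimal closed subset from the source side of a minimum $s$--$t$ cut. This is precisely the network-flow approach used by \citeGIauthors~\cite[Chapter 3.6.2]{GusfieldIrving1989}.

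It remains to bound the running time. By \cref{prop:rotations+stable-matchings}, $G_2$ has $O(n^2)$ vertices, $O(n^2)$ arcs, and can be built in $O(n^2)$ time. The value of a maximum $s$--$t$ flow is at most the total capacity of the arcs leaving $s$, namely $\sum_{\rho:\,w(\rho)>0}w(\rho)\le\sum_{\rho}w^{+}(\rho)\le|M_1|\le n$ by the first part. Computing the flow by successive augmenting paths then needs at most $n$ augmentations, each found in $O(n^2)$ time, for a total of $O(n^3)$; equivalently, plugging the weight bound $w\le n$ into the $O(n^2\cdot w)$ bound of \citeGIauthors{} yields $O(n^3)$. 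The only step demanding care is the disjointness argument of the second paragraph, i.e.\ verifying from the Gusfield--Irving structure theory that each agent in $U$ acquires each partner exactly once, so that the incoming rotation pairs are genuinely distinct; everything else is bookkeeping and a black-box maximum-flow call.
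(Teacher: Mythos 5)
Your proof is correct and follows essentially the same route as the paper's: bound $w(\rho)$ by the gained pairs $w^{+}(\rho)$, observe that these pairs are distinct across all rotations so their total is at most $|M_1|$, and then run the standard maximum-weight-closure/min-cut reduction on the rotation digraph, where the weight bound caps the flow value at $n$ and yields $O(n^3)$. The only difference is cosmetic: the paper invokes Lemma~3.2.1 of \citeGIauthors{} as a black box for the disjointness of the gained pairs, whereas you re-derive it from the monotone worsening of each $U$-agent's partner along an elimination sequence (and you are in fact slightly more careful than the paper in bounding the cut by the sum of \emph{positive} weights rather than the signed sum).
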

\begin{proof}
  Slightly abusing the intersection notation,
  for each rotation~$\rho=((x_0,y_0), \cdots, (x_{r-1}, y_{r-1})) \in R_2$ we define
  $\rho \cap M_1 \coloneqq \{(x_i, y_{i+1}) \mid \{x_i, y_{i+1} \in M_1 \}\}$  ($i+1$ is taken modulo $r$).

  Next, observe that by one of the statement in \cite[Lemma 3.2.1]{GusfieldIrving1989} implies that for each pair~$(x, y)$ there exists at most one rotation~$\rho$ such that $(x,y) \in \rho \cap M_1$.
  \todo{Dušan: Shall we write the Lemma 3.2.1 in preliminaries here for readers convenience?}
  \todohua{Hua: Lemma 3.2.1 contains more statements and uses some notions that I don't want to introduce.}
%  It follows that if a pair~$(x,y)\in \rho \cap M_1$, then no other rotation~$\rho'\in R_2\setminus \{\rho\}$ exists with $(x,y)\in \rho'\cap M_1$.
  Moreover, by definition, we have that $w(\rho) \le |\rho \cap M_1|$.
  Summarizing, we have that
  \begin{align*}
    \sum_{\rho\in R_2} w(\rho) \le \sum_{\rho \in R_2}|\rho\cap M_1| = |\bigcup_{\rho\in R_2} (\rho \cap M_1)| \le |M_1|.
  \end{align*}
  %Since the sum of the weights is at most $|M_1|$, by the approach described in~\cite[Chapter~3.6.1]{GusfieldIrving1989} which reduces finding maximum-weight closed subset of rotations to finding minimum $s$-$t$ cut in an appropriated flow network where .

  \todo{Dušan: I think the rest of the proof is not needed---Theorem 3.6.2 is already mentioned in the section introduction, however, we maybe want to add its statement for reader's convenience}
  \todohua{Again, Theorem 3.6.2 uses lots of notions and technical statements that are not relevant to us, and I tried to avoid this.\newline I think the rest (maybe thinned a little) is relevant for the sake of completeness of the proof.}
  \citeGIauthors~\cite[Theorem 3.6.2]{GusfieldIrving1989} showed that finding a maximum-weight closed subset of rotations can be reduced to finding a minimum $s$-$t$ cut in a specific flow network which features the precedence relation of the rotations,
  where the numbers of vertices and arcs are in $O(n^2)$ and the minimum cost of an $s$-$t$ cut is bounded by the sum of the weights of the rotations.
  The latter problem can be solved in $O(|E|\cdot w)$ time (by using Ford-Fulkerson's algorithm), where $|E|$ denotes the number of arcs in the network and $w$ is the cost of the minimum $s$-$t$ cut.
  Since $|E|\in O(n^2)$ and since $w\le \sum_{\rho\in R_2}w(\rho)$, we can find a maximum-weight closed subset of rotations in $O(n^2 \cdot |M_1|) = O(n^3)$~time.
\end{proof}

% There are three different ways of eliminating remaining rotations to get a complete and closed subset of rotations: eliminating~$\rho_2$ and~$\rho_4$, or eliminating~$\rho_2$ and~$\rho_5$, or eliminating~$\rho_4$ and~$\rho_6$, which result in three different matchings.

% As an example, we show how to compute the distance between~$M_1$ and the resulting matching~$M_{\{\rho_1,\rho_2,\rho_3,\rho_5\}}=\{(1,6),(2,3),(7,4),(8,5)\}$ after eliminating $\rho_1$, $\rho_2$, $\rho_3$ and~$\rho_5$.

%%%%%%%%%%%%%%%%%

% \clearpage
% \section{Appendix - Missing Proofs}

% In our (hardness) proofs we need to place agents into a preference list of agent $i$ in an arbitrary but fixed order.
% We use $[Z]$ to denote this arbitrary order for agents in $Z \subseteq V$.
% We use $\bot(M)$ to denote the set of unmatched agents in a matching $M$, that is, $\bot(M) = \{ u \mid M(u) = \bot\}$.

% \appendixProofs

\end{document}